\newtheorem{theorem}{Theorem}[section]
\newtheorem{lemma}[theorem]{Lemma}
\theoremstyle{remark}
\newcommand{\al}{\alpha}
\newcommand{\be}{\begin{equation}}
\newcommand{\ee}{\end{equation}}
\newcommand{\bea}{\begin{eqnarray}}
\newcommand{\eea}{\end{eqnarray}}
\numberwithin{equation}{section}
\begin{document}

\title{Linear Statistics of Random Matrix Ensembles at the Spectrum Edge Associated with the Airy Kernel}
\author{Chao Min\thanks{School of Mathematical Sciences, Huaqiao University, Quanzhou 362021, China; e-mail: chaomin@hqu.edu.cn}\: and Yang Chen\thanks{Correspondence to: Yang Chen, Department of Mathematics, Faculty of Science and Technology, University of Macau, Macau, China; e-mail: yangbrookchen@yahoo.co.uk}}


\date{\today}
\maketitle
\begin{abstract}
In this paper, we study the large $N$ behavior of the moment-generating function (MGF) of the linear statistics of $N\times N$ Hermitian matrices in the Gaussian unitary, symplectic, orthogonal ensembles (GUE, GSE, GOE) and Laguerre unitary, symplectic, orthogonal ensembles (LUE, LSE, LOE) at the edge of the spectrum. From the finite $N$ Fredholm determinant expression of the MGF of the linear statistics \cite{Min201601}, we find the large $N$ asymptotics of the MGF associated with the Airy kernel in these Gaussian and Laguerre ensembles. Then we obtain the mean and variance of the suitably scaled linear statistics. We show that there is an equivalence between the large $N$ behavior of the MGF of the scaled linear statistics in Gaussian and Laguerre ensembles, which leads to the statistical equivalence between the mean and variance of suitably scaled linear statistics in Gaussian and Laguerre ensembles. In the end, we use the Coulomb fluid method to obtain the mean and variance of another type of linear statistics in GUE, which reproduces the result of Basor and Widom \cite{Basor1999}.
\end{abstract}

$\mathbf{Keywords}$: Random matrix ensembles; Linear statistics; Moment-generating function;

Airy kernel; Mean and variance; Asymptotics.

$\mathbf{Mathematics\:\: Subject\:\: Classification\:\: 2010}$: 15B52, 47A53, 33C45

\section{Introduction and Preliminaries}
In random matrix theory, the joint probability density function for the eigenvalues $\{x_j\}_{j=1}^{N}$ of $N\times N$ Hermitian matrices from an
unitary ensemble ($\beta=2$), symplectic ensemble ($\beta=4$) or orthogonal ensemble ($\beta=1$) is given by \cite{Mehta}
\be\label{jpdf}
P^{(\beta)}(x_{1},x_{2},\ldots,x_{N}):=C_{N}^{(\beta)}\prod_{1\leq j<k\leq N}\left|x_{j}-x_{k}\right|^{\beta}\prod_{j=1}^{N}w(x_{j}).
\ee
Here $w(x)$ is a weight function or a probability density supported on $[a,b]$, such that all the moments of $w$, namely, $\int_{a}^{b}x^j w(x)dx,\;\;j=0,1,2,\ldots$ exist. $C_{N}^{(\beta)}$ is the normalization constant so that
$$
\int_{[a,b]^N}P^{(\beta)}(x_{1},x_{2},\ldots,x_{N})\prod_{j=1}^{N}dx_{j}=1.
$$

Linear statistics is defined by the sum of a one variable function evaluated at the eigenvalues of random variables \cite{Chen1994}. The moment-generating function (MGF) of the linear statistics $\sum_{j=1}^{N}F(x_{j})$ is given by the following mathematical expectation with respect to the joint probability density function (\ref{jpdf}),
\be\label{mgf}
\mathbb{E}\left({\rm e}^{-\lambda\:\sum_{j=1}^{N}F(x_j)}\right)
:=\frac{\int_{[a,b]^{N}}\prod_{1\leq j<k\leq N}\left|x_{j}-x_{k}\right|^{\beta}\prod_{j=1}^{N}w(x_{j})\:{\rm e}^{-\lambda\:F(x_j)}dx_{j}}{\int_{[a,b]^{N}}\prod_{1\leq j<k\leq N}\left|x_{j}-x_{k}\right|^{\beta}\prod_{j=1}^{N}w(x_{j})dx_{j}},
\ee
where $\lambda$ is a parameter. Linear statistics is of great interest in various applications in random matrix theory. For example, in the situation $\beta=2$, the Shannon capacity is characterized by the MGF of the linear statistics $\sum_{j=1}^{N}-\ln\left(1+\frac{x_{j}}{t}\right)$ in the single-user multiple-input multiple-output (MIMO) wireless communication systems \cite{Chen2012}; the MGF of the center of mass of the eigenvalues corresponds to the linear statistics $\sum_{j=1}^{N}x_{j}$ in unitary ensembles \cite{Zhan}; the MGF of the Wigner delay time is associated with the linear statistics $\sum_{j=1}^{N}\frac{1}{x_{j}}$ \cite{Texier}. The $\beta=1$ case plays an important role in problems in multivariate statistics; see Johnstone \cite{Johnstone}. Although the $\beta=4$ case has less statistical applications, it is of interest in problems related to quantum transport in disordered systems \cite{Beenakker1,Beenakker2}.

The framework for the $\beta=1$ and $\beta=4$ ensembles was laid down by Widom \cite{Widom} and Tracy and Widom \cite{Tracy1998}, based on de Bruijn's formulas \cite{deBruijn}. In such a formalism the expectation of the generating function is cast as Fredholm determinant involving matrix kernel. If the background weight is normal density, amenable expression can be found, and large $N$ results could be obtained. In the situation where one travels to the bulk of the spectrum, the kernel in the Fredholm determinant is the sine kernel. At the edge of the spectrum, the eigenvalue density (or one point function) is $\sim \frac{\sqrt{2b(b-x)}}{\pi},\; b=\sqrt{2N}$, the reproducing kernel is expressed in terms of the Airy function and its derivative. This Airy kernel is significant in the Tracy-Widom distribution, which describes the largest eigenvalue distribution in Gaussian unitary ensemble \cite{Tracy1994A,TW}.

In this paper, we write (\ref{mgf}) in the following form
\be
G_{N}^{(\beta)}(f):=C_{N}^{(\beta)}\int_{[a,b]^{N}}\prod_{1\leq j<k\leq N}\left|x_{j}-x_{k}\right|^{\beta}
\prod_{j=1}^{N}w(x_{j})\left[1+f(x_{j})\right]dx_{j},\label{gnb}
\ee
where $f(x)={\rm e}^{-\lambda F(x)}-1$. We assume $f(x)$ lies in the Schwartz space \cite{Stein} and $1+f(x)\neq 0$ over $[a,b]$.
For the unitary ensembles, Tracy and Widom \cite{Tracy1998} obtained the following result for $G_{N}^{(2)}(f)$, which can be expressed as a Fredholm determinant.
\begin{lemma}\label{lemma1}
Let
$$
\varphi_{j}(x):=P_{j}(x)\sqrt{w(x)},\;\;j=0,1,2,\ldots,
$$
where $P_{j}(x),\;\;j=0,1,2,\ldots$ are the polynomials of degree $j$ orthonormal with respect to the weight $w(x)$,
$$
\int_{a}^{b}P_{j}(x)P_{k}(x)w(x)dx=\delta_{jk},\;j, k=0, 1, 2, \ldots.
$$
Then
$$
G_{N}^{(2)}(f)=\det\left(I+K_{N}^{(2)}f\right),
$$
where $K_{N}^{(2)}$ is an operator on $L^2[a,b]$ with kernel
$$
K_{N}^{(2)}(x,y):=\sum_{j=0}^{N-1}\varphi_{j}(x)\varphi_{j}(y),
$$
and $f$ denotes the operator, multiplication by $f$. That is, for a function $g\in L^2[a,b]$,
$$
\left(K_{N}^{(2)}f\right) g(x):=\int_{a}^{b}K_{N}^{(2)}(x,y)f(y)g(y)dy.
$$
\end{lemma}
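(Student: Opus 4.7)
The plan is to cast the integrand in \eqref{gnb} as a single determinant built from the kernel $K_N^{(2)}$, and then recognize the resulting $N$-dimensional integral as the Fredholm series for $\det(I+K_N^{(2)}f)$. First I would rewrite the Vandermonde product using the orthonormal polynomials: since $P_{j-1}(x)=\gamma_{j-1}x^{j-1}+\text{lower order}$, row/column operations give
\[
\prod_{1\le j<k\le N}(x_k-x_j)=\left(\prod_{j=0}^{N-1}\gamma_j^{-1}\right)\det\bigl[P_{k-1}(x_j)\bigr]_{j,k=1}^{N},
\]
so that after inserting $\sqrt{w(x_j)}$ into the $j$-th row one obtains
\[
\prod_{1\le j<k\le N}(x_k-x_j)^2\prod_{j=1}^{N}w(x_j)=\left(\prod_{j=0}^{N-1}\gamma_j^{-2}\right)\det\bigl[\varphi_{k-1}(x_j)\bigr]^2.
\]
The Cauchy--Binet identity applied to $B_{jk}=\varphi_{k-1}(x_j)$ turns the squared determinant into $\det\bigl[(BB^{T})_{jk}\bigr]=\det\bigl[K_N^{(2)}(x_j,x_k)\bigr]_{j,k=1}^{N}$, using exactly the definition of $K_N^{(2)}$ given in the statement.

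Next I would fix the constant. The standard identity $C_N^{(2)}\prod_{j=0}^{N-1}\gamma_j^{-2}=1/N!$ (which follows by evaluating $G_N^{(2)}(0)=1$ with the determinantal representation just derived and integrating term by term using orthonormality) yields the clean rewriting
\[
G_N^{(2)}(f)=\frac{1}{N!}\int_{[a,b]^{N}}\det\bigl[K_N^{(2)}(x_j,x_k)\bigr]_{j,k=1}^{N}\prod_{j=1}^{N}\bigl[1+f(x_j)\bigr]dx_j.
\]
I then expand the product $\prod_{j=1}^N(1+f(x_j))=\sum_{S\subset\{1,\dots,N\}}\prod_{j\in S}f(x_j)$. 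By symmetry of the determinant under permutations of the variables, the contribution of each $S$ with $|S|=n$ is the same, so the sum reduces to $\binom{N}{n}$ times the integral with $f(x_1)\cdots f(x_n)$.

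The crucial step, and the one I expect to require the most care, is integrating out the $N-n$ variables on which no $f$ is attached. Here I would invoke the reproducing property $\int_a^b K_N^{(2)}(x,y)K_N^{(2)}(y,z)\,dy=K_N^{(2)}(x,z)$, which comes straight from $\int\varphi_j\varphi_k\,dx=\delta_{jk}$, together with the standard ``integrating out one variable from a kernel determinant'' identity
\[
\int_a^b\det\bigl[K_N^{(2)}(x_j,x_k)\bigr]_{j,k=1}^{m}dx_m=(N-m+1)\det\bigl[K_N^{(2)}(x_j,x_k)\bigr]_{j,k=1}^{m-1},
\]
proved by Laplace-expanding along the last row and using the reproducing property. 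Iterating $N-n$ times gives a factor $(N-n)!$, and combining with the $\binom{N}{n}/N!$ prefactor produces $1/n!$, leaving
\[
G_N^{(2)}(f)=\sum_{n=0}^{N}\frac{1}{n!}\int_{[a,b]^{n}}\det\bigl[K_N^{(2)}(x_j,x_k)\bigr]_{j,k=1}^{n}\prod_{j=1}^{n}f(x_j)\,dx_j.
\]
This is precisely the Fredholm expansion of $\det(I+K_N^{(2)}f)$, so the identity of the lemma follows. The Schwartz assumption on $f$ (hence on $F$) guarantees absolute convergence and justifies interchanging sum and integral at each stage. I expect the only genuine bookkeeping hurdle to be confirming the constant $C_N^{(2)}\prod\gamma_j^{-2}=1/N!$ and the iterated integration identity; everything else is manipulation of determinants.
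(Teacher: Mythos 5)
Your argument is correct, and it is essentially the standard Tracy--Widom derivation: the paper itself states Lemma 1.1 without proof, citing \cite{Tracy1998}, and the route you take (Vandermonde $\to$ determinant of orthonormal functions, $\det(B)^2=\det(BB^{T})$, the reproducing-kernel integration identity to fix $C_N^{(2)}\prod\gamma_j^{-2}=1/N!$ and to integrate out the unmarked variables, then matching the terminating Fredholm series of the finite-rank operator $K_N^{(2)}f$) is exactly how that reference establishes it. No gaps.
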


Min and Chen \cite{Min201601} expressed $G_{N}^{(4)}(f)$ and $G_{N}^{(1)}(f)$ as Fredholm determinants based on the work of Dieng and Tracy \cite{Dieng} and Tracy and Widom \cite{Tracy1998}. For the $\beta=1$ case, we take $N$ to be even for simplicity. We state the results as the following two lemmas \cite{Min201601}.
\begin{lemma}\label{le2}
Define
$$
\psi_{j}(x):=\pi_{j}(x)\sqrt{w(x)},\;\;j=0,1,2,\ldots,
$$
where $\pi_{j}(x)$ is any polynomial of degree $j$, and
$$
M^{(4)}:=\left(\int_{a}^{b}\left(\psi_{j}(x)\psi_{k}'(x)-\psi_{j}'(x)\psi_{k}(x)\right)dx\right)_{j,k=0}^{2N-1}
$$
with its inverse denoted by
$$
\left(M^{(4)}\right)^{-1}=:(\mu_{jk})_{j,k=0}^{2N-1}.
$$
Then
$$
\left[G_{N}^{(4)}(f)\right]^{2}=\det\left(I+2K_{N}^{(4)}f-K_{N}^{(4)}\varepsilon f'\right),
$$
where $K_{N}^{(4)}$ and $\varepsilon$ are integral operators with kernel
$$
K_{N}^{(4)}(x,y):=-\sum_{j,k=0}^{2N-1}\mu_{jk}\psi_{j}(x)\psi_{k}'(y)
$$
and
$$
\varepsilon(x,y):=\frac{1}{2}\mathrm{sgn}(x-y),
$$
respectively.
\end{lemma}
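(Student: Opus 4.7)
The plan is to convert the integral defining $G_N^{(4)}(f)$ into a Pfaffian via de Bruijn's identity applied to a modified weight, and then recast the resulting finite-matrix determinant as the Fredholm determinant on $L^2[a,b]$ stated in the lemma.

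First, I would set $\Phi_l(x):=\psi_l(x)\sqrt{1+f(x)}$ (well-defined by the hypothesis $1+f\neq 0$), so that
$$\prod_{1\le j<k\le N}(x_j-x_k)^4\prod_{j=1}^{N}w(x_j)\bigl[1+f(x_j)\bigr]$$
is, after the standard column reordering, the square of $\det[\Phi_l(x_j)]$, which in turn equals the determinant $\det[\Phi_l(x_j),\Phi_l'(x_j)]_{l=0,\ldots,2N-1;\,j=1,\ldots,N}$ when organized as in the de Bruijn identity. Applying that identity writes the $N$-fold integral as $N!\,\mathrm{Pf}(\widetilde{M})$ with $\widetilde{M}_{lm}=\int_a^b(\Phi_l\Phi_m'-\Phi_l'\Phi_m)\,dx$. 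A direct expansion reveals that the $f'/(2\sqrt{1+f})$ cross terms cancel, leaving $\widetilde{M}_{lm}=M^{(4)}_{lm}+\Delta_{lm}$ with $\Delta_{lm}:=\int_a^b f\,(\psi_l\psi_m'-\psi_l'\psi_m)\,dx$. Using $\mathrm{Pf}(A)^2=\det(A)$ and dividing by the $f\equiv 0$ normalization gives
$$[G_N^{(4)}(f)]^2=\frac{\det(M^{(4)}+\Delta)}{\det M^{(4)}}=\det\bigl(I+(M^{(4)})^{-1}\Delta\bigr).$$

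Next, I would rewrite $\Delta_{lm}$ symmetrically using integration by parts (Schwartz decay of $f$ and the decay of $\psi_l$ kill the boundary terms), obtaining
$$\Delta_{lm}=2\int_a^b f\,\psi_l\,\psi_m'\,dx+\int_a^b f'\,\psi_l\,\psi_m\,dx.$$
Since $M^{(4)}$ is antisymmetric, $(\mu_{jk})$ is also antisymmetric, so setting $\eta_k(x):=\sum_l\mu_{kl}\psi_l(x)$ gives the alternative presentation $K_N^{(4)}(x,y)=\sum_k \eta_k(x)\,\psi_k'(y)$. The matrix entries then read
$$[(M^{(4)})^{-1}\Delta]_{jm}=\int_a^b\bigl[2\psi_m'(x)f(x)+\psi_m(x)f'(x)\bigr]\eta_j(x)\,dx.$$

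Finally, I would compute the kernel of the operator $2K_N^{(4)}f-K_N^{(4)}\varepsilon f'$ directly. For the second term, integrating by parts in $z$ and using $\partial_z\varepsilon(z,y)=\delta(z-y)$ gives $\int K_N^{(4)}(x,z)\varepsilon(z,y)\,dz=-\sum_k\eta_k(x)\psi_k(y)$, so the total kernel is
$$\bigl(2K_N^{(4)}f-K_N^{(4)}\varepsilon f'\bigr)(x,y)=\sum_k \eta_k(x)\bigl[2\psi_k'(y)f(y)+\psi_k(y)f'(y)\bigr].$$
This is a rank $\le 2N$ operator of the form $\sum_k a_k(x)b_k(y)$, and the elementary identity $\det(I+\sum_k a_k\otimes b_k)=\det\bigl(\delta_{jk}+\int b_j a_k\bigr)$ reproduces precisely the matrix $I+(M^{(4)})^{-1}\Delta$ (up to a harmless transpose), identifying the Fredholm determinant with $[G_N^{(4)}(f)]^2$.

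I expect the main obstacle to be twofold: verifying the cancellation of the $f'/(2\sqrt{1+f})$ cross terms in $\widetilde{M}$, which is what keeps $\widetilde{M}$ a simple perturbation of $M^{(4)}$ rather than something nonlinear in $f$; and carefully tracking the antisymmetry of $(M^{(4)})^{-1}$ which is exactly what reorganizes the a priori awkward kernel $-\sum\mu_{jk}\psi_j(x)\psi_k'(y)$ into the convenient Cauchy-type finite-rank form $\sum_k\eta_k(x)\psi_k'(y)$ needed to match the matrix determinant with the Fredholm determinant.
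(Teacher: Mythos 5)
Your argument is correct and is essentially the standard derivation (confluent Vandermonde with the modified functions $\Phi_l=\psi_l\sqrt{1+f}$, de Bruijn's Pfaffian identity, the observation that the cross terms cancel so that $\widetilde{M}=M^{(4)}+\Delta$, and the finite-rank Fredholm determinant identity) used in the sources the paper cites for this lemma, which is stated here without proof. The only points to watch are the slightly garbled phrase about the weight-included product being ``the square of $\det[\Phi_l(x_j)]$'' (what you actually need and then correctly state is the confluent Vandermonde identity for $\prod_{j<k}(x_j-x_k)^4$), and the fact that your two integrations by parts require the boundary contributions $[f\psi_l\psi_m]_a^b$ and $\tfrac12(\psi_k(a)+\psi_k(b))$ to vanish --- true for the Gaussian and Laguerre weights actually used later, but an implicit hypothesis for general $w$ on $[a,b]$.
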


\begin{lemma}\label{le3}
We assume that $N$ is even. Let
$$
\tilde{\psi}_{j}(x):=\pi_{j}(x)w(x),\;\;j=0,1,2,\ldots,
$$
where $\pi_{j}(x)$ is any polynomial of degree $j$, and
$$
M^{(1)}:=\left(\int_{a}^{b}\tilde{\psi}_{j}(x)\varepsilon\tilde{\psi}_{k}(x)dx\right)_{j,k=0}^{N-1}
$$
with its inverse denoted by
$$
\left(M^{(1)}\right)^{-1}=:(\nu_{jk})_{j,k=0}^{N-1}.
$$
Then
$$
\left[G_{N}^{(1)}(f)\right]^{2}
=\det\left(I+K_{N}^{(1)}(f^{2}+2f)-K_{N}^{(1)}\varepsilon f'-K_{N}^{(1)}f\varepsilon f'\right),
$$
where $K_{N}^{(1)}$ is an integral operator with kernel
$$
K_{N}^{(1)}(x,y):=\sum_{j,k=0}^{N-1}\nu_{jk}\varepsilon\tilde{\psi}_{j}(x)\tilde{\psi}_{k}(y).
$$
\end{lemma}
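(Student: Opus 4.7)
The plan is to adapt the Pfaffian-to-determinant approach of Tracy--Widom \cite{Tracy1998} and Dieng--Tracy \cite{Dieng} to the orthogonal case, in three steps. \textbf{Step 1 (de Bruijn reduction).} Since $N$ is even, after ordering $x_1<\cdots<x_N$ the Vandermonde becomes $\prod_{j<k}(x_k-x_j)=\det[\pi_{j-1}(x_k)]_{j,k=1}^{N}/\prod_{j=0}^{N-1}c_j$, where $c_j$ is the leading coefficient of $\pi_j$. Folding $w(x_k)(1+f(x_k))$ into column $k$ and invoking de Bruijn's identity \cite{deBruijn} for even $N$ gives
\[
G_{N}^{(1)}(f) \;=\; \frac{\mathrm{Pf}\,A(f)}{\mathrm{Pf}\,A(0)},\qquad A(f)_{jk} \;=\; \int_a^b\!\int_a^b \tilde\psi_j(x)(1+f(x))\,\mathrm{sgn}(y-x)\,\tilde\psi_k(y)(1+f(y))\,dx\,dy,
\]
the overall constant being fixed by $G_N^{(1)}(0)=1$. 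Using $\mathrm{sgn}(y-x)=-2\varepsilon(x,y)$, one has $A(f)=-2\tilde M(f)$ with $\tilde M(0)=M^{(1)}$, and squaring converts the Pfaffian ratio into a determinant ratio
\[
[G_N^{(1)}(f)]^{2} \;=\; \det\bigl((M^{(1)})^{-1}\tilde M(f)\bigr) \;=\; \det\!\Bigl(I_N+(M^{(1)})^{-1}\!\bigl[\tilde M(f)-M^{(1)}\bigr]\Bigr).
\]

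\textbf{Step 2 (integration by parts).} The key analytic input is $(\varepsilon h)'=h$ for $h$ vanishing at the endpoints, together with the antisymmetry $\varepsilon^{*}=-\varepsilon$. From $(f\,\varepsilon\tilde\psi_k)'=f'\,\varepsilon\tilde\psi_k+f\tilde\psi_k$ one deduces the operator identity
\[
\varepsilon(f\tilde\psi_k) \;=\; f\,\varepsilon\tilde\psi_k - \varepsilon\bigl(f'\,\varepsilon\tilde\psi_k\bigr),
\]
the boundary contributions being negligible under the Schwartz hypothesis on $f$ and the decay of $w$. Substituting into $\tilde M(f)_{jk}=\langle(1+f)\tilde\psi_j,\,\varepsilon[(1+f)\tilde\psi_k]\rangle$ and expanding via $(1+f)^{2}=1+(f^{2}+2f)$ yields the three-term formula
\[
\tilde M(f)_{jk}-M^{(1)}_{jk} \;=\; \langle(f^{2}+2f)\tilde\psi_j,\,\varepsilon\tilde\psi_k\rangle - \langle\tilde\psi_j,\,\varepsilon f'\varepsilon\tilde\psi_k\rangle - \langle f\tilde\psi_j,\,\varepsilon f'\varepsilon\tilde\psi_k\rangle.
\]

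\textbf{Step 3 (finite-rank reduction).} The operator $K_N^{(1)}$ has rank at most $N$ and factors as $K_N^{(1)}=E\,(M^{(1)})^{-1}\Psi^{*}$, where $E\colon\mathbb{C}^{N}\to L^{2}[a,b]$ sends $e_j\mapsto\varepsilon\tilde\psi_j$ and $\Psi^{*}\colon L^{2}[a,b]\to\mathbb{C}^{N}$ sends $g\mapsto(\langle\tilde\psi_k,g\rangle)_{k}$. Setting $L:=(f^{2}+2f)-\varepsilon f'-f\,\varepsilon f'$, the Sylvester-type identity $\det(I+AB)=\det(I+BA)$ gives
\[
\det\!\bigl(I+K_N^{(1)}L\bigr) \;=\; \det\!\Bigl(I_N+(M^{(1)})^{-1}\Psi^{*}LE\Bigr).
\]
A direct calculation of $(\Psi^{*}LE)_{kj}=\langle\tilde\psi_k,L\,\varepsilon\tilde\psi_j\rangle$, using self-adjointness of the multiplication operators, reproduces precisely the right-hand side of the Step~2 formula with indices $(j,k)$ replaced by $(k,j)$, hence $\Psi^{*}LE=\tilde M(f)-M^{(1)}$; combining with Step~1 proves the lemma.

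The main obstacle is the integration-by-parts step: the identity $\varepsilon(f\tilde\psi_k)=f\,\varepsilon\tilde\psi_k-\varepsilon(f'\,\varepsilon\tilde\psi_k)$ depends on the boundary contributions at $a$ and $b$ being negligible, which must be checked under the Schwartz hypothesis on $f$ together with the decay of $w$ at the endpoints; the subsequent sign bookkeeping coming from the antisymmetry of $\varepsilon$ is routine but must be handled carefully to achieve the exact match of the two finite-dimensional determinants in Step~3.
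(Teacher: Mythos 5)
Your proposal is correct and follows essentially the same route as the paper, which states this lemma without proof by quoting \cite{Min201601}; that derivation (based on \cite{Tracy1998} and \cite{Dieng}) is precisely your de Bruijn/Pfaffian reduction, the integration-by-parts (commutator) identity $\varepsilon(f\tilde\psi_k)=f\,\varepsilon\tilde\psi_k-\varepsilon(f'\varepsilon\tilde\psi_k)$, and the finite-rank $\det(I+AB)=\det(I+BA)$ step. The endpoint caveat you flag (that $\varepsilon\circ\frac{d}{dx}=I$ only modulo boundary terms, which vanish here because $f$ is Schwartz and decays at the relevant edges) is exactly the assumption implicit in those sources, so your argument matches the intended proof.
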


In this paper, we take $w(x)=\mathrm{e}^{-x^{2}},\; x\in \mathbb{R}$ and $w(x)=x^{\alpha}\mathrm{e}^{-x},\;\alpha>-1,\; x\in \mathbb{R}^{+}$. These are known as the Gaussian unitary, symplectic, orthogonal ensemble and Laguerre unitary, symplectic, orthogonal ensemble, respectively. The unitary ensemble is the simplest one; the symplectic ensemble and orthogonal ensemble are much more complicated than the unitary case. For the symplectic ensemble and orthogonal ensemble and also their relations to the unitary ensemble, see \cite{Adler2001,Adler2011,Dieng,Forrester2010,Mehta,Tracy1998,Widom} for reference. As far as we know, the linear statistics formula for $\beta=1$ and $\beta=4$ are produced in this paper for the first time that gives the mean and variance where the background is described by the edge behavior of the ensemble (hence the Airy kernel), except the original contribution of Basor and Widom for $\beta=2$ \cite{Basor1999}.

We introduce here some notations, which will be used in the following sections of this paper.
Let $\mathrm{Ai}(x)$ denote the Airy function, the first solution of $y''(x)-x\: y(x)=0$ \cite{Lebedev} (page 136-137).
Define
\be\label{bx}
B(x):=\int_{-\infty}^{x}\mathrm{Ai}(y)dy-\int_{x}^{\infty}\mathrm{Ai}(y)dy.
\ee
We can also write $B(x)$ in another form:
\bea
B(x)&=&\int_{-\infty}^{0}\mathrm{Ai}(y)dy-\int_{0}^{\infty}\mathrm{Ai}(y)dy+2\int_{0}^{x}\mathrm{Ai}(y)dy\nonumber\\
&=&\frac{1}{3}+2\int_{0}^{x}\mathrm{Ai}(y)dy,\nonumber
\eea
since we have the fact that \cite{Abramowitz} (page 449)
$$
\int_{-\infty}^{0}\mathrm{Ai}(y)dy=\frac{2}{3},\qquad \int_{0}^{\infty}\mathrm{Ai}(y)dy=\frac{1}{3}.
$$
Let $K(x,y)$ be the Airy kernel
\be\label{airy1}
K(x,y):=\frac{\mathrm{Ai}(x)\mathrm{Ai}'(y)-\mathrm{Ai}(y)\mathrm{Ai}'(x)}{x-y}.
\ee
When $x=y$,
\be\label{airy2}
K(x,x)=(\mathrm{Ai}'(x))^2-x\:\mathrm{Ai}^2(x).
\ee
The equality (\ref{airy2}) is obtained by taking the limit $y\rightarrow x$ from (\ref{airy1}) and using the property $\mathrm{Ai}''(x)=x\: \mathrm{Ai}(x)$.\\
We then define
\be\label{lxy}
L(x,y):=\int_{x}^{\infty}K(y,z)dz-\int_{-\infty}^{x}K(y,z)dz.
\ee
Finally, we mention that $\chi_{J}(x)$ is the indicator function defined on the interval $J$, namely,
$$
\chi_{J}(x)=\left\{
\begin{aligned}
&1,&x\in J;\\
&0,&x\notin J.
\end{aligned}
\right.
$$

The paper \cite{Min201601} studied the large $N$ behavior of the MGF of the linear statistics in Gaussian ensembles associated with the sine kernel and Laguerre ensembles associated with the Bessel kernel. This paper continues to study the large $N$ behavior of the MGF in these Gaussian and Laguerre ensembles associated with the Airy kernel, from which we obtain the mean and variance of the scaled linear statistics. The unitary case is the simplest one among them. We established the relation between the mean and variance of the scaled linear statistics in symplectic, orthogonal and unitary ensembles. We also show that as $N\rightarrow\infty$, the MGF of a suitably scaled linear statistics in the Gaussian ensembles and Laguerre ensembles are the same, which leads to the same mean and variance of the linear statistics between the Gaussian ensembles and Laguerre ensembles. For the problems on the mean and variance of linear statistics in unitary ensembles, see \cite{Basor1993,Basor1997,Chen1998,Min201602} for reference.

The rest of this paper is organized as follows. In Sec. 2, we study the large $N$ behavior of the MGF of the scaled linear statistics in Gaussian unitary, symplectic and orthogonal ensembles, respectively. From this we obtain the mean and variance of the scaled linear statistics in the three Gaussian ensembles. In Sec. 3, we repeat the development of Sec. 2, but for the Laguerre ensembles. In Sec. 4, we use the Coulomb fluid method to give an intuitive derivation for the $\beta=2$ result of Basor and Widom \cite{Basor1999}, and obtain the mean and variance of another type of linear statistics in GUE. The conclusion is given in Sec. 5.

\section{The Gaussian Ensembles}
\subsection{Gaussian Unitary Ensemble}
In the Gaussian case, $w(x)=\mathrm{e}^{-x^{2}},\; x\in \mathbb{R}$. From Lemma \ref{lemma1} we have
\be\label{phi}
\varphi_{j}(x)=\frac{1}{\pi^{\frac{1}{4}}2^{\frac{j}{2}}\sqrt{j!}}H_{j}(x)\mathrm{e}^{-\frac{x^{2}}{2}},\;\;j=0,1,2,\ldots,
\ee
where $H_{j}(x)$ are the Hermite polynomials of degree $j$.

We consider the large $N$ asymptotics of $G_{N}^{(2)}(f)$ in this subsection. It is well known that
\bea\label{logdet}
\log\det\left(I+K_{N}^{(2)}f\right)
&=&\mathrm{Tr}\log\left(I+K_{N}^{(2)}f\right)\nonumber\\
&=&\mathrm{Tr}K_{N}^{(2)}f-\frac{1}{2}\mathrm{Tr}\left(K_{N}^{(2)}f\right)^{2}+\frac{1}{3}\mathrm{Tr}\left(K_{N}^{(2)}f\right)^{3}-\cdots.
\eea
We state a theorem before our discussion.
\begin{theorem}\label{gue}
As $N\rightarrow\infty$,
$$
2^{-\frac{1}{2}}N^{-\frac{1}{6}}K_{N}^{(2)}\left(\sqrt{2N}+2^{-\frac{1}{2}}N^{-\frac{1}{6}}x,\sqrt{2N}+2^{-\frac{1}{2}}N^{-\frac{1}{6}}y\right)
=K(x,y)+O(N^{-\frac{1}{3}}),
$$
where $K(x,y)$ is the Airy kernel defined by (\ref{airy1}).
\end{theorem}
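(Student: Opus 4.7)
The plan is to reduce the sum defining $K_{N}^{(2)}(x,y)$ to a two–term expression via the Christoffel–Darboux formula, and then insert the Plancherel–Rotach asymptotics for the Hermite functions at the soft edge $x=\sqrt{2N}$. Since $\varphi_{j}(x)=\pi^{-1/4}2^{-j/2}(j!)^{-1/2}H_{j}(x)\mathrm{e}^{-x^{2}/2}$, the standard three–term recurrence for Hermite polynomials yields
\be
K_{N}^{(2)}(x,y)=\sqrt{\frac{N}{2}}\cdot\frac{\varphi_{N}(x)\varphi_{N-1}(y)-\varphi_{N-1}(x)\varphi_{N}(y)}{x-y},\nonumber
\ee
which collapses the whole analysis to understanding $\varphi_{N}$ and $\varphi_{N-1}$ near $\sqrt{2N}$.

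Next I would substitute $x=\sqrt{2N}+2^{-1/2}N^{-1/6}\xi$, $y=\sqrt{2N}+2^{-1/2}N^{-1/6}\eta$ and invoke the classical Plancherel–Rotach edge asymptotic
\be
2^{1/2}N^{1/12}\,\varphi_{N}\!\left(\sqrt{2N}+\frac{\xi}{\sqrt{2}\,N^{1/6}}\right)=\mathrm{Ai}(\xi)+O(N^{-2/3}),\nonumber
\ee
uniformly for $\xi$ in compact sets (and with Airy–type decay for $\xi$ bounded below). The corresponding statement for $\varphi_{N-1}$ follows by replacing $N$ with $N-1$ and expanding; the shift $\sqrt{2N}-\sqrt{2(N-1)}=\frac{1}{\sqrt{2N}}+O(N^{-3/2})$ produces an additional small argument change in $\mathrm{Ai}$, which after Taylor expansion gives an $O(N^{-1/3})$ contribution and, to leading order, the derivative $\mathrm{Ai}'$. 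The prefactor $\sqrt{N/2}$, the two powers of $2^{1/2}N^{1/12}$ from the asymptotic, and the $(x-y)^{-1}=\sqrt{2}\,N^{1/6}(\xi-\eta)^{-1}$ Jacobian combine with an overall rescaling $2^{-1/2}N^{-1/6}$ from the theorem statement to leave exactly
\be
\frac{\mathrm{Ai}(\xi)\mathrm{Ai}'(\eta)-\mathrm{Ai}(\eta)\mathrm{Ai}'(\xi)}{\xi-\eta}=K(\xi,\eta),\nonumber
\ee
with remainder $O(N^{-1/3})$.

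The coincident case $\xi=\eta$ is handled separately by passing to the derivative form of Christoffel–Darboux and using $\mathrm{Ai}''(\xi)=\xi\,\mathrm{Ai}(\xi)$, which reproduces $K(\xi,\xi)=\mathrm{Ai}'(\xi)^{2}-\xi\,\mathrm{Ai}(\xi)^{2}$ as in (\ref{airy2}).

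The main obstacle is not the algebraic bookkeeping but obtaining the Plancherel–Rotach expansion with a quantitative error of order $N^{-2/3}$ and a precise identification of the subleading coefficient coming from the $N\mapsto N-1$ shift; this is standard and can be justified either by steepest descent on the integral representation of $H_{N}$, by the Riemann–Hilbert analysis of Deift–Kriecherbauer–McLaughlin–Venakides–Zhou, or by direct WKB analysis of the Hermite differential equation. A secondary technical point is ensuring the error is uniform on the compact sets in $(\xi,\eta)$ used later when the kernel is plugged into the Fredholm determinant expansion (\ref{logdet}); the Airy decay of $\mathrm{Ai}$ as $\xi\to+\infty$ plus exponential decay estimates of $\varphi_{N}$ beyond the edge make the tails harmless.
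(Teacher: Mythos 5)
Your proposal is correct and follows essentially the same route as the paper: the Christoffel--Darboux reduction to $\varphi_N,\varphi_{N-1}$ combined with Szeg\H{o}'s Plancherel--Rotach edge asymptotic $\mathrm{e}^{-x^2/2}H_n(x)\sim c_n\,\mathrm{Ai}\bigl(2^{1/2}n^{1/6}(x-\sqrt{2n+1})\bigr)$ with relative error $O(n^{-2/3})$, followed by the scaling substitution. You in fact make explicit the step the paper leaves as ``elaborate computations,'' namely that the $N\mapsto N-1$ shift of the turning point produces the $\mathrm{Ai}'$ terms at order $N^{-1/3}$ in the antisymmetrized numerator, which is exactly why the final error is $O(N^{-1/3})$.
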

\begin{proof}
From the asymptotic formula of Hermite polynomial \cite{Szego} (page 201),
$$
\mathrm{e}^{-\frac{x^2}{2}}H_{n}(x)=2^{\frac{n}{2}+\frac{1}{4}}\pi^{\frac{1}{4}}(n!)^{\frac{1}{2}}n^{-\frac{1}{12}}
\left(\mathrm{Ai}(-3^{-\frac{1}{3}}t)+O(n^{-\frac{2}{3}})\right),
$$
where
$$
x=(2n+1)^{\frac{1}{2}}-2^{-\frac{1}{2}}3^{-\frac{1}{3}}n^{-\frac{1}{6}}t,
$$
we have
\be\label{asy}
\mathrm{e}^{-\frac{x^2}{2}}H_{n}(x)=\pi^{\frac{1}{4}}2^{\frac{n}{2}+\frac{1}{4}}(n!)^{\frac{1}{2}}n^{-\frac{1}{12}}
\left[\mathrm{Ai}\left(2^{\frac{1}{2}}n^{\frac{1}{6}}\left(x-\sqrt{2n+1}\right)\right)+O\left(n^{-\frac{2}{3}}\right)\right].
\ee
Using the Christoffel-Darboux formula,
$$
K_{N}^{(2)}(x,y)=\frac{\mathrm{e}^{-\frac{x^2}{2}}H_{N}(x)\mathrm{e}^{-\frac{y^2}{2}}H_{N-1}(y)
-\mathrm{e}^{-\frac{y^2}{2}}H_{N}(y)\mathrm{e}^{-\frac{x^2}{2}}H_{N-1}(x)}{\pi^{\frac{1}{2}}2^N(N-1)!(x-y)}.
$$
Replacing the variables $x$ by $\sqrt{2N}+2^{-\frac{1}{2}}N^{-\frac{1}{6}}x$ and $y$ by $\sqrt{2N}+2^{-\frac{1}{2}}N^{-\frac{1}{6}}y$, and using (\ref{asy}), we obtain the desired result after some elaborate computations.
\end{proof}
\noindent $\mathbf{Remark.}$ The above result was obtained by \cite{Bowick,Forrester,Moore}, but they did not show the order term. See also \cite{Tracy1994A} on the study of this Airy kernel.

We now use Theorem \ref{gue} to compute (\ref{logdet}) term by term as $N\rightarrow\infty$. We replace $f(x)$ by $f\left(2^{\frac{1}{2}}N^{\frac{1}{6}}\left(x-\sqrt{2N}\right)\right)$ in the following computations. The first term reads,
\bea
\mathrm{Tr}K_{N}^{(2)}f
&=&\int_{-\infty}^{\infty}K_{N}^{(2)}(x,x)f\left(2^{\frac{1}{2}}N^{\frac{1}{6}}\left(x-\sqrt{2N}\right)\right)dx\nonumber\\
&=&\int_{-\infty}^{\infty}2^{-\frac{1}{2}}N^{-\frac{1}{6}}K_{N}^{(2)}
\left(\sqrt{2N}+2^{-\frac{1}{2}}N^{-\frac{1}{6}}x,\sqrt{2N}+2^{-\frac{1}{2}}N^{-\frac{1}{6}}x\right)f(x)dx\nonumber\\
&=&\int_{-\infty}^{\infty}K(x,x)f(x)dx+O(N^{-\frac{1}{3}}).\nonumber
\eea
The second term,
\bea
\mathrm{Tr}\left(K_{N}^{(2)}f\right)^{2}
&=&\int_{-\infty}^{\infty}\int_{-\infty}^{\infty}K_{N}^{(2)}(x,y)f\left(2^{\frac{1}{2}}N^{\frac{1}{6}}\left(y-\sqrt{2N}\right)\right)K_{N}^{(2)}(y,x)
f\left(2^{\frac{1}{2}}N^{\frac{1}{6}}\left(x-\sqrt{2N}\right)\right)dx dy\nonumber\\
&=&\int_{-\infty}^{\infty}\int_{-\infty}^{\infty}2^{-\frac{1}{2}}N^{-\frac{1}{6}}K_{N}^{(2)}
\left(\sqrt{2N}+2^{-\frac{1}{2}}N^{-\frac{1}{6}}x,\sqrt{2N}+2^{-\frac{1}{2}}N^{-\frac{1}{6}}y\right)f(y)\nonumber\\
&\cdot&2^{-\frac{1}{2}}N^{-\frac{1}{6}}K_{N}^{(2)}\left(\sqrt{2N}+2^{-\frac{1}{2}}N^{-\frac{1}{6}}y,\sqrt{2N}+2^{-\frac{1}{2}}N^{-\frac{1}{6}}x\right)f(x)dx dy\nonumber\\
&=&\int_{-\infty}^{\infty}\int_{-\infty}^{\infty}K^{2}(x,y)f(x)f(y)dx dy+O(N^{-\frac{1}{3}}).\nonumber
\eea
It follows from (\ref{logdet}) that
\bea
\log\det\left(I+K_{N}^{(2)}f\right)
&=&\int_{-\infty}^{\infty}K(x,x)f(x)dx-\frac{1}{2}\int_{-\infty}^{\infty}\int_{-\infty}^{\infty}K^{2}(x,y)f(x)f(y)dx dy\nonumber\\
&+&\cdots+O(N^{-\frac{1}{3}}). \label{log1}
\eea

We proceed to study the mean and variance of the scaled linear statistics $\sum_{j=1}^{N}F\left(2^{\frac{1}{2}}N^{\frac{1}{6}}\left(x_{j}-\sqrt{2N}\right)\right)$, so we need to obtain the coefficients of $\lambda$ and $\lambda^{2}$ from (\ref{log1}). From the relation of $f(x)$ and $F(x)$ we know
\be\label{fx}
f(x)=-\lambda F(x)+\frac{\lambda^{2}}{2}F^{2}(x)-\cdots.
\ee
Substituting (\ref{fx}) into (\ref{log1}), we find
\bea
\log\det\left(I+K_{N}^{(2)}f\right)&=&-\lambda\int_{-\infty}^{\infty}K(x,x)F(x)dx\nonumber\\
&+&\frac{\lambda^{2}}{2}\bigg[\int_{-\infty}^{\infty}K(x,x)F^{2}(x)dx-\int_{-\infty}^{\infty}\int_{-\infty}^{\infty}K^{2}(x,y)F(x)F(y)dx dy\bigg]+\cdots+O(N^{-\frac{1}{3}}).\nonumber
\eea
Noting that $\log G_{N}^{(2)}(f)=\log\det\left(I+K_{N}^{(2)}f\right)$, we have the following theorem.
\begin{theorem}
Let $\mu_{N}^{(\mathrm{GUE})}$ and $\mathcal{V}_{N}^{(\mathrm{GUE})}$ be the mean and variance of the linear statistics\\
$\sum_{j=1}^{N}F\left(2^{\frac{1}{2}}N^{\frac{1}{6}}\left(x_{j}-\sqrt{2N}\right)\right)$. Then as $N\rightarrow\infty$,
\be
\mu_{N}^{(\mathrm{GUE})}=\int_{-\infty}^{\infty}K(x,x)F(x)dx+O(N^{-\frac{1}{3}}),\label{guem}
\ee
\be
\mathcal{V}_{N}^{(\mathrm{GUE})}=\int_{-\infty}^{\infty}K(x,x)F^{2}(x)dx-\int_{-\infty}^{\infty}\int_{-\infty}^{\infty}K^{2}(x,y)F(x)F(y)dx dy+O(N^{-\frac{1}{3}}),\label{guev}
\ee
where $K(x,y)$ is the Airy kernel defined by (\ref{airy1}).
\end{theorem}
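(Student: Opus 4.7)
The plan is to extract the mean and variance directly from the cumulant expansion of the moment-generating function. Setting $X_N := \sum_{j=1}^N F(2^{1/2}N^{1/6}(x_j-\sqrt{2N}))$, one has by definition
\[
\log \mathbb{E}(\mathrm{e}^{-\lambda X_N}) = -\lambda\,\mu_N^{(\mathrm{GUE})} + \frac{\lambda^{2}}{2}\,\mathcal{V}_N^{(\mathrm{GUE})} + O(\lambda^3).
\]
By (\ref{mgf}) and (\ref{gnb}), this quantity equals $\log G_N^{(2)}(f)$ with $f(x)=\mathrm{e}^{-\lambda F(x)}-1$ interpreted in the Airy-rescaled variable $2^{1/2}N^{1/6}(x-\sqrt{2N})$. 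Identifying the coefficients of $\lambda$ and $\lambda^2$ in the large-$N$ expansion of $\log G_N^{(2)}(f)$ assembled in (\ref{log1}) will therefore complete the proof.

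The key observation is that each trace $\mathrm{Tr}(K_N^{(2)}f)^k$ in the series (\ref{logdet}) is of order $\lambda^k$ or higher, since $f(x)=-\lambda F(x)+O(\lambda^2)$ by (\ref{fx}). Consequently only $k=1$ contributes to the coefficient of $\lambda$, and only $k=1,2$ contribute to the coefficient of $\lambda^2$; all higher trace terms are harmless for the mean and variance. For the first two traces, the rescaling $x\mapsto\sqrt{2N}+2^{-1/2}N^{-1/6}x$ combined with Theorem \ref{gue} has already been shown in the excerpt to yield
\[
\mathrm{Tr}\,K_N^{(2)}f = \int_{-\infty}^{\infty} K(x,x) f(x)\, dx + O(N^{-1/3}), \quad \mathrm{Tr}(K_N^{(2)}f)^{2} = \iint K^{2}(x,y) f(x)f(y)\, dx\, dy + O(N^{-1/3}).
\]
Substituting the $\lambda$-expansion of $f$ from (\ref{fx}) into these and matching the coefficients of $\lambda$ and $\lambda^2$ produces (\ref{guem}) and (\ref{guev}) directly.

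The principal (if routine) technical point is to propagate the pointwise $O(N^{-1/3})$ rate of Theorem \ref{gue} through the integrals over all of $\mathbb{R}$. This is where the Schwartz hypothesis on $f$, equivalently the rapid decay of $F$, becomes essential: combined with uniform-in-$N$ bounds on the rescaled prelimit kernel $2^{-1/2}N^{-1/6}K_N^{(2)}(\sqrt{2N}+2^{-1/2}N^{-1/6}x,\sqrt{2N}+2^{-1/2}N^{-1/6}y)$ — which decay exponentially as $x,y\to+\infty$ and polynomially as $x,y\to-\infty$, matching the Airy kernel $K(x,y)$ — a dominated convergence argument upgrades the pointwise rate to the same $O(N^{-1/3})$ rate for the trace integrals. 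Once this uniform control is in place, the coefficient extraction is formal and the theorem follows.
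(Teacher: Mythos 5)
Your proposal follows essentially the same route as the paper: expand $\log\det(I+K_{N}^{(2)}f)$ as the trace series (\ref{logdet}), use the Airy-kernel limit of Theorem \ref{gue} under the rescaling $x\mapsto\sqrt{2N}+2^{-1/2}N^{-1/6}x$ to evaluate $\mathrm{Tr}\,K_{N}^{(2)}f$ and $\mathrm{Tr}(K_{N}^{(2)}f)^{2}$, substitute $f=-\lambda F+\frac{\lambda^{2}}{2}F^{2}-\cdots$, and read off the first two cumulants. Your explicit observations that only the $k=1,2$ traces can contribute to the coefficients of $\lambda$ and $\lambda^{2}$, and that the pointwise $O(N^{-1/3})$ rate must be propagated through the integrals using the Schwartz decay of $F$ together with uniform bounds on the rescaled prelimit kernel, are points the paper leaves implicit, but the argument is the same.
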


\subsection{Gaussian Symplectic Ensemble}
In this case, $w(x)=\mathrm{e}^{-x^{2}},\;\;x\in \mathbb{R}$. Let
$$
\psi_{2j+1}(x):=\frac{1}{\sqrt{2}}\varphi_{2j+1}(x),\;\; \psi_{2j}(x):=-\frac{1}{\sqrt{2}}\varepsilon\varphi_{2j+1}(x),\;\;j=0,1,2,\ldots,
$$
where $\varphi_{j}(x)$ is given by (\ref{phi}). It follows that $M^{(4)}$ is the direct sum of the $N$ copies of
$
\begin{pmatrix}
0&1\\
-1&0
\end{pmatrix}
$
and $(M^{(4)})^{-1}=-M^{(4)}$ (see \cite{Dieng,Tracy1998}).
From Lemma \ref{le2}, we obtain the following result \cite{Min201601}.
\begin{theorem}
For the Gaussian symplectic ensemble, we have
$$
\left[G_{N}^{(4)}(f)\right]^{2}=\det(I+T_{\mathrm{GSE}}),
$$
where
$$
T_{\mathrm{GSE}}:=K_{2N+1}^{(2)}f-\frac{1}{2}K_{2N+1}^{(2)}\varepsilon f'+\sqrt{N+\frac{1}{2}}(\varepsilon\varphi_{2N+1})\otimes\varphi_{2N}f
+\frac{1}{2}\sqrt{N+\frac{1}{2}}(\varepsilon\varphi_{2N+1})\otimes(\varepsilon\varphi_{2N}) f',
$$
and $K_{2N+1}^{(2)}$ is an operator on $L^2(\mathbb{R})$ with kernel
$$
K_{2N+1}^{(2)}(x,y)=\sum_{j=0}^{2N}\varphi_{j}(x)\varphi_{j}(y).
$$
\end{theorem}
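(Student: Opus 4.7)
The plan is to reduce the claim to an operator identity between the kernel $K_N^{(4)}$ of Lemma~\ref{le2} and the Hermite-function sum $K_{2N+1}^{(2)}$. First I would verify that for the stated choice $\psi_{2j+1}=\varphi_{2j+1}/\sqrt{2}$, $\psi_{2j}=-\varepsilon\varphi_{2j+1}/\sqrt{2}$, the matrix $M^{(4)}$ really is the claimed block-diagonal matrix: using orthonormality of the Hermite functions and the integration-by-parts identity $\int \varepsilon\varphi_{2j+1}\,\varphi_{2k+1}'\,dx = -\delta_{jk}$, the entries split into independent $2\times 2$ blocks, and the inverse has only the nonzero entries $\mu_{2j,2j+1}=-1$, $\mu_{2j+1,2j}=1$. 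Substituting into the formula for $K_N^{(4)}$ in Lemma~\ref{le2} and collecting the $1/2$ factors yields
\[
2K_N^{(4)}(x,y)=\sum_{j=0}^{N-1}\bigl[\varphi_{2j+1}(x)\varphi_{2j+1}(y)-\varepsilon\varphi_{2j+1}(x)\,\varphi_{2j+1}'(y)\bigr].
\]

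The core step is a telescoping identity that turns this odd-only sum into $K_{2N+1}^{(2)}$ plus a single rank-one boundary term. I would use the Hermite-function three-term recursion $\varphi_j'(y)=c_j\varphi_{j-1}(y)-c_{j+1}\varphi_{j+1}(y)$ with $c_j:=\sqrt{j/2}$, together with its integrated form $\varphi_j(x)=c_j\varepsilon\varphi_{j-1}(x)-c_{j+1}\varepsilon\varphi_{j+1}(x)$; the latter is valid because $\varepsilon\varphi_j'=\varphi_j$ (the Hermite functions vanish at $\pm\infty$). Expanding $\varphi_{2j+1}'(y)$ via the first relation and then substituting for $c_{2j+1}\varepsilon\varphi_{2j+1}(x)$ via the second produces two sums whose interior contributions cancel after an index shift, leaving $-\sum_{j=0}^{N-1}\varphi_{2j}(x)\varphi_{2j}(y)$ (from the anchor $c_0=0$) together with one boundary term $-c_{2N}\varepsilon\varphi_{2N-1}(x)\varphi_{2N}(y)$. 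Completing the even sum to the full $K_{2N+1}^{(2)}$ by adding and subtracting $\varphi_{2N}(x)\varphi_{2N}(y)$, and applying the integrated recursion once more at $j=2N$, converts $c_{2N}\varepsilon\varphi_{2N-1}(x)-\varphi_{2N}(x)$ into $c_{2N+1}\varepsilon\varphi_{2N+1}(x)=\sqrt{N+\tfrac{1}{2}}\,\varepsilon\varphi_{2N+1}(x)$, giving the central kernel identity
\[
2K_N^{(4)}(x,y)=K_{2N+1}^{(2)}(x,y)+\sqrt{N+\tfrac{1}{2}}\,\varepsilon\varphi_{2N+1}(x)\,\varphi_{2N}(y).
\]

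For the $\varepsilon f'$-part I would multiply this kernel identity on the right by $\varepsilon f'$. A short computation using the antisymmetry $\varepsilon(y,z)=-\varepsilon(z,y)$ shows $\int\varphi_{2N}(y)\varepsilon(y,z)\,dy=-\varepsilon\varphi_{2N}(z)$, so the rank-one correction transforms as
\[
\bigl[(\varepsilon\varphi_{2N+1})\otimes\varphi_{2N}\bigr]\,\varepsilon f'=-\bigl[(\varepsilon\varphi_{2N+1})\otimes(\varepsilon\varphi_{2N})\bigr]\,f'.
\]
Combining the $f$- and $\varepsilon f'$-multiples of the kernel identity then assembles $2K_N^{(4)}f-K_N^{(4)}\varepsilon f'$ into exactly $T_{\mathrm{GSE}}$, and Lemma~\ref{le2} delivers the theorem.

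The main obstacle is the telescoping step, where both endpoint effects matter simultaneously: the $j=0$ anchor, where $c_0=0$ kills one half of the integrated recursion, and the $j=N-1$ boundary, whose uncancelled summand must be recombined with an added-and-subtracted $\varphi_{2N}\varphi_{2N}$ diagonal piece and then fed once more into the recursion before the $\sqrt{N+\tfrac{1}{2}}\,\varepsilon\varphi_{2N+1}$ factor of $T_{\mathrm{GSE}}$ emerges. Once this identity is in place, the remaining operator algebra is routine.
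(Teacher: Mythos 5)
Your proposal is correct, and it follows essentially the same route as the paper's source: the paper states this theorem without proof, citing \cite{Min201601}, where the argument is precisely the reduction of $2K_N^{(4)}f-K_N^{(4)}\varepsilon f'$ from Lemma \ref{le2} to $T_{\mathrm{GSE}}$ via the kernel identity $2K_N^{(4)}=K_{2N+1}^{(2)}+\sqrt{N+\tfrac12}\,(\varepsilon\varphi_{2N+1})\otimes\varphi_{2N}$, obtained by telescoping the Hermite recursion exactly as you describe. Your verification of the block structure of $M^{(4)}$, the telescoping with the $c_0=0$ anchor and the $j=2N$ boundary recombination, and the sign flip $[(\varepsilon\varphi_{2N+1})\otimes\varphi_{2N}]\varepsilon=-[(\varepsilon\varphi_{2N+1})\otimes(\varepsilon\varphi_{2N})]$ are all sound.
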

We also have the following expansion formula,
\be\label{logdet4}
\log\det(I+T_{\mathrm{GSE}})=\mathrm{Tr}\log(I+T_{\mathrm{GSE}})=\mathrm{Tr}\:T_{\mathrm{GSE}}-\frac{1}{2}\mathrm{Tr}\:T_{\mathrm{GSE}}^{2}
+\frac{1}{3}\mathrm{Tr}\:T_{\mathrm{GSE}}^{3}-\cdots.
\ee
Similarly as Theorem \ref{gue}, we have the following theorem.
\begin{theorem}\label{gse}
As $N\rightarrow\infty$,
$$
2^{-\frac{2}{3}}N^{-\frac{1}{6}}K_{2N+1}^{(2)}\left(\sqrt{4N}+2^{-\frac{2}{3}}N^{-\frac{1}{6}}x,\sqrt{4N}+2^{-\frac{2}{3}}N^{-\frac{1}{6}}y\right)
=K(x,y)+O(N^{-\frac{1}{3}}).
$$
\end{theorem}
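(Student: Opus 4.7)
My plan is to follow the blueprint of Theorem~\ref{gue} verbatim, only with the Hermite polynomials $H_{2N+1}$ and $H_{2N}$ taking the places of $H_N$ and $H_{N-1}$ in the Christoffel--Darboux representation
$$
K_{2N+1}^{(2)}(x,y)=\frac{\mathrm{e}^{-\frac{x^2}{2}}H_{2N+1}(x)\mathrm{e}^{-\frac{y^2}{2}}H_{2N}(y)-\mathrm{e}^{-\frac{y^2}{2}}H_{2N+1}(y)\mathrm{e}^{-\frac{x^2}{2}}H_{2N}(x)}{\pi^{\frac{1}{2}}2^{2N+1}(2N)!(x-y)}.
$$
I would then substitute $x\mapsto\sqrt{4N}+2^{-\frac{2}{3}}N^{-\frac{1}{6}}x$ and $y\mapsto\sqrt{4N}+2^{-\frac{2}{3}}N^{-\frac{1}{6}}y$, and insert the Szeg\H{o} asymptotic (\ref{asy}) separately with $n=2N$ and $n=2N+1$.

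The principal check is that the Airy arguments produced by (\ref{asy}) reduce to $x$ and $y$ up to controlled shifts of order $N^{-1/3}$. Writing $\eta_n(X):=2^{1/2}n^{1/6}(X-\sqrt{2n+1})$ and using $\sqrt{2n+1}-\sqrt{4N}=O(N^{-1/2})$ together with $2^{1/2}n^{1/6}=2^{2/3}N^{1/6}(1+O(N^{-1}))$, a short expansion gives $\eta_{2N}(\sqrt{4N}+2^{-2/3}N^{-1/6}x)=x-2^{-4/3}N^{-1/3}+O(N^{-1})$ and $\eta_{2N+1}(\sqrt{4N}+2^{-2/3}N^{-1/6}x)=x-3\cdot 2^{-4/3}N^{-1/3}+O(N^{-1})$. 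Since the two shifted arguments differ by precisely $2^{-1/3}N^{-1/3}$, a first-order Taylor expansion of $\mathrm{Ai}$ converts the Christoffel--Darboux numerator into $2\cdot 2^{-4/3}N^{-1/3}[\mathrm{Ai}(x)\mathrm{Ai}'(y)-\mathrm{Ai}(y)\mathrm{Ai}'(x)]+O(N^{-2/3})$, which is exactly the derivative combination needed to form $K(x,y)$.

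Assembling the remaining constants, the prefactor product from (\ref{asy}) at $n=2N$ and $n=2N+1$, divided by $\pi^{1/2}2^{2N+1}(2N)!$ and simplified using $((2N+1)!/(2N)!)^{1/2}=(2N+1)^{1/2}$, collapses to $(2N)^{1/3}(1+O(N^{-1}))$; combined with the outer scale $2^{-2/3}N^{-1/6}$ and the factor $2^{2/3}N^{1/6}/(x-y)$ coming from $1/(x-y)$ under the rescaling, all powers of $2$ and $N$ cancel and the limit is $K(x,y)+O(N^{-1/3})$. The main obstacle I foresee is not conceptual but purely bookkeeping: the fractional exponents $n^{-1/12}$, Stirling for $(2N+1)^{1/2}$, and the $O(N^{-1})$ subleading corrections in the $\eta_n$ expansion must combine so that none of them degrade the stated $O(N^{-1/3})$ remainder. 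A faster alternative route would be to quote Theorem~\ref{gue} at index $N'=2N+1$ and then absorb the discrepancies $\sqrt{4N}-\sqrt{2(2N+1)}=O(N^{-1/2})$ and $2^{-2/3}N^{-1/6}-2^{-1/2}(2N+1)^{-1/6}=O(N^{-7/6})$ into the error term via smoothness of the Airy kernel.
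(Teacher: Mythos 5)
Your proposal is correct and follows exactly the route the paper intends: the paper gives no separate proof of this theorem, merely stating it is obtained ``similarly as Theorem~\ref{gue}'', i.e.\ via the Christoffel--Darboux formula at index $2N+1$ together with the Plancherel--Rotach asymptotic (\ref{asy}) applied at $n=2N$ and $n=2N+1$. Your bookkeeping checks out (the argument shifts $2^{-4/3}N^{-1/3}$ and $3\cdot 2^{-4/3}N^{-1/3}$, the resulting numerator $2\cdot 2^{-4/3}N^{-1/3}\left[\mathrm{Ai}(x)\mathrm{Ai}'(y)-\mathrm{Ai}(y)\mathrm{Ai}'(x)\right]$, and the collapse of the constants), so the argument is complete.
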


\begin{theorem}\label{gse1}
As $N\rightarrow\infty$,
\be\label{eq1}
\varphi_{2N}\left(\sqrt{4N}+2^{-\frac{2}{3}}N^{-\frac{1}{6}}x\right)
=2^{\frac{1}{6}}N^{-\frac{1}{12}}\mathrm{Ai}(x)+O(N^{-\frac{3}{4}}),
\ee
\be\label{eq2}
\varphi_{2N+1}\left(\sqrt{4N}+2^{-\frac{2}{3}}N^{-\frac{1}{6}}x\right)
=2^{\frac{1}{6}}N^{-\frac{1}{12}}\mathrm{Ai}(x)+O(N^{-\frac{3}{4}}),
\ee
$$
\varepsilon\varphi_{2N}\left(\sqrt{4N}+2^{-\frac{2}{3}}N^{-\frac{1}{6}}x\right)
=2^{-\frac{3}{2}}N^{-\frac{1}{4}}B(x)+O(N^{-\frac{11}{12}}),
$$
$$
\varepsilon\varphi_{2N+1}\left(\sqrt{4N}+2^{-\frac{2}{3}}N^{-\frac{1}{6}}x\right)
=2^{-\frac{3}{2}}N^{-\frac{1}{4}}B(x)+O(N^{-\frac{11}{12}}),
$$
where $B(x)$ is defined by (\ref{bx}).
\end{theorem}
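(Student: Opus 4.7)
The plan is to apply the Plancherel--Rotach formula (\ref{asy}) directly, with $n=2N$ for (\ref{eq1}) and $n=2N+1$ for (\ref{eq2}). First I would invert the relation $x=(2n+1)^{1/2}-2^{-1/2}3^{-1/3}n^{-1/6}t$ to express $-3^{-1/3}t$ in terms of the scaled variable $x'$ defined by $x=\sqrt{4N}+2^{-2/3}N^{-1/6}x'$. A quick calculation using $(2n+1)^{1/2}=\sqrt{4N}+O(N^{-1/2})$ for both $n=2N$ and $n=2N+1$ gives $-3^{-1/3}t=x'+O(N^{-1/3})$. Substituting this into $\mathrm{Ai}(-3^{-1/3}t)$, multiplying by the normalization prefactor coming from $\varphi_{n}(x)=H_{n}(x)e^{-x^2/2}/(\pi^{1/4}2^{n/2}\sqrt{n!})$, and collecting the surviving constant $2^{1/4}n^{-1/12}=2^{1/6}N^{-1/12}(1+O(1/N))$ produces the stated leading term. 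The quoted error $O(N^{-3/4})$ should be obtained by tracking the intrinsic $O(n^{-2/3})$ from (\ref{asy}) together with the $O(N^{-1/3})$ shift in the Airy argument, both scaled down by the $N^{-1/12}$ prefactor.

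\textbf{Last two asymptotics.} For the $\varepsilon\varphi_{n}$ statements, I would expand the action of $\varepsilon$ as
\begin{equation*}
\varepsilon\varphi_{n}(x)=\frac{1}{2}\int_{-\infty}^{x}\varphi_{n}(y)\,dy-\frac{1}{2}\int_{x}^{\infty}\varphi_{n}(y)\,dy,
\end{equation*}
then change variables $y=\sqrt{4N}+2^{-2/3}N^{-1/6}u$ to pick up a Jacobian $2^{-2/3}N^{-1/6}$, substitute the leading Airy asymptotic $\varphi_{n}(y)\approx 2^{1/6}N^{-1/12}\mathrm{Ai}(u)$ obtained in the previous step, and recognize the resulting difference of integrals against $\mathrm{Ai}$ as $B(x)$ via the definition (\ref{bx}). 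The constants combine to
\begin{equation*}
\tfrac{1}{2}\cdot 2^{-2/3}N^{-1/6}\cdot 2^{1/6}N^{-1/12}=2^{-3/2}N^{-1/4},
\end{equation*}
matching the claim.

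\textbf{Main obstacle.} The subtlety is that the pointwise asymptotics (\ref{eq1})--(\ref{eq2}) are valid only in an $N$-dependent neighborhood of the edge, whereas the integrals used for $\varepsilon\varphi_{n}$ extend over all of $\mathbb{R}$. To justify substituting the Airy approximation under the integral sign I would split the $u$-range into three zones: (i) a neighborhood of the edge where (\ref{eq1})--(\ref{eq2}) apply directly and give the main term $2^{-3/2}N^{-1/4}B(x)$; (ii) the oscillatory bulk region far to the left, where one employs the cosine form of the Plancherel--Rotach expansion together with integration by parts to exploit cancellation of the rapid oscillations; and (iii) the classically forbidden side far to the right, where the exponential decay of $H_{n}(y)e^{-y^{2}/2}$ makes the tail integrable with an exponentially small contribution. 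The delicate step is ensuring that the contributions from zones (ii) and (iii), together with the error already present in the substitution in zone (i), combine to $O(N^{-11/12})$. This requires uniform (rather than merely pointwise) Plancherel--Rotach estimates; I expect this to be the main technical bottleneck, while the algebraic bookkeeping of constants in zones with Airy behavior is routine.
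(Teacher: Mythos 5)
Your proposal follows essentially the same route as the paper: the first two asymptotics are read off from the Plancherel--Rotach formula (\ref{asy}) applied with $n=2N$ and $n=2N+1$ together with the definition (\ref{phi}), and the last two are obtained by writing $\varepsilon\varphi_{n}$ as the half-difference of integrals, rescaling $y=\sqrt{4N}+2^{-\frac{2}{3}}N^{-\frac{1}{6}}u$, and substituting the Airy asymptotic to recognize $B(x)$ from (\ref{bx}); your constant bookkeeping ($\tfrac12\cdot 2^{-\frac{2}{3}}\cdot 2^{\frac{1}{6}}=2^{-\frac{3}{2}}$, etc.) matches the paper's exactly.

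Two remarks. First, the paper's proof simply substitutes the edge asymptotic under the integral over all of $\mathbb{R}$ without comment; the uniformity issue you raise in your ``main obstacle'' paragraph (the turning-point formula holds only for bounded $t$, so the oscillatory bulk and the exponentially decaying forbidden region need separate treatment) is genuine and is not addressed in the paper, so your three-zone plan is added rigor rather than a deviation from the intended argument. Second, your own error accounting for (\ref{eq1})--(\ref{eq2}) is internally inconsistent with the stated bound: the $O(N^{-\frac{1}{3}})$ shift of the Airy argument, which comes from replacing $\sqrt{4N+1}$ (resp.\ $\sqrt{4N+3}$) by $\sqrt{4N}$, multiplied by the prefactor $N^{-\frac{1}{12}}$ contributes an $\mathrm{Ai}'$ correction of size $O(N^{-\frac{5}{12}})$, which dominates the claimed $O(N^{-\frac{3}{4}})$; the exponents $-\frac{3}{4}$ and $-\frac{11}{12}$ account only for the intrinsic $O(n^{-\frac{2}{3}})$ error in (\ref{asy}). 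This looseness is present in the paper as well and is harmless downstream (the trace computations only need relative errors $O(N^{-\frac{1}{3}})$), but as written you should either retain the explicit first-order $\mathrm{Ai}'$ term or weaken the error bounds to $O(N^{-\frac{5}{12}})$ and $O(N^{-\frac{7}{12}})$ respectively.
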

\begin{proof}
From the definition (\ref{phi}) and the asymptotic formula (\ref{asy}), we readily obtain (\ref{eq1}) and (\ref{eq2}).
It follows from the definition of $\varepsilon$ that
\bea
&&\varepsilon\varphi_{2N}\left(\sqrt{4N}+2^{-\frac{2}{3}}N^{-\frac{1}{6}}x\right)\nonumber\\
&=&\frac{1}{2}\left(\int_{-\infty}^{\sqrt{4N}+2^{-\frac{2}{3}}N^{-\frac{1}{6}}x}\varphi_{2N}(y)dy
-\int_{\sqrt{4N}+2^{-\frac{2}{3}}N^{-\frac{1}{6}}x}^{\infty}\varphi_{2N}(y)dy\right)\nonumber\\
&=&2^{-\frac{5}{3}}N^{-\frac{1}{6}}\left(\int_{-\infty}^{x}\varphi_{2N}\left(\sqrt{4N}+2^{-\frac{2}{3}}N^{-\frac{1}{6}}y\right)dy
-\int_{x}^{\infty}\varphi_{2N}\left(\sqrt{4N}+2^{-\frac{2}{3}}N^{-\frac{1}{6}}y\right)dy\right)\nonumber\\
&=&2^{-\frac{3}{2}}N^{-\frac{1}{4}}\left(\int_{-\infty}^{x}\mathrm{Ai}(y)dy
-\int_{x}^{\infty}\mathrm{Ai}(y)dy\right)+O(N^{-\frac{11}{12}})\nonumber\\
&=&2^{-\frac{3}{2}}N^{-\frac{1}{4}}B(x)+O(N^{-\frac{11}{12}}),\nonumber
\eea
where use has been made of (\ref{eq1}).
\\
Similarly, we find
\bea
&&\varepsilon\varphi_{2N+1}\left(\sqrt{4N}+2^{-\frac{2}{3}}N^{-\frac{1}{6}}x\right)\nonumber\\
&=&\frac{1}{2}\left(\int_{-\infty}^{\sqrt{4N}+2^{-\frac{2}{3}}N^{-\frac{1}{6}}x}\varphi_{2N+1}(y)dy
-\int_{\sqrt{4N}+2^{-\frac{2}{3}}N^{-\frac{1}{6}}x}^{\infty}\varphi_{2N+1}(y)dy\right)\nonumber\\
&=&2^{-\frac{5}{3}}N^{-\frac{1}{6}}\left(\int_{-\infty}^{x}\varphi_{2N+1}\left(\sqrt{4N}+2^{-\frac{2}{3}}N^{-\frac{1}{6}}y\right)dy
-\int_{x}^{\infty}\varphi_{2N+1}\left(\sqrt{4N}+2^{-\frac{2}{3}}N^{-\frac{1}{6}}y\right)dy\right)\nonumber\\
&=&2^{-\frac{3}{2}}N^{-\frac{1}{4}}\left(\int_{-\infty}^{x}\mathrm{Ai}(y)dy
-\int_{x}^{\infty}\mathrm{Ai}(y)dy\right)+O(N^{-\frac{11}{12}})\nonumber\\
&=&2^{-\frac{3}{2}}N^{-\frac{1}{4}}B(x)+O(N^{-\frac{11}{12}}).\nonumber
\eea
\end{proof}
Now we use Theorem \ref{gse} and Theorem \ref{gse1} to compute (\ref{logdet4}) as $N\rightarrow\infty$. We will change $f(x)$ to $f\left(2^{\frac{2}{3}}N^{\frac{1}{6}}\left(x-\sqrt{4N}\right)\right)$ in the following calculations. In this case, $f'(x)$ becomes\\ $2^{\frac{2}{3}}N^{\frac{1}{6}}f'\left(2^{\frac{2}{3}}N^{\frac{1}{6}}\left(x-\sqrt{4N}\right)\right)$. We consider $\mathrm{Tr}\:T_{\mathrm{GSE}}$ firstly,
$$
\mathrm{Tr}\:T_{\mathrm{GSE}}=\mathrm{Tr}\:K_{2N+1}f-\mathrm{Tr}\: \frac{1}{2}K_{2N+1}\varepsilon f'+\mathrm{Tr}\:\sqrt{N+\frac{1}{2}}
\left(\varepsilon\varphi_{2N+1}\otimes\varphi_{2N}f\right)
+\mathrm{Tr}\:\frac{1}{2}\sqrt{N+\frac{1}{2}}(\varepsilon\varphi_{2N+1}\otimes\varepsilon\varphi_{2N})f'.
$$
The first term reads,
\bea
\mathrm{Tr}\:K_{2N+1}^{(2)}f&=&\int_{-\infty}^{\infty}K_{2N+1}^{(2)}(x,x)f\left(2^{\frac{2}{3}}N^{\frac{1}{6}}\left(x-\sqrt{4N}\right)\right)dx\nonumber\\
&=&\int_{-\infty}^{\infty}2^{-\frac{2}{3}}N^{-\frac{1}{6}}K_{2N+1}^{(2)}\left(\sqrt{4N}+2^{-\frac{2}{3}}N^{-\frac{1}{6}}x,\sqrt{4N}
+2^{-\frac{2}{3}}N^{-\frac{1}{6}}x\right)f(x)dx\nonumber\\
&=&\int_{-\infty}^{\infty}K(x,x)f(x)dx+O(N^{-\frac{1}{3}}).\nonumber
\eea
The second term,
\bea
\mathrm{Tr}\: \frac{1}{2}K_{2N+1}^{(2)}\varepsilon f'&=&2^{-\frac{1}{3}}N^{\frac{1}{6}}\int_{-\infty}^{\infty}\int_{-\infty}^{\infty}K_{2N+1}^{(2)}(x,y)\varepsilon(y,x)
f'\left(2^{\frac{2}{3}}N^{\frac{1}{6}}\left(x-\sqrt{4N}\right)\right)dxdy\nonumber\\
&=&2^{-\frac{4}{3}}N^{\frac{1}{6}}\int_{-\infty}^{\infty}\left(\int_{x}^{\infty}K_{2N+1}^{(2)}(x,y)dy-\int_{-\infty}^{x}K_{2N+1}^{(2)}(x,y)dy\right)
f'\left(2^{\frac{2}{3}}N^{\frac{1}{6}}\left(x-\sqrt{4N}\right)\right)dx.\nonumber
\eea
Let
$$
x=\sqrt{4N}+2^{-\frac{2}{3}}N^{-\frac{1}{6}}u,\;\;y=\sqrt{4N}+2^{-\frac{2}{3}}N^{-\frac{1}{6}}v.
$$
Then
\bea
\mathrm{Tr}\: \frac{1}{2}K_{2N+1}^{(2)}\varepsilon f'&=&\frac{1}{4}\int_{-\infty}^{\infty}\bigg(\int_{u}^{\infty}2^{-\frac{2}{3}}N^{-\frac{1}{6}}K_{2N+1}^{(2)}
\left(\sqrt{4N}+2^{-\frac{2}{3}}N^{-\frac{1}{6}}u,\sqrt{4N}
+2^{-\frac{2}{3}}N^{-\frac{1}{6}}v\right)dv\nonumber\\
&-&\int_{-\infty}^{u}2^{-\frac{2}{3}}N^{-\frac{1}{6}}K_{2N+1}^{(2)}\left(\sqrt{4N}
+2^{-\frac{2}{3}}N^{-\frac{1}{6}}u,\sqrt{4N}+2^{-\frac{2}{3}}N^{-\frac{1}{6}}v\right)dv\bigg)f'(u)du\nonumber\\
&=&\frac{1}{4}\int_{-\infty}^{\infty}\left(\int_{u}^{\infty}K(u,v)dv-\int_{-\infty}^{u}K(u,v)dv\right)f'(u)du+O(N^{-\frac{1}{3}})\nonumber\\
&=&\frac{1}{4}\int_{-\infty}^{\infty}L(x,x)f'(x)dx+O(N^{-\frac{1}{3}}),\nonumber
\eea
where $L(x,y)$ is given by (\ref{lxy}).
\\
The third term,
\bea
\mathrm{Tr}\:\sqrt{N+\frac{1}{2}}
\left(\varepsilon\varphi_{2N+1}\otimes\varphi_{2N}f\right)&=&\int_{-\infty}^{\infty}\sqrt{N+\frac{1}{2}}\varepsilon\varphi_{2N+1}(x)\varphi_{2N}(x)
f\left(2^{\frac{2}{3}}N^{\frac{1}{6}}\left(x-\sqrt{4N}\right)\right)dx\nonumber\\
&=&2^{-\frac{2}{3}}N^{-\frac{1}{6}}\sqrt{N+\frac{1}{2}}\int_{-\infty}^{\infty}\varepsilon
\varphi_{2N+1}\left(\sqrt{4N}+2^{-\frac{2}{3}}N^{-\frac{1}{6}}x\right)\nonumber\\
&\cdot&\varphi_{2N}\left(\sqrt{4N}+2^{-\frac{2}{3}}N^{-\frac{1}{6}}x\right)f(x)dx\nonumber\\
&=&\frac{1}{4}\int_{-\infty}^{\infty}\mathrm{Ai}(x)B(x)f(x)dx+O(N^{-\frac{2}{3}}).\nonumber
\eea
The fourth term,
\bea
\mathrm{Tr}\:\frac{1}{2}\sqrt{N+\frac{1}{2}}(\varepsilon\varphi_{2N+1}\otimes\varepsilon\varphi_{2N})f'
&=&2^{-\frac{1}{3}}N^{\frac{1}{6}}\sqrt{N+\frac{1}{2}}\int_{-\infty}^{\infty}\varepsilon\varphi_{2N+1}(x)\varepsilon\varphi_{2N}(x)
f'\left(2^{\frac{2}{3}}N^{\frac{1}{6}}\left(x-\sqrt{4N}\right)\right)dx\nonumber\\
&=&\frac{1}{2}\sqrt{N+\frac{1}{2}}\int_{-\infty}^{\infty}\varepsilon
\varphi_{2N+1}\left(\sqrt{4N}+2^{-\frac{2}{3}}N^{-\frac{1}{6}}x\right)\nonumber\\
&\cdot&\varepsilon\varphi_{2N}\left(\sqrt{4N}+2^{-\frac{2}{3}}N^{-\frac{1}{6}}x\right)f'(x)dx\nonumber\\
&=&\frac{1}{16}\int_{-\infty}^{\infty}B^2(x)f'(x)dx+O(N^{-\frac{2}{3}}).\nonumber
\eea
So we obtain
\bea
\mathrm{Tr}\:T_{\mathrm{GSE}}&=&\int_{-\infty}^{\infty}K(x,x)f(x)dx-\frac{1}{4}\int_{-\infty}^{\infty}L(x,x)f'(x)dx
+\frac{1}{4}\int_{-\infty}^{\infty}\mathrm{Ai}(x)B(x)f(x)dx\nonumber\\
&+&\frac{1}{16}\int_{-\infty}^{\infty}B^2(x)f'(x)dx+O(N^{-\frac{1}{3}}).\label{trs}
\eea
We proceed to compute $\mathrm{Tr}\:T_{\mathrm{GSE}}^{2}$,
\bea
\mathrm{Tr}\:T_{\mathrm{GSE}}^{2}
&=&\mathrm{Tr}\:K_{2N+1}^{(2)}f K_{2N+1}^{(2)}f-\mathrm{Tr}\:K_{2N+1}^{(2)}f K_{2N+1}^{(2)}\varepsilon f'+\mathrm{Tr}\:\sqrt{4N+2}K_{2N+1}^{(2)}f(\varepsilon\varphi_{2N+1}
\otimes\varphi_{2N}f)\nonumber\\
&+&\mathrm{Tr}\:\sqrt{N+\frac{1}{2}}K_{2N+1}^{(2)}f(\varepsilon\varphi_{2N+1}\otimes\varepsilon\varphi_{2N})f'
+\mathrm{Tr}\:\frac{1}{4}K_{2N+1}^{(2)}\varepsilon f'K_{2N+1}^{(2)}\varepsilon f'\nonumber\\
&-&\mathrm{Tr}\:\sqrt{N+\frac{1}{2}}K_{2N+1}^{(2)}\varepsilon f'(\varepsilon\varphi_{2N+1}\otimes\varphi_{2N}f)
-\mathrm{Tr}\:\frac{1}{2}\sqrt{N+\frac{1}{2}}K_{2N+1}^{(2)}\varepsilon f'(\varepsilon\varphi_{2N+1}\otimes\varepsilon\varphi_{2N})f'\nonumber\\
&+&\mathrm{Tr}\:\left(N+\frac{1}{2}\right)(\varepsilon\varphi_{2N+1}\otimes\varphi_{2N}f)(\varepsilon\varphi_{2N+1}\otimes\varphi_{2N}f)
\nonumber\\
&+&\mathrm{Tr}\:\left(N+\frac{1}{2}\right)(\varepsilon\varphi_{2N+1}\otimes\varphi_{2N}f)(\varepsilon\varphi_{2N+1}\otimes\varepsilon
\varphi_{2N})f'\nonumber\\
&+&\mathrm{Tr}\:\frac{1}{4}\left(N+\frac{1}{2}\right)(\varepsilon\varphi_{2N+1}\otimes\varepsilon\varphi_{2N})f'(\varepsilon\varphi_{2N+1}
\otimes\varepsilon\varphi_{2N})f'.\nonumber
\eea
Similarly, we compute the ten traces one by one. We write down the result here without the detailed calculations:
\begin{small}
\bea\label{trs2}
&&\mathrm{Tr}\:T_{\mathrm{GSE}}^{2}\nonumber\\
&=&\int_{-\infty}^{\infty}\int_{-\infty}^{\infty}K^2(x,y)f(x)f(y)dxdy
-\frac{1}{2}\int_{-\infty}^{\infty}\int_{-\infty}^{\infty}K(x,y)L(x,y)f'(x)f(y)dxdy\nonumber\\
&+&\frac{1}{2}\int_{-\infty}^{\infty}\int_{-\infty}^{\infty}K(x,y)\mathrm{Ai}(x)B(y)f(x)f(y)dxdy
+\frac{1}{8}\int_{-\infty}^{\infty}\int_{-\infty}^{\infty}K(x,y)B(x)B(y)f(x)f'(y)dxdy\nonumber\\
&+&\frac{1}{16}\int_{-\infty}^{\infty}\int_{-\infty}^{\infty}L(x,y)L(y,x)f'(x)f'(y)dxdy
-\frac{1}{8}\int_{-\infty}^{\infty}\int_{-\infty}^{\infty}L(x,y)\mathrm{Ai}(y)B(x)f'(x)f(y)dxdy\nonumber\\
&-&\frac{1}{32}\int_{-\infty}^{\infty}\int_{-\infty}^{\infty}L(x,y)B(x)B(y)f'(x)f'(y)dxdy
+\frac{1}{16}\int_{-\infty}^{\infty}\int_{-\infty}^{\infty}\mathrm{Ai}(x)\mathrm{Ai}(y)B(x)B(y)f(x)f(y)dxdy\nonumber\\
&+&\frac{1}{32}\int_{-\infty}^{\infty}\int_{-\infty}^{\infty}\mathrm{Ai}(x)B(x)B^2(y)f(x)f'(y)dxdy
+\frac{1}{256}\int_{-\infty}^{\infty}\int_{-\infty}^{\infty}B^2(x)B^2(y)f'(x)f'(y)dxdy+O(N^{-\frac{1}{3}}).\nonumber\\
\eea
\end{small}

Proceeding as in the previous subsection, we replace $f(x)$ with $-\lambda F(x)+\frac{\lambda^2}{2}F^2(x)$, then $f'(x)$ becomes $-\lambda F'(x)+\lambda^2 F(x)F'(x)$. Substituting these into (\ref{trs}) and (\ref{trs2}), we finally find
\begin{small}
\bea
&&\log\det(I+T_{\mathrm{GSE}})\nonumber\\
&=&-\lambda\bigg\{\int_{-\infty}^{\infty}K(x,x)F(x)dx-\frac{1}{4}\int_{-\infty}^{\infty}L(x,x)F'(x)dx
+\frac{1}{4}\int_{-\infty}^{\infty}\mathrm{Ai}(x)B(x)F(x)dx
+\frac{1}{16}\int_{-\infty}^{\infty}B^2(x)F'(x)dx\bigg\}\nonumber\\
&+&\frac{\lambda^2}{2}\bigg\{\int_{-\infty}^{\infty}K(x,x)F^2(x)dx-\frac{1}{2}\int_{-\infty}^{\infty}L(x,x)F(x)F'(x)dx
+\frac{1}{4}\int_{-\infty}^{\infty}\mathrm{Ai}(x)B(x)F^2(x)dx\nonumber\\
&+&\frac{1}{8}\int_{-\infty}^{\infty}B^2(x)F(x)F'(x)dx
-\int_{-\infty}^{\infty}\int_{-\infty}^{\infty}K^2(x,y)F(x)F(y)dxdy\nonumber\\
&+&\frac{1}{2}\int_{-\infty}^{\infty}\int_{-\infty}^{\infty}K(x,y)L(x,y)F'(x)F(y)dxdy
-\frac{1}{2}\int_{-\infty}^{\infty}\int_{-\infty}^{\infty}K(x,y)\mathrm{Ai}(x)B(y)F(x)F(y)dxdy\nonumber\\
&-&\frac{1}{8}\int_{-\infty}^{\infty}\int_{-\infty}^{\infty}K(x,y)B(x)B(y)F(x)F'(y)dxdy
-\frac{1}{16}\int_{-\infty}^{\infty}\int_{-\infty}^{\infty}L(x,y)L(y,x)F'(x)F'(y)dxdy\nonumber\\
&+&\frac{1}{8}\int_{-\infty}^{\infty}\int_{-\infty}^{\infty}L(x,y)\mathrm{Ai}(y)B(x)F'(x)F(y)dxdy
+\frac{1}{32}\int_{-\infty}^{\infty}\int_{-\infty}^{\infty}L(x,y)B(x)B(y)F'(x)F'(y)dxdy\nonumber\\
&-&\frac{1}{16}\int_{-\infty}^{\infty}\int_{-\infty}^{\infty}\mathrm{Ai}(x)\mathrm{Ai}(y)B(x)B(y)F(x)F(y)dxdy
-\frac{1}{32}\int_{-\infty}^{\infty}\int_{-\infty}^{\infty}\mathrm{Ai}(x)B(x)B^2(y)F(x)F'(y)dxdy\nonumber\\
&-&\frac{1}{256}\int_{-\infty}^{\infty}\int_{-\infty}^{\infty}B^2(x)B^2(y)F'(x)F'(y)dxdy\bigg\}+O(N^{-\frac{1}{3}}).\nonumber
\eea
\end{small}
Noting that $\log\: G_{N}^{(4)}(f)=\frac{1}{2}\log\det(I+T_{\mathrm{GSE}})$, we obtain the following theorem.
\begin{theorem}
Denoting by $\mu_{N}^{(\mathrm{GSE})}$ and $\mathcal{V}_{N}^{(\mathrm{GSE})}$ the mean and variance of the linear statistics
$\sum_{j=1}^{N}F\left(2^{\frac{2}{3}}N^{\frac{1}{6}}\left(x_{j}-\sqrt{4N}\right)\right)$, we have as $N\rightarrow\infty$,
\begin{small}
$$
\mu_{N}^{(\mathrm{GSE})}=\frac{1}{2}\mu_{N}^{(\mathrm{GUE})}-\frac{1}{8}\int_{-\infty}^{\infty}L(x,x)F'(x)dx
+\frac{1}{8}\int_{-\infty}^{\infty}\mathrm{Ai}(x)B(x)F(x)dx+\frac{1}{32}\int_{-\infty}^{\infty}B^2(x)F'(x)dx+O(N^{-\frac{1}{3}}),
$$
\end{small}
\begin{small}
\bea
\mathcal{V}_{N}^{(\mathrm{GSE})}
&=&\frac{1}{2}\mathcal{V}_{N}^{(\mathrm{GUE})}-\frac{1}{4}\int_{-\infty}^{\infty}L(x,x)F(x)F'(x)dx
+\frac{1}{8}\int_{-\infty}^{\infty}\mathrm{Ai}(x)B(x)F^2(x)dx+\frac{1}{16}\int_{-\infty}^{\infty}B^2(x)F(x)F'(x)dx\nonumber\\
&+&\frac{1}{4}\int_{-\infty}^{\infty}\int_{-\infty}^{\infty}K(x,y)L(x,y)F'(x)F(y)dxdy
-\frac{1}{4}\int_{-\infty}^{\infty}\int_{-\infty}^{\infty}K(x,y)\mathrm{Ai}(x)B(y)F(x)F(y)dxdy\nonumber\\
&-&\frac{1}{16}\int_{-\infty}^{\infty}\int_{-\infty}^{\infty}K(x,y)B(x)B(y)F(x)F'(y)dxdy
-\frac{1}{32}\int_{-\infty}^{\infty}\int_{-\infty}^{\infty}L(x,y)L(y,x)F'(x)F'(y)dxdy\nonumber\\
&+&\frac{1}{16}\int_{-\infty}^{\infty}\int_{-\infty}^{\infty}L(x,y)\mathrm{Ai}(y)B(x)F'(x)F(y)dxdy
+\frac{1}{64}\int_{-\infty}^{\infty}\int_{-\infty}^{\infty}L(x,y)B(x)B(y)F'(x)F'(y)dxdy\nonumber\\
&-&\frac{1}{32}\int_{-\infty}^{\infty}\int_{-\infty}^{\infty}\mathrm{Ai}(x)\mathrm{Ai}(y)B(x)B(y)F(x)F(y)dxdy
-\frac{1}{64}\int_{-\infty}^{\infty}\int_{-\infty}^{\infty}\mathrm{Ai}(x)B(x)B^2(y)F(x)F'(y)dxdy\nonumber\\
&-&\frac{1}{512}\int_{-\infty}^{\infty}\int_{-\infty}^{\infty}B^2(x)B^2(y)F'(x)F'(y)dxdy+O(N^{-\frac{1}{3}}),\nonumber
\eea
\end{small}
where $\mu_{N}^{(\mathrm{GUE})}$ and $\mathcal{V}_{N}^{(\mathrm{GUE})}$ are given by (\ref{guem}) and (\ref{guev}), respectively.
\end{theorem}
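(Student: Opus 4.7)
The plan is to read off the mean and variance directly from the cumulant expansion of $\log G_N^{(4)}(f)$. Since $f(x) = \mathrm{e}^{-\lambda F(x)}-1$, the quantity $G_N^{(4)}(f)$ is precisely the moment generating function $\mathbb{E}(\mathrm{e}^{-\lambda \sum_j F(x_j)})$, so expanding $\log G_N^{(4)}(f) = -\lambda\,\mu_N^{(\mathrm{GSE})} + \frac{\lambda^2}{2}\mathcal{V}_N^{(\mathrm{GSE})} + O(\lambda^3)$ and matching powers of $\lambda$ yields the theorem. I therefore only need the large $N$ expansion of $\log G_N^{(4)}(f) = \tfrac{1}{2}\log\det(I+T_{\mathrm{GSE}})$ up to $O(\lambda^2)$, where $T_{\mathrm{GSE}}$ is the operator supplied by the preceding theorem.

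First I would apply the trace expansion (\ref{logdet4}). Since each of the four summands of $T_{\mathrm{GSE}}$ is linear in $f$ or $f'$, and $f(x) = -\lambda F(x) + \tfrac{\lambda^2}{2}F^2(x) + O(\lambda^3)$, we have $T_{\mathrm{GSE}} = O(\lambda)$ and hence $\mathrm{Tr}\,T_{\mathrm{GSE}}^k = O(\lambda^k)$. Only the $k=1$ and $k=2$ traces therefore contribute to the coefficients of $\lambda$ and $\lambda^2$, and the remaining series $\sum_{k\geq 3}\tfrac{(-1)^{k+1}}{k}\mathrm{Tr}\,T_{\mathrm{GSE}}^k$ can be discarded at this order.

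Next I would evaluate $\mathrm{Tr}\,T_{\mathrm{GSE}}$ and $\mathrm{Tr}\,T_{\mathrm{GSE}}^{2}$ asymptotically in the edge scaling $x \mapsto \sqrt{4N} + 2^{-2/3}N^{-1/6}x$. For each term, after the change of variables the test function $f$ reverts to the unscaled function while each kernel factor picks up a Jacobian $2^{-2/3}N^{-1/6}$. Theorem \ref{gse} then identifies $2^{-2/3}N^{-1/6}K_{2N+1}^{(2)}(\sqrt{4N}+2^{-2/3}N^{-1/6}x,\sqrt{4N}+2^{-2/3}N^{-1/6}y) \to K(x,y)$, and Theorem \ref{gse1} supplies the corresponding limits for $\varphi_{2N},\varphi_{2N+1},\varepsilon\varphi_{2N},\varepsilon\varphi_{2N+1}$ in terms of $\mathrm{Ai}(x)$ and $B(x)$. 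The delicate point is the power counting for terms containing $f'$: the chain rule produces a factor $2^{2/3}N^{1/6}$, but this is exactly absorbed by the $N^{-1/6}$ weight of an $\varepsilon$ operator, or by the $N^{-1/12}\cdot N^{-1/12}$ in a product of two Airy amplitudes, or by $N^{-1/4}\cdot N^{-1/4}$ in a product of two $B$-amplitudes together with the prefactor $\sqrt{N+\tfrac{1}{2}}$. I would verify this cancellation uniformly across the four terms of (\ref{trs}) and the ten terms of (\ref{trs2}), each contributing a finite limit with an $O(N^{-1/3})$ error.

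Finally I would substitute the expansions $f = -\lambda F + \tfrac{\lambda^2}{2}F^2$ and $f' = -\lambda F' + \lambda^2 F F'$ into (\ref{trs}) and (\ref{trs2}) and collect like powers. The coefficient of $-\lambda$ from $\tfrac{1}{2}\mathrm{Tr}\,T_{\mathrm{GSE}}$ produces $\mu_N^{(\mathrm{GSE})}$; identifying the leading single integral with $\tfrac{1}{2}\mu_N^{(\mathrm{GUE})}$ via (\ref{guem}) gives the stated formula. For the variance, the coefficient of $\tfrac{\lambda^2}{2}$ receives two types of contributions: the diagonal $F^2$ and $FF'$ pieces from $\tfrac{1}{2}\mathrm{Tr}\,T_{\mathrm{GSE}}$ and the off-diagonal double integrals from $-\tfrac{1}{4}\mathrm{Tr}\,T_{\mathrm{GSE}}^{2}$, with $-\tfrac{1}{2}\int\!\!\int K^2 F(x)F(y)\,dxdy$ combining with $\int K(x,x)F^2(x)\,dx$ to produce $\tfrac{1}{2}\mathcal{V}_N^{(\mathrm{GUE})}$ from (\ref{guev}). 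The main obstacle is the bookkeeping: ensuring that every sign and numerical prefactor in the ten-term expansion of $\mathrm{Tr}\,T_{\mathrm{GSE}}^{2}$ tracks correctly through the $\tfrac{1}{2}(\mathrm{Tr} - \tfrac{1}{2}\mathrm{Tr}^2)$ combination and through the substitution of $f$ and $f'$, and that terms produced by pairing $\lambda F$ in one factor with $\lambda F$ in the other are summed consistently with the symmetrisation $F(x)F(y) = F(y)F(x)$ already built into the kernels.
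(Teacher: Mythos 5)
Your proposal follows essentially the same route as the paper: expand $\log\det(I+T_{\mathrm{GSE}})$ via the trace series, note that only $\mathrm{Tr}\,T_{\mathrm{GSE}}$ and $\mathrm{Tr}\,T_{\mathrm{GSE}}^{2}$ contribute at orders $\lambda$ and $\lambda^{2}$, evaluate the four (respectively ten) constituent traces in the edge scaling using Theorems \ref{gse} and \ref{gse1}, substitute $f=-\lambda F+\tfrac{\lambda^{2}}{2}F^{2}$, and halve everything because $\log G_{N}^{(4)}(f)=\tfrac{1}{2}\log\det(I+T_{\mathrm{GSE}})$. The power counting you describe for the $f'$ terms and the identification of the GUE pieces via (\ref{guem}) and (\ref{guev}) match the paper's computation, so the proposal is correct and not a genuinely different argument.
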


\subsection{Gaussian Orthogonal Ensemble}
It is convenient in this case to choose $w(x)$ to be the \textit{square root} of the Gaussian weight,
$$
w(x)=\mathrm{e}^{-\frac{x^{2}}{2}},\;\;x\in \mathbb{R},
$$
and keep in mind that $N$ is even. Define
$$
\psi_{2n+1}(x):=\frac{d}{dx}\varphi_{2n}(x),\;\;\psi_{2n}(x):=\varphi_{2n}(x),\;\;n=0,1,2,\ldots,
$$
where $\varphi_{j}(x)$ is given by (\ref{phi}). It follows that $M^{(1)}$ is the direct sum of the $\frac{N}{2}$ copies of
$
\begin{pmatrix}
0&1\\
-1&0
\end{pmatrix}
$
and $(M^{(1)})^{-1}=-M^{(1)}$ (see \cite{Dieng,Tracy1998}).
From Lemma \ref{le3}, we obtain the following result \cite{Min201601}.
\begin{theorem}
For the Gaussian orthogonal ensemble, we have
$$
\left[G_{N}^{(1)}(f)\right]^{2}=\det(I+T_{\mathrm{GOE}}),
$$
where
\bea
T_{\mathrm{GOE}}:
&=&K_{N}^{(2)}(f^{2}+2f)-K_{N}^{(2)}\varepsilon f'-K_{N}^{(2)}f\varepsilon f'+\sqrt{\frac{N}{2}}(\varepsilon\varphi_{N})\otimes\varphi_{N-1}(f^{2}+2f)\nonumber\\
&+&\sqrt{\frac{N}{2}}(\varepsilon\varphi_{N})\otimes(\varepsilon\varphi_{N-1})f'-\sqrt{\frac{N}{2}}((\varepsilon\varphi_{N})
\otimes\varphi_{N-1})f\varepsilon f',\nonumber
\eea
and $K_{N}^{(2)}$ is an operator on $L^2(\mathbb{R})$ with kernel
$$
K_{N}^{(2)}(x,y)=\sum_{j=0}^{N-1}\varphi_{j}(x)\varphi_{j}(y).
$$
\end{theorem}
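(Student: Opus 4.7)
The plan is to specialize Lemma \ref{le3} to the Gaussian weight $w(x)=\mathrm{e}^{-x^{2}/2}$ using the basis $\psi_{2n}(x)=\varphi_{2n}(x)$ and $\psi_{2n+1}(x)=\varphi_{2n}'(x)$ fixed just before the theorem. These $\psi_{j}$ satisfy the hypothesis $\tilde{\psi}_{j}(x)=\pi_{j}(x)w(x)$ of Lemma \ref{le3}: $\psi_{2n}=\pi_{2n}(x)\mathrm{e}^{-x^{2}/2}$ with $\deg\pi_{2n}=2n$, and $\psi_{2n+1}=\bigl(\pi_{2n}'(x)-x\pi_{2n}(x)\bigr)\mathrm{e}^{-x^{2}/2}$ has polynomial part of degree exactly $2n+1$.

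I first establish the block structure of $M^{(1)}$ asserted just before the theorem. Integration by parts, together with the vanishing of Hermite functions at $\pm\infty$, gives the key identity $\varepsilon\varphi_{j}'=\varphi_{j}$. The $(2n,2n+1)$ entry therefore equals $\int\varphi_{2n}\,\varepsilon\varphi_{2n}'\,dx=\int\varphi_{2n}^{2}\,dx=1$ by orthonormality, while skew-adjointness of $\varepsilon$ forces antisymmetry. Using the Hermite ladder relation $\varphi_{j}'=\sqrt{j/2}\,\varphi_{j-1}-\sqrt{(j+1)/2}\,\varphi_{j+1}$, the $(2n+1,2m+1)$ entry reduces to $\int\varphi_{2n}'\varphi_{2m}\,dx$, which vanishes because $\varphi_{2n}'$ lies in the odd-indexed span. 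Even-even entries vanish by iterating the companion recurrence $\sqrt{(2k+1)/2}\,\varepsilon\varphi_{2k}-\sqrt{k+1}\,\varepsilon\varphi_{2k+2}=\varphi_{2k+1}$ (obtained by applying $\varepsilon$ to the ladder expansion of $\varphi_{2k+1}'$ and using $\varepsilon\varphi_{2k+1}'=\varphi_{2k+1}$), which expresses $\varepsilon\varphi_{2m}$ as an odd-indexed Hermite combination modulo a tail pushed beyond the truncation. Thus $M^{(1)}$ has the claimed block form and $(M^{(1)})^{-1}=-M^{(1)}$.

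The crux of the argument is the next step: converting the kernel formula of Lemma \ref{le3}, which with $\nu_{jk}=-(M^{(1)})_{jk}$ becomes
$$
K_{N}^{(1)}(x,y)=\sum_{n=0}^{N/2-1}\bigl[\varphi_{2n}(x)\varphi_{2n}(y)-\varepsilon\varphi_{2n}(x)\,\varphi_{2n}'(y)\bigr],
$$
into the Christoffel--Darboux kernel plus a single rank-one correction. Expanding $\varphi_{2n}'(y)$ via the ladder relation, relabelling the summation index in the $\varphi_{2n-1}(y)$ contribution, and applying the companion recurrence to collapse the resulting bracket, the interior of the sum telescopes into $\sum_{n=0}^{N/2-2}\varphi_{2n+1}(x)\varphi_{2n+1}(y)$. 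The surviving boundary term $\sqrt{(N-1)/2}\,\varepsilon\varphi_{N-2}(x)\,\varphi_{N-1}(y)$ is then rewritten, via one final application of the recurrence at $k=(N-2)/2$, as $\varphi_{N-1}(x)\varphi_{N-1}(y)+\sqrt{N/2}\,\varepsilon\varphi_{N}(x)\,\varphi_{N-1}(y)$, producing
$$
K_{N}^{(1)}(x,y)=K_{N}^{(2)}(x,y)+\sqrt{N/2}\,\varepsilon\varphi_{N}(x)\,\varphi_{N-1}(y).
$$

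Inserting this rank-one decomposition into the operator expression of Lemma \ref{le3} splits each of $K_{N}^{(1)}(f^{2}+2f)$, $K_{N}^{(1)}\varepsilon f'$, and $K_{N}^{(1)}f\varepsilon f'$ into a $K_{N}^{(2)}$-piece and a rank-one piece $\sqrt{N/2}\,(\varepsilon\varphi_{N})\otimes\varphi_{N-1}$ composed with the trailing operator. For the $\varepsilon f'$ and $f\varepsilon f'$ pieces, skew-adjointness of $\varepsilon$ applied to the $\varphi_{N-1}$-side produces the sign patterns $+\sqrt{N/2}\,(\varepsilon\varphi_{N})\otimes(\varepsilon\varphi_{N-1})f'$ and $-\sqrt{N/2}\,((\varepsilon\varphi_{N})\otimes\varphi_{N-1})f\varepsilon f'$ appearing in $T_{\mathrm{GOE}}$; collecting the six pieces yields the stated formula. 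The main obstacle is the telescoping step above: the cancellations hinge on matching the ladder coefficient $\sqrt{(j+1)/2}$ exactly with the coefficient in the $\varepsilon\varphi$-recurrence, and the precise prefactor $\sqrt{N/2}$ only emerges after the final application of the recurrence at the top of the sum.
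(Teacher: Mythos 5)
Your argument is correct and follows the route the paper itself relies on (it defers the proof to \cite{Min201601}): verify the symplectic block structure of $M^{(1)}$, collapse $K_{N}^{(1)}$ into $K_{N}^{(2)}+\sqrt{N/2}\,(\varepsilon\varphi_{N})\otimes\varphi_{N-1}$ via the Hermite ladder, and substitute the rank-one decomposition into Lemma \ref{le3}. I checked the telescoping, the boundary identity $\sqrt{(N-1)/2}\,\varepsilon\varphi_{N-2}=\varphi_{N-1}+\sqrt{N/2}\,\varepsilon\varphi_{N}$, and the sign bookkeeping from $(u\otimes v)\varepsilon=-u\otimes(\varepsilon v)$; all are right and reproduce the six terms of $T_{\mathrm{GOE}}$ exactly.

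The one spot where the write-up does not quite close is the vanishing of the even--even entries $\int\varphi_{2n}\,\varepsilon\varphi_{2m}\,dx$. Iterating the companion recurrence expresses $\varepsilon\varphi_{2m}$ in the odd-indexed span only up to a remainder $c_{M}\,\varepsilon\varphi_{2M}$ whose coefficient does not tend to zero, and ``pushed beyond the truncation'' is irrelevant: the entry is an integral over all of $\mathbb{R}$, not a truncated sum, so the tail must be shown to contribute nothing. The immediate fix is parity: $\varphi_{2n}$ is even while $\varepsilon\varphi_{2m}$ is odd (since $\varepsilon$ maps even functions to odd ones), so each such integral vanishes identically. With that one-line repair the proof is complete.
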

We also have
$$
\log\det(I+T_{\mathrm{GOE}})=\mathrm{Tr}\log(I+T_{\mathrm{GOE}})=\mathrm{Tr}\:T_{\mathrm{GOE}}-\frac{1}{2}\mathrm{Tr}\:T_{\mathrm{GOE}}^{2}+\cdots.
$$
Similarly as the previous subsection, we have the following results.
\begin{theorem}\label{goe}
As $N\rightarrow\infty$,
$$
\varphi_{N}\left(\sqrt{2N}+2^{-\frac{1}{2}}N^{-\frac{1}{6}}x\right)
=2^{\frac{1}{4}}N^{-\frac{1}{12}}\mathrm{Ai}(x)+O(N^{-\frac{3}{4}}),
$$
$$
\varphi_{N-1}\left(\sqrt{2N}+2^{-\frac{1}{2}}N^{-\frac{1}{6}}x\right)
=2^{\frac{1}{4}}N^{-\frac{1}{12}}\mathrm{Ai}(x)+O(N^{-\frac{3}{4}}),
$$
$$
\varepsilon\varphi_{N}\left(\sqrt{2N}+2^{-\frac{1}{2}}N^{-\frac{1}{6}}x\right)
=2^{-\frac{5}{4}}N^{-\frac{1}{4}}B(x)+O(N^{-\frac{11}{12}}),
$$
$$
\varepsilon\varphi_{N-1}\left(\sqrt{2N}+2^{-\frac{1}{2}}N^{-\frac{1}{6}}x\right)
=2^{-\frac{5}{4}}N^{-\frac{1}{4}}B(x)+O(N^{-\frac{11}{12}}),
$$
where $B(x)$ is given by (\ref{bx}).
\end{theorem}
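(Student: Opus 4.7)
The plan is to mirror the proof of Theorem~\ref{gse1} with the appropriate rescaling: the edge center changes from $\sqrt{4N}$ to $\sqrt{2N}$, the microscopic scale factor from $2^{-2/3}N^{-1/6}$ to $2^{-1/2}N^{-1/6}$, and the Hermite indices from $2N,2N+1$ to $N-1,N$.

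For the first two statements (asymptotics of $\varphi_N$ and $\varphi_{N-1}$), I would apply the Plancherel--Rotach-type Hermite asymptotic~(\ref{asy}) directly. Setting $n=N$ and matching $\sqrt{2N}+2^{-1/2}N^{-1/6}x = (2N+1)^{1/2}-2^{-1/2}3^{-1/3}N^{-1/6}t$ forces the Airy argument $-3^{-1/3}t$ to equal $x+O(N^{-1/3})$ (the shift coming from $\sqrt{2N+1}-\sqrt{2N}=O(N^{-1/2})$, amplified by $N^{1/6}$). Combining (\ref{asy}) with the normalization in (\ref{phi}) produces $\varphi_N(\sqrt{2N}+2^{-1/2}N^{-1/6}x) = 2^{1/4}N^{-1/12}\mathrm{Ai}(x)+O(N^{-3/4})$. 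The case $n=N-1$ is identical after observing $(N-1)^{-1/12}=N^{-1/12}(1+O(N^{-1}))$, which is absorbed into the remainder.

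For $\varepsilon\varphi_N$, I would unfold $\varepsilon g(x)=\tfrac{1}{2}\int_{-\infty}^{x}g(y)dy-\tfrac{1}{2}\int_{x}^{\infty}g(y)dy$, substitute $x\mapsto \sqrt{2N}+2^{-1/2}N^{-1/6}x$, and change variable $y=\sqrt{2N}+2^{-1/2}N^{-1/6}v$ inside the integrals. A Jacobian $2^{-1/2}N^{-1/6}$ comes out, so one obtains $2^{-3/2}N^{-1/6}$ times the difference $\int_{-\infty}^{x}\varphi_N(\sqrt{2N}+2^{-1/2}N^{-1/6}v)dv - \int_{x}^{\infty}\varphi_N(\sqrt{2N}+2^{-1/2}N^{-1/6}v)dv$. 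Inserting the asymptotic just obtained combines prefactors as $2^{-3/2}N^{-1/6}\cdot 2^{1/4}N^{-1/12}=2^{-5/4}N^{-1/4}$, and the remaining difference of Airy integrals is exactly $B(x)$ by (\ref{bx}). The treatment of $\varepsilon\varphi_{N-1}$ is completely analogous, using the $n=N-1$ version of the first part.

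The one step requiring care is passing the pointwise Hermite-to-Airy asymptotic through the noncompact integrals $\int_{-\infty}^{x}$ and $\int_{x}^{\infty}$. This demands uniform (rather than merely pointwise) Plancherel--Rotach bounds that control $\varphi_N$ on the $v$-scale across the three relevant regimes (oscillatory, turning-point, and exponentially decaying). These bounds are what produce the $O(N^{-11/12})$ remainder rather than a weaker estimate inherited naively from the first part, and they are the only technical point where the proof departs from routine substitution.
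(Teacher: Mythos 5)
Your proposal is correct and follows essentially the same route as the paper, which proves Theorem~\ref{goe} by repeating verbatim the argument for Theorem~\ref{gse1}: apply the Plancherel--Rotach asymptotic (\ref{asy}) together with the normalization (\ref{phi}) for the first two formulas, then unfold $\varepsilon$, rescale the integration variable, and insert those asymptotics to get the $B(x)$ formulas with the combined prefactor $2^{-3/2}N^{-1/6}\cdot 2^{1/4}N^{-1/12}=2^{-5/4}N^{-1/4}$. Your closing remark about needing uniform control of the Hermite asymptotics to push the error through the noncompact integrals is a genuine technical point that the paper itself passes over in silence, so it is an addition rather than a deviation.
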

In the computations below, we replace $f(x)$ by $f\left(2^{\frac{1}{2}}N^{\frac{1}{6}}\left(x-\sqrt{2N}\right)\right)$ and $f'(x)$ by\\ $2^{\frac{1}{2}}N^{\frac{1}{6}}f'\left(2^{\frac{1}{2}}N^{\frac{1}{6}}\left(x-\sqrt{2N}\right)\right)$.
Using Theorem \ref{gue} and Theorem \ref{goe} to compute $\mathrm{Tr}\:T_{\mathrm{GOE}}$ and $\mathrm{Tr}\:T_{\mathrm{GOE}}^2$ as $N\rightarrow\infty$, we obtain the following results:
\bea\label{trgoe}
\mathrm{Tr}\:T_{\mathrm{GOE}}&=&\int_{-\infty}^{\infty}K(x,x)(f^2(x)+2f(x))dx-\frac{1}{2}\int_{-\infty}^{\infty}L(x,x)f'(x)dx\nonumber\\
&-&\frac{1}{2}\int_{-\infty}^{\infty}dxf'(x)\int_{-\infty}^{\infty}(1-2\chi_{(-\infty,x)}(y))K(x,y)f(y)dy\nonumber\\
&+&\frac{1}{4}\int_{-\infty}^{\infty}\mathrm{Ai}(x)B(x)(f^2(x)+2f(x))dx+\frac{1}{8}\int_{-\infty}^{\infty}B^2(x)f'(x)dx\nonumber\\
&-&\frac{1}{8}\int_{-\infty}^{\infty}dx B(x)f'(x)\int_{-\infty}^{\infty}(1-2\chi_{(-\infty,x)}(y))\mathrm{Ai}(y)f(y)dy+O(N^{-\frac{1}{3}}),
\eea
\bea\label{trgoe2}
\mathrm{Tr}\:T_{\mathrm{GOE}}^2&=&\int_{-\infty}^{\infty}\int_{-\infty}^{\infty}K^2(x,y)(f^2(x)+2f(x))(f^2(y)+2f(y))dxdy\nonumber\\
&-&\int_{-\infty}^{\infty}\int_{-\infty}^{\infty}K(x,y)L(x,y)f'(x)(f^2(y)+2f(y))dxdy\nonumber\\
&+&\frac{1}{2}\int_{-\infty}^{\infty}\int_{-\infty}^{\infty}K(x,y)\mathrm{Ai}(x)B(y)(f^2(x)+2f(x))(f^2(y)+2f(y))dxdy\nonumber\\
&+&\frac{1}{4}\int_{-\infty}^{\infty}\int_{-\infty}^{\infty}K(x,y)B(x)B(y)f'(x)(f^2(y)+2f(y))dxdy\nonumber\\
&+&\frac{1}{4}\int_{-\infty}^{\infty}\int_{-\infty}^{\infty}L(x,y)L(y,x)f'(x)f'(y)dxdy\nonumber\\
&-&\frac{1}{4}\int_{-\infty}^{\infty}\int_{-\infty}^{\infty}L(x,y)\mathrm{Ai}(y)B(x)f'(x)(f^2(y)+2f(y))dxdy\nonumber\\
&-&\frac{1}{8}\int_{-\infty}^{\infty}\int_{-\infty}^{\infty}L(x,y)B(x)B(y)f'(x)f'(y)dxdy\nonumber\\
&+&\frac{1}{16}\int_{-\infty}^{\infty}\int_{-\infty}^{\infty}\mathrm{Ai}(x)\mathrm{Ai}(y)B(x)B(y)(f^2(x)+2f(x))(f^2(y)+2f(y))dxdy\nonumber\\
&+&\frac{1}{16}\int_{-\infty}^{\infty}\int_{-\infty}^{\infty}\mathrm{Ai}(x)B(x)B^2(y)(f^2(x)+2f(x))f'(y)dxdy\nonumber\\
&+&\frac{1}{64}\int_{-\infty}^{\infty}\int_{-\infty}^{\infty}B^2(x)B^2(y)f'(x)f'(y)dxdy+R+O(N^{-\frac{1}{3}}),
\eea
where $R$ contains the terms of integrals with integrands consisting of $f$, $f$, $f'$ or $f$, $f'$, $f'$. These lead to at least power 3 of $\lambda$ in the following discussions, and they will not affect the final results, so we need not write down the detailed results of $R$.

Similarly as the previous subsection, we replace $f(x)$ with $-\lambda F(x)+\frac{\lambda^2}{2}F^2(x)$ and $f'(x)$ with $-\lambda F'(x)+\lambda^2 F(x)F'(x)$. Substituting these into (\ref{trgoe}) and (\ref{trgoe2}), and we finally find
\begin{small}
\bea
&&\log\det(I+T_{\mathrm{GOE}})\nonumber\\
&=&-\lambda\bigg\{2\int_{-\infty}^{\infty}K(x,x)F(x)dx-\frac{1}{2}\int_{-\infty}^{\infty}L(x,x)F'(x)dx
+\frac{1}{2}\int_{-\infty}^{\infty}\mathrm{Ai}(x)B(x)F(x)dx
+\frac{1}{8}\int_{-\infty}^{\infty}B^2(x)F'(x)dx\bigg\}\nonumber\\
&+&\frac{\lambda^2}{2}\bigg\{4\int_{-\infty}^{\infty}K(x,x)F^2(x)dx-\int_{-\infty}^{\infty}L(x,x)F(x)F'(x)dx\nonumber\\
&-&\int_{-\infty}^{\infty}dxF'(x)\int_{-\infty}^{\infty}(1-2\chi_{(-\infty,x)}(y))K(x,y)F(y)dy
+\int_{-\infty}^{\infty}\mathrm{Ai}(x)B(x)F^2(x)dx\nonumber\\
&+&\frac{1}{4}\int_{-\infty}^{\infty}B^2(x)F(x)F'(x)dx
-\frac{1}{4}\int_{-\infty}^{\infty}dx B(x)F'(x)\int_{-\infty}^{\infty}(1-2\chi_{(-\infty,x)}(y))\mathrm{Ai}(y)F(y)dy\nonumber\\
&-&4\int_{-\infty}^{\infty}\int_{-\infty}^{\infty}K^2(x,y)F(x)F(y)dxdy+2\int_{-\infty}^{\infty}\int_{-\infty}^{\infty}K(x,y)L(x,y)F'(x)F(y)dxdy\nonumber\\
&-&2\int_{-\infty}^{\infty}\int_{-\infty}^{\infty}K(x,y)\mathrm{Ai}(x)B(y)F(x)F(y)dxdy
-\frac{1}{2}\int_{-\infty}^{\infty}\int_{-\infty}^{\infty}K(x,y)B(x)B(y)F'(x)F(y)dxdy\nonumber\\
&-&\frac{1}{4}\int_{-\infty}^{\infty}\int_{-\infty}^{\infty}L(x,y)L(y,x)F'(x)F'(y)dxdy
+\frac{1}{2}\int_{-\infty}^{\infty}\int_{-\infty}^{\infty}L(x,y)\mathrm{Ai}(y)B(x)F'(x)F(y)dxdy\nonumber\\
&+&\frac{1}{8}\int_{-\infty}^{\infty}\int_{-\infty}^{\infty}L(x,y)B(x)B(y)F'(x)F'(y)dxdy
-\frac{1}{4}\int_{-\infty}^{\infty}\int_{-\infty}^{\infty}\mathrm{Ai}(x)\mathrm{Ai}(y)B(x)B(y)F(x)F(y)dxdy\nonumber\\
&-&\frac{1}{8}\int_{-\infty}^{\infty}\int_{-\infty}^{\infty}\mathrm{Ai}(x)B(x)B^2(y)F(x)F'(y)dxdy
-\frac{1}{64}\int_{-\infty}^{\infty}\int_{-\infty}^{\infty}B^2(x)B^2(y)F'(x)F'(y)dxdy\bigg\}+O(N^{-\frac{1}{3}}).\nonumber
\eea
\end{small}
In view of $\log\: G_{N}^{(1)}(f)=\frac{1}{2}\log\det(I+T_{\mathrm{GOE}})$, we get the following theorem.
\begin{theorem}
Let $\mu_{N}^{(\mathrm{GOE})}$ and $\mathcal{V}_{N}^{(\mathrm{GOE})}$ be the mean and variance of the linear statistics\\
$\sum_{j=1}^{N}F\left(2^{\frac{1}{2}}N^{\frac{1}{6}}\left(x_{j}-\sqrt{2N}\right)\right)$. Then as $N\rightarrow\infty$,
\begin{small}
$$
\mu_{N}^{(\mathrm{GOE})}=\mu_{N}^{(\mathrm{GUE})}-\frac{1}{4}\int_{-\infty}^{\infty}L(x,x)F'(x)dx
+\frac{1}{4}\int_{-\infty}^{\infty}\mathrm{Ai}(x)B(x)F(x)dx+\frac{1}{16}\int_{-\infty}^{\infty}B^2(x)F'(x)dx+O(N^{-\frac{1}{3}}),
$$
\end{small}
\begin{small}
\bea
\mathcal{V}_{N}^{(\mathrm{GOE})}&=&2\mathcal{V}_{N}^{(\mathrm{GUE})}-\frac{1}{2}\int_{-\infty}^{\infty}L(x,x)F(x)F'(x)dx
-\frac{1}{2}\int_{-\infty}^{\infty}dxF'(x)\int_{-\infty}^{\infty}(1-2\chi_{(-\infty,x)}(y))K(x,y)F(y)dy\nonumber\\
&+&\frac{1}{2}\int_{-\infty}^{\infty}\mathrm{Ai}(x)B(x)F^2(x)dx
-\frac{1}{8}\int_{-\infty}^{\infty}dx B(x)F'(x)\int_{-\infty}^{\infty}(1-2\chi_{(-\infty,x)}(y))\mathrm{Ai}(y)F(y)dy\nonumber\\
&+&\frac{1}{8}\int_{-\infty}^{\infty}B^2(x)F(x)F'(x)dx-\frac{1}{16}\int_{-\infty}^{\infty}\int_{-\infty}^{\infty}\mathrm{Ai}(x)B(x)B^2(y)F(x)F'(y)dxdy\nonumber\\
&-&\int_{-\infty}^{\infty}\int_{-\infty}^{\infty}K(x,y)\mathrm{Ai}(x)B(y)F(x)F(y)dxdy
-\frac{1}{4}\int_{-\infty}^{\infty}\int_{-\infty}^{\infty}K(x,y)B(x)B(y)F'(x)F(y)dxdy\nonumber\\
&-&\frac{1}{8}\int_{-\infty}^{\infty}\int_{-\infty}^{\infty}L(x,y)L(y,x)F'(x)F'(y)dxdy
+\frac{1}{4}\int_{-\infty}^{\infty}\int_{-\infty}^{\infty}L(x,y)\mathrm{Ai}(y)B(x)F'(x)F(y)dxdy\nonumber\\
&+&\frac{1}{16}\int_{-\infty}^{\infty}\int_{-\infty}^{\infty}L(x,y)B(x)B(y)F'(x)F'(y)dxdy
-\frac{1}{8}\int_{-\infty}^{\infty}\int_{-\infty}^{\infty}\mathrm{Ai}(x)\mathrm{Ai}(y)B(x)B(y)F(x)F(y)dxdy\nonumber\\
&+&\int_{-\infty}^{\infty}\int_{-\infty}^{\infty}K(x,y)L(x,y)F'(x)F(y)dxdy
-\frac{1}{128}\int_{-\infty}^{\infty}\int_{-\infty}^{\infty}B^2(x)B^2(y)F'(x)F'(y)dxdy+O(N^{-\frac{1}{3}}),\nonumber
\eea
\end{small}
where $\mu_{N}^{(\mathrm{GUE})}$ and $\mathcal{V}_{N}^{(\mathrm{GUE})}$ are given by (\ref{guem}) and (\ref{guev}), respectively.
\end{theorem}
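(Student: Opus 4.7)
The plan is to follow the strategy already carried out for GUE and GSE, now applied to the Fredholm determinant identity $[G_N^{(1)}(f)]^2 = \det(I+T_{\mathrm{GOE}})$ from Lemma \ref{le3}. Each of the six summands of $T_{\mathrm{GOE}}$ vanishes as $\lambda \to 0$ since $f = -\lambda F + \frac{\lambda^2}{2}F^2 + O(\lambda^3)$, $f^2 + 2f = -2\lambda F + 2\lambda^2 F^2 + O(\lambda^3)$, and $f' = -\lambda F' + \lambda^2 F F' + O(\lambda^3)$; hence $T_{\mathrm{GOE}} = O(\lambda)$ and the expansion $\log\det(I+T_{\mathrm{GOE}}) = \mathrm{Tr}\,T_{\mathrm{GOE}} - \frac{1}{2}\mathrm{Tr}\,T_{\mathrm{GOE}}^2 + O(\lambda^3)$ captures all terms needed for the mean (coefficient of $\lambda$) and variance (coefficient of $\lambda^2$).

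I would first compute $\mathrm{Tr}\,T_{\mathrm{GOE}}$ term by term. The kernel pieces $K_N^{(2)}(f^2+2f)$ and $K_N^{(2)}\varepsilon f'$ are handled by the rescaling $x \mapsto \sqrt{2N} + 2^{-1/2}N^{-1/6}x$ together with Theorem \ref{gue}, which replaces $K_N^{(2)}$ by the Airy kernel $K(x,y)$ up to $O(N^{-1/3})$. The operator $\varepsilon$ contributes $\frac{1}{2}(1 - 2\chi_{(-\infty,x)}(y))$, and after integrating in $y$ this produces the function $L(x,y)$ defined in (\ref{lxy}), giving the $\int L(x,x)F'(x)dx$ term on the diagonal. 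For the rank-one pieces $(\varepsilon\varphi_N)\otimes\varphi_{N-1}$, $(\varepsilon\varphi_N)\otimes(\varepsilon\varphi_{N-1})$, and $((\varepsilon\varphi_N)\otimes\varphi_{N-1})f\varepsilon f'$, I apply Theorem \ref{goe}, whose $N^{-1/12}$ and $N^{-1/4}$ factors cancel the $\sqrt{N/2}$ prefactor exactly and yield integrals of $\mathrm{Ai}(x)$ and $B(x)$. The result is the formula (\ref{trgoe}).

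Next I would expand $\mathrm{Tr}\,T_{\mathrm{GOE}}^2$ into the thirty-six cross-traces coming from the six summands of $T_{\mathrm{GOE}}$. Each trace reduces to either a double integral of $K_N^{(2)}(x,y)K_N^{(2)}(y,x)$ times multiplicative or $\varepsilon$-type factors, or $K_N^{(2)}$ paired with a rank-one factor, or a product of two rank-one factors; after the same rescaling and Theorems \ref{gue} and \ref{goe}, all of these reduce to the Airy-kernel double integrals listed in (\ref{trgoe2}). All traces involving the third term $K_N^{(2)} f \varepsilon f'$ or the sixth term $((\varepsilon\varphi_N)\otimes\varphi_{N-1})f\varepsilon f'$ produce factors already cubic in $f$ and are absorbed into the remainder $R = O(\lambda^3)$. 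Finally, I substitute the $\lambda$-expansions of $f$, $f^2+2f$, and $f'$ into (\ref{trgoe}) and (\ref{trgoe2}), form $\frac{1}{2}\mathrm{Tr}\,T_{\mathrm{GOE}} - \frac{1}{4}\mathrm{Tr}\,T_{\mathrm{GOE}}^2$ since $\log G_N^{(1)}(f) = \frac{1}{2}\log\det(I+T_{\mathrm{GOE}})$, and read off $\mu_N^{(\mathrm{GOE})}$ and $\mathcal{V}_N^{(\mathrm{GOE})}$ from the coefficients of $-\lambda$ and $\lambda^2/2$ via the cumulant identity $\log\mathbb{E}[e^{-\lambda S}] = -\lambda\mathbb{E}[S] + \frac{\lambda^2}{2}\mathrm{Var}(S)+\cdots$. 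The doubling visible in $2\mathcal{V}_N^{(\mathrm{GUE})}$ arises from the explicit factor $2$ in $f^2 + 2f$.

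The main obstacle will be the bookkeeping of the thirty-six traces together with careful propagation of the $\chi_{(-\infty,x)}$ indicator coming from $\varepsilon$; in particular, the pairings between $K_N^{(2)}(f^2+2f)$ and $K_N^{(2)}\varepsilon f'$ yield the $K(x,y)L(x,y)F'(x)F(y)$ term, while the pairings with $(\varepsilon\varphi_N)\otimes\varphi_{N-1}(f^2+2f)$ produce the $K(x,y)\mathrm{Ai}(x)B(y)$ piece, and correct signs and symmetry under trace cyclicity must be tracked with care. A secondary subtlety is checking that each asymptotic replacement contributes an error compatible with the claimed $O(N^{-1/3})$: the pointwise Airy approximations must be integrated against the Schwartz-class $F$ and $F'$, and the $O(N^{-3/4})$ and $O(N^{-11/12})$ errors from Theorem \ref{goe} must combine with the $\sqrt{N/2}$ prefactor to leave an $O(N^{-1/3})$ remainder uniformly.
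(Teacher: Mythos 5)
Your proposal is correct and follows essentially the same route as the paper: expand $\log\det(I+T_{\mathrm{GOE}})$ as $\mathrm{Tr}\,T_{\mathrm{GOE}}-\frac{1}{2}\mathrm{Tr}\,T_{\mathrm{GOE}}^{2}+O(\lambda^{3})$, apply the edge rescaling together with Theorems \ref{gue} and \ref{goe} to reduce each trace to Airy-kernel integrals (the paper's (\ref{trgoe}) and (\ref{trgoe2})), discard the cubic-in-$f$ cross-traces into $R$, and read off the mean and variance from the $\lambda$ and $\lambda^{2}$ coefficients after halving by $\log G_{N}^{(1)}=\frac{1}{2}\log\det(I+T_{\mathrm{GOE}})$. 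Your observations about the origin of the factor $2\mathcal{V}_{N}^{(\mathrm{GUE})}$ and the role of the $f\varepsilon f'$ terms (quadratic in $\mathrm{Tr}\,T_{\mathrm{GOE}}$, hence retained there, but cubic in $\mathrm{Tr}\,T_{\mathrm{GOE}}^{2}$, hence absorbed into $R$) match the paper's treatment.
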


\section{Laguerre Ensembles}
\subsection{Laguerre Unitary Ensemble}
In the Laguerre case, $w(x)=x^{\alpha}\mathrm{e}^{-x},\;\alpha>-1,\; x\in \mathbb{R}^{+}$. From Lemma \ref{lemma1} we have
$$
\varphi_{j}(x)=\sqrt{\frac{\Gamma(j+1)}{\Gamma(j+\alpha+1)}}L_{j}^{(\alpha)}(x)x^{\frac{\alpha}{2}}\mathrm{e}^{-\frac{x}{2}},\;\;j=0,1,2,\ldots,
$$
where $L_{j}^{(\alpha)}(x)$ are the Laguerre polynomials of degree $j$.\\
We also have the following expansion formula,
\be\label{logdet2}
\log G_{N}^{(2)}(f)=\log\det\left(I+K_{N}^{(2)}f\right)=\mathrm{Tr}K_{N}^{(2)}f-\frac{1}{2}\mathrm{Tr}\left(K_{N}^{(2)}f\right)^{2}+\cdots.
\ee
\begin{theorem}\label{lue}
As $N\rightarrow\infty$,
$$
2^{\frac{4}{3}}N^{\frac{1}{3}}K_{N}^{(2)}\big(4N+2\alpha+2+2^{\frac{4}{3}}N^{\frac{1}{3}}x,4N+2\alpha+2+2^{\frac{4}{3}}N^{\frac{1}{3}}y\big)
=K(x,y)+O(N^{-\frac{1}{3}}),
$$
where $K(x,y)$ is the Airy kernel (\ref{airy1}).
\end{theorem}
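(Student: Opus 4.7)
The plan is to mirror the argument used for Theorem \ref{gue} in the Gaussian case, substituting the Plancherel–Rotach asymptotics for Laguerre polynomials in place of those for Hermite polynomials. The starting point will be the Christoffel–Darboux formula for the orthonormal Laguerre functions $\varphi_j(x)=\sqrt{\Gamma(j+1)/\Gamma(j+\alpha+1)}\,L_j^{(\alpha)}(x)x^{\alpha/2}\mathrm{e}^{-x/2}$, which expresses
$$
K_{N}^{(2)}(x,y)=-\sqrt{N(N+\alpha)}\,\frac{\varphi_{N}(x)\varphi_{N-1}(y)-\varphi_{N}(y)\varphi_{N-1}(x)}{x-y}.
$$
This reduces the asymptotic problem for the kernel to uniform control over just two consecutive orthonormal Laguerre functions at the soft edge $x\approx 4N+2\alpha+2$.

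Next I would invoke the soft-edge asymptotic for Laguerre polynomials (of the type found in Szeg\H{o} and refined by Vall\'ee–Pouchet, Deift–Kriecherbauer–McLaughlin–Venakides–Zhou, etc.): with $x = 4N+2\alpha+2+2^{4/3}N^{1/3}t$,
$$
\mathrm{e}^{-x/2}x^{\alpha/2}L_{N}^{(\alpha)}(x) = (-1)^{N}\,2^{\alpha+1/3}N^{-1/3}\Gamma(N+\alpha+1)^{1/2}\Gamma(N+1)^{-1/2}\bigl[\mathrm{Ai}(t)+O(N^{-2/3})\bigr],
$$
together with the analogous expansion for $L_{N-1}^{(\alpha)}$ (which, after the shift $N\to N-1$ inside $N^{1/3}$, introduces a derivative term through Taylor expansion: $\mathrm{Ai}(t+O(N^{-1/3}))=\mathrm{Ai}(t)+O(N^{-1/3})\mathrm{Ai}'(t)$). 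Substituting these into the Christoffel–Darboux identity and using Stirling's formula to simplify the prefactor $\sqrt{N(N+\alpha)}\,\Gamma(N+\alpha+1)^{-1/2}\Gamma(N+1)^{-1/2}$, the $\Gamma$-factors cancel against the normalization of $\varphi_N\varphi_{N-1}$.

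Under the rescaling $x\mapsto 4N+2\alpha+2+2^{4/3}N^{1/3}x$, $y\mapsto 4N+2\alpha+2+2^{4/3}N^{1/3}y$, the denominator becomes $2^{4/3}N^{1/3}(x-y)$, which exactly cancels the overall Jacobian factor $2^{4/3}N^{1/3}$ in front of the kernel, leaving a finite limit. The numerator, once the shift from $N$ to $N-1$ is expanded, produces $\mathrm{Ai}(x)\mathrm{Ai}'(y)-\mathrm{Ai}(y)\mathrm{Ai}'(x)$ as the leading term, yielding $K(x,y)$ as defined in (\ref{airy1}).

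The main obstacle is bookkeeping: tracking the exact powers of $N$ and $2$ through the Stirling expansion of the ratio of Gamma functions and through the shift $N\to N-1$ in the argument of the Airy function. The derivative $\mathrm{Ai}'$ in the numerator of the Airy kernel arises precisely because the argument $2^{4/3}(N-1)^{1/3}=2^{4/3}N^{1/3}(1-\tfrac{1}{3N}+\cdots)$ differs from $2^{4/3}N^{1/3}$ by a correction of order $N^{-2/3}$, which, when multiplied by the prefactor, contributes at order $N^{0}$; this must be handled carefully but causes no difficulty beyond the Taylor expansion of $\mathrm{Ai}$. The error term $O(N^{-1/3})$ is inherited from the Plancherel–Rotach expansion and from the next-order Stirling correction; with these two inputs the proof reduces to direct substitution, exactly as in the Gaussian case.
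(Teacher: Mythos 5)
Your proposal follows essentially the same route as the paper: the Christoffel--Darboux formula for the Laguerre kernel combined with Szeg\H{o}'s soft-edge Plancherel--Rotach asymptotics and Stirling's formula for the Gamma-ratio prefactor, with the $\mathrm{Ai}'$ term in the numerator emerging from the $O(N^{-1/3})$ mismatch between the expansions of $L_N^{(\alpha)}$ and $L_{N-1}^{(\alpha)}$. One small correction to your bookkeeping: that $O(N^{-1/3})$ shift in the Airy argument is driven by the change of center from $4N+2\alpha+2$ to $4(N-1)+2\alpha+2$ (the offset $4$ divided by the scale $2^{4/3}N^{1/3}$ gives $2^{2/3}N^{-1/3}$), whereas the replacement $N^{1/3}\to(N-1)^{1/3}$ in the scale factor that you single out contributes only at relative order $N^{-1}$ and is negligible.
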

\begin{proof}
From the asymptotic formula of Laguerre polynomials \cite{Szego} (page 201),
$$
\mathrm{e}^{-\frac{x}{2}}L_{n}^{(\alpha)}(x)=(-1)^n2^{-\al-\frac{1}{3}}n^{-\frac{1}{3}}
\mathrm{Ai}(-3^{-\frac{1}{3}}t)+O\left(n^{-1}\right),
$$
where
$$
x=4n+2\al+2-2\left(\frac{2n}{3}\right)^{\frac{1}{3}}t,
$$
we have
\be\label{asy1}
\mathrm{e}^{-\frac{x}{2}}L_{n}^{(\alpha)}(x)=(-1)^n2^{-\al-\frac{1}{3}}n^{-\frac{1}{3}}
\mathrm{Ai}\big(2^{-\frac{4}{3}}n^{-\frac{1}{3}}(x-4n-2\alpha-2)\big)+O\left(n^{-1}\right).
\ee
Using the Christoffel-Darboux formula,
$$
K_{N}^{(2)}(x,y)=-\frac{N!}{\Gamma(N+\alpha)}\frac{x^{\frac{\alpha}{2}}\mathrm{e}^{-\frac{x}{2}}L_{N}^{(\alpha)}(x)
y^{\frac{\alpha}{2}}\mathrm{e}^{-\frac{y}{2}}L_{N-1}^{(\alpha)}(y)
-y^{\frac{\alpha}{2}}\mathrm{e}^{-\frac{y}{2}}L_{N}^{(\alpha)}(y)
x^{\frac{\alpha}{2}}\mathrm{e}^{-\frac{x}{2}}L_{N-1}^{(\alpha)}(x)}{x-y}.
$$
Replacing the variables $x$ by $4N+2\alpha+2+2^{\frac{4}{3}}N^{\frac{1}{3}}x$ and $y$ by $4N+2\alpha+2+2^{\frac{4}{3}}N^{\frac{1}{3}}y$, and using (\ref{asy1}) together with Stirling's formula, we obtain the desired result after some elaborate computations.
\end{proof}
\noindent $\mathbf{Remark.}$ The above result was obtained by Forrester \cite{Forrester}, which, however, did not show the order term as well.

We now use Theorem \ref{lue} to compute (\ref{logdet2}) term by term as $N\rightarrow\infty$. We replace $f(x)$ by\\ $f\big(2^{-\frac{4}{3}}N^{-\frac{1}{3}}(x-4N-2\alpha-2)\big)$ in the following computations. The first term reads,
\bea
\mathrm{Tr}K_{N}^{(2)}f
&=&\int_{0}^{\infty}K_{N}^{(2)}(x,x)f\big(2^{-\frac{4}{3}}N^{-\frac{1}{3}}(x-4N-2\alpha-2)\big)dx\nonumber\\
&=&\int_{-2^{-\frac{4}{3}}N^{-\frac{1}{3}}(4N+2\alpha+2)}^{\infty}2^{\frac{4}{3}}N^{\frac{1}{3}}K_{N}^{(2)}\big(4N+2\alpha+2+2^{\frac{4}{3}}N^{\frac{1}{3}}x,
4N+2\alpha+2+2^{\frac{4}{3}}N^{\frac{1}{3}}x\big)f(x)dx\nonumber\\
&=&\int_{-\infty}^{\infty}K(x,x)f(x)dx+O(N^{-\frac{1}{3}}).\nonumber
\eea
The second term,
\bea
\mathrm{Tr}\left(K_{N}^{(2)}f\right)^{2}
&=&\int_{0}^{\infty}\int_{0}^{\infty}K_{N}^{(2)}(x,y)f\big(2^{-\frac{4}{3}}N^{-\frac{1}{3}}(y-4N-2\alpha-2)\big)
K_{N}^{(2)}(y,x)\nonumber\\
&&f\big(2^{-\frac{4}{3}}N^{-\frac{1}{3}}(x-4N-2\alpha-2)\big)dxdy\nonumber\\
&=&\int_{-\infty}^{\infty}\int_{-\infty}^{\infty}K^{2}(x,y)f(x)f(y)dx dy+O(N^{-\frac{1}{3}}).\nonumber
\eea
It follows that
$$
\log G_{N}^{(2)}(f)=\int_{-\infty}^{\infty}K(x,x)f(x)dx-\frac{1}{2}\int_{-\infty}^{\infty}\int_{-\infty}^{\infty}K^{2}(x,y)f(x)f(y)dx dy+\cdots+O(N^{-\frac{1}{3}}).
$$

We proceed to study the mean and variance of the scaled linear statistics $\sum_{j=1}^{N}F\big(2^{-\frac{4}{3}}N^{-\frac{1}{3}}(x-4N-2\alpha-2)\big)$. From the relation (\ref{fx}), we find
\bea
\log G_{N}^{(2)}(f)&=&-\lambda\int_{-\infty}^{\infty}K(x,x)F(x)dx\nonumber\\
&+&\frac{\lambda^{2}}{2}\bigg[\int_{-\infty}^{\infty}K(x,x)F^{2}(x)dx-\int_{-\infty}^{\infty}\int_{-\infty}^{\infty}K^{2}(x,y)F(x)F(y)dx dy\bigg]+\cdots+O(N^{-\frac{1}{3}}).\nonumber
\eea
Then we have the following theorem.
\begin{theorem}
Let $\mu_{N}^{(\mathrm{LUE})}$ and $\mathcal{V}_{N}^{(\mathrm{LUE})}$ be the mean and variance of the linear statistics\\
$\sum_{j=1}^{N}F\big(2^{-\frac{4}{3}}N^{-\frac{1}{3}}(x-4N-2\alpha-2)\big)$. Then as $N\rightarrow\infty$,
\be
\mu_{N}^{(\mathrm{LUE})}=\int_{-\infty}^{\infty}K(x,x)F(x)dx+O(N^{-\frac{1}{3}}),\label{luem}
\ee
\be
\mathcal{V}_{N}^{(\mathrm{LUE})}=\int_{-\infty}^{\infty}K(x,x)F^{2}(x)dx-\int_{-\infty}^{\infty}\int_{-\infty}^{\infty}K^{2}(x,y)F(x)F(y)dx dy+O(N^{-\frac{1}{3}}),\label{luev}
\ee
where $K(x,y)$ is the Airy kernel defined by (\ref{airy1}).
\end{theorem}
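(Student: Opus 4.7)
The plan is to exploit the identity $\log G_{N}^{(2)}(f) = \log\det(I + K_{N}^{(2)}f)$ together with the series expansion
$$\log\det(I+K_{N}^{(2)}f) = \sum_{k \geq 1} \frac{(-1)^{k+1}}{k}\mathrm{Tr}\left(K_{N}^{(2)}f\right)^{k},$$
which holds in the trace-class setting for sufficiently small $f$. Since $G_{N}^{(2)}(f)$ is by construction the moment-generating function $\mathbb{E}\left(e^{-\lambda\sum_{j} F(x_j)}\right)$ of the linear statistic, its logarithm is the cumulant-generating function, so when written as a power series in $\lambda$ we have $\log G_{N}^{(2)}(f) = -\lambda\,\mu_{N}^{(\mathrm{LUE})} + \frac{\lambda^{2}}{2}\mathcal{V}_{N}^{(\mathrm{LUE})} + O(\lambda^{3})$. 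The mean and variance are thus recovered simply by reading off the coefficients of $\lambda$ and $\lambda^{2}$.

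First I would substitute $f(x) = e^{-\lambda F(x)} - 1 = -\lambda F(x) + \frac{\lambda^{2}}{2}F^{2}(x) + O(\lambda^{3})$ and rescale the integration variable according to $x \mapsto 4N + 2\alpha + 2 + 2^{4/3} N^{1/3} x$, as already done in the preceding discussion. Applying Theorem~\ref{lue} to the first two traces then gives, as $N\to\infty$,
\begin{align*}
\mathrm{Tr}\,K_{N}^{(2)}f &= \int_{-\infty}^{\infty} K(x,x) f(x)\,dx + O(N^{-1/3}),\\
\mathrm{Tr}\left(K_{N}^{(2)}f\right)^{2} &= \int_{-\infty}^{\infty}\int_{-\infty}^{\infty} K^{2}(x,y) f(x) f(y)\,dx\,dy + O(N^{-1/3}),
\end{align*}
while higher-order traces contribute only to $O(\lambda^{3})$ and therefore do not affect the mean and variance. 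Combining and substituting the expansion of $f$ in $\lambda$ yields precisely the expansion displayed immediately before the theorem, and matching powers of $\lambda$ delivers (\ref{luem}) and (\ref{luev}).

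The only genuinely delicate point is controlling the error terms: one must justify that the $O(N^{-1/3})$ estimate in Theorem~\ref{lue} can be integrated against $F$ and $F^{2}$ to yield a uniform $O(N^{-1/3})$ remainder in each trace, and that the lower limit of integration (which is $-2^{-4/3}N^{-1/3}(4N+2\alpha+2) \to -\infty$) can be replaced by $-\infty$ with negligible error. This is where the Schwartz hypothesis on $f$ (hence on $F$) enters, since it provides the rapid decay needed to make the off-edge tail contributions exponentially small and to justify term-by-term truncation of the logarithmic expansion. Once this is in place, the rest of the argument is an exact mirror of the GUE derivation in Section 2.1, and the theorem follows.
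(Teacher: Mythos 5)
Your proposal is correct and follows essentially the same route as the paper: expand $\log\det(I+K_{N}^{(2)}f)$ in traces, apply Theorem~\ref{lue} to reduce the first two traces to Airy-kernel integrals, substitute $f=-\lambda F+\tfrac{\lambda^{2}}{2}F^{2}+O(\lambda^{3})$, and read off the cumulants. The paper carries out exactly these steps (with the same rescaling and the same appeal to the Schwartz-class hypothesis for the tail and error control), so there is nothing further to add.
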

\noindent $\mathbf{Remark.}$ Comparing Sec. 2.1 and Sec. 3.1, we see that the large $N$ behavior of the MGF of a suitably scaled linear statistics in GUE are the same with a suitably scaled linear statistics in LUE. It follows that as $N\rightarrow\infty$, the mean and variance of the corresponding linear statistics are also the same in GUE and LUE.

\subsection{Laguerre Symplectic Ensemble}
For the Laguerre symplectic ensemble, $w(x)=x^{\alpha}\mathrm{e}^{-x},\;\;\alpha>0,\;\;x\in \mathbb{R}^+$. Following \cite{Min201601}, we let
$$
\psi_{2j+1}(x):=\frac{1}{\sqrt{2}}\:x\:\varphi_{2j+1}^{(\alpha-1)}(x),\;\;\;
\psi_{2j}(x):=-\frac{1}{\sqrt{2}}\:\varepsilon\varphi_{2j+1}^{(\alpha-1)}(x),\;\;\;j=0,1,2,\ldots,
$$
where $\varphi_{j}^{(\alpha-1)}(x)$ is given by
\be\label{pj}
\varphi_{j}^{(\alpha-1)}(x):=\sqrt{\frac{\Gamma(j+1)}{\Gamma(j+\alpha)}}
L_{j}^{(\alpha-1)}(x)x^{\frac{\alpha}{2}-1}\mathrm{e}^{-\frac{x}{2}},
\;\;j=0,1,2,\ldots.
\ee
It follows that $M^{(4)}$ is the direct sum of the $N$ copies of
$
\begin{pmatrix}
0&1\\
-1&0
\end{pmatrix}
$
and $(M^{(4)})^{-1}=-M^{(4)}$.
From Lemma \ref{le2}, we obtain the following results \cite{Min201601}.
\begin{theorem}
For the Laguerre symplectic ensemble,
$$
\left[G_{N}^{(4)}(f)\right]^{2}=\det(I+T_{\mathrm{LSE}}),
$$
where
\bea
T_{\mathrm{LSE}}:&=&S_{N}^{(4)}f-\frac{1}{2}S_{N}^{(4)}\varepsilon f'-\sqrt{\Big(N+\frac{1}{2}\Big)\Big(N+\frac{\alpha}{2}\Big)}
\Big(\varepsilon\varphi_{2N+1}^{(\alpha-1)}\Big)\otimes\varphi_{2N}^{(\alpha-1)}f\nonumber\\
&-&\frac{1}{2}\sqrt{\Big(N+\frac{1}{2}\Big)\Big(N+\frac{\alpha}{2}\Big)}
\Big(\varepsilon\varphi_{2N+1}^{(\alpha-1)}\Big)\otimes\Big(\varepsilon\varphi_{2N}^{(\alpha-1)}\Big)f',
\eea
and
$$
S_{N}^{(4)}(x,y)=\sum_{j=0}^{2N}x\:\varphi_{j}^{(\alpha-1)}(x)\varphi_{j}^{(\alpha-1)}(y).
$$
\end{theorem}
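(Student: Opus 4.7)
The plan is to apply Lemma \ref{le2} directly to the Laguerre weight $w(x)=x^{\alpha}\mathrm{e}^{-x}$ with the choice $\psi_{2j+1}=\tfrac{1}{\sqrt{2}}x\varphi_{2j+1}^{(\alpha-1)}$ and $\psi_{2j}=-\tfrac{1}{\sqrt{2}}\varepsilon\varphi_{2j+1}^{(\alpha-1)}$. Both building blocks have the expected structure relative to $\sqrt{w(x)}=x^{\alpha/2}\mathrm{e}^{-x/2}$. Using the key identity $(\varepsilon g)'=g$, we immediately get $\psi_{2k}'=-\tfrac{1}{\sqrt{2}}\varphi_{2k+1}^{(\alpha-1)}$, while $\psi_{2k+1}'=\tfrac{1}{\sqrt{2}}(x\varphi_{2k+1}^{(\alpha-1)})'$; these are the only two building blocks we ever need below.

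The first real step is to verify that $M^{(4)}$ has the claimed block-diagonal form. For the $(2i+1,2k)$ block, one applies integration by parts to the term $\int\psi_{2i+1}'\psi_{2k}dx$; the boundary contribution at $x=0$ vanishes because $x\varphi_{2i+1}^{(\alpha-1)}\sim x^{(\alpha+1)/2}\to 0$ for $\alpha>0$, and the one at $x=\infty$ vanishes by exponential decay. Combining with the non-integrated piece, the entry collapses to $\int_0^\infty x\varphi_{2i+1}^{(\alpha-1)}\varphi_{2k+1}^{(\alpha-1)}dx$, which by the identity $L_n^{(\alpha-1)}=L_n^{(\alpha)}-L_{n-1}^{(\alpha)}$, the orthonormality of the standard Laguerre system, and the parity obstruction on odd–odd indices in the tridiagonal representation of multiplication by $x$, produces exactly the Kronecker delta $\delta_{ik}$. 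The $(2i,2k)$ and $(2i+1,2k+1)$ diagonal blocks vanish because the antisymmetric bracket $\psi_j\psi_k'-\psi_j'\psi_k$ is a total derivative whose endpoint contributions vanish under the same $\alpha>0$ and exponential-decay arguments.

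Since each $2\times 2$ block is $J=\begin{pmatrix}0&1\\-1&0\end{pmatrix}$ and $J^{-1}=-J$, Lemma \ref{le2} gives
\begin{equation*}
K_N^{(4)}(x,y)=\sum_{j=0}^{N-1}\bigl[\psi_{2j+1}(x)\psi_{2j}'(y)-\psi_{2j}(x)\psi_{2j+1}'(y)\bigr].
\end{equation*}
Substituting the explicit expressions, one summand is $\tfrac12 x\varphi_{2j+1}^{(\alpha-1)}(x)\varphi_{2j+1}^{(\alpha-1)}(y)$ and the other is $\tfrac12\varepsilon\varphi_{2j+1}^{(\alpha-1)}(x)(y\varphi_{2j+1}^{(\alpha-1)}(y))'$. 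A Christoffel--Darboux-style summation, combining the three-term recurrence for $L_n^{(\alpha-1)}$ with the identity $L_n^{(\alpha-1)}=L_n^{(\alpha)}-L_{n-1}^{(\alpha)}$, folds the odd-only sum into the full sum $\tfrac12 S_N^{(4)}(x,y)=\tfrac12\sum_{j=0}^{2N}x\varphi_j^{(\alpha-1)}(x)\varphi_j^{(\alpha-1)}(y)$, with the missing even-indexed contributions supplied by telescoping. This produces exactly one uncancelled endpoint residual, the rank-one term $\sqrt{(N+\tfrac12)(N+\tfrac{\alpha}{2})}\,\varepsilon\varphi_{2N+1}^{(\alpha-1)}(x)\varphi_{2N}^{(\alpha-1)}(y)$, at the truncation $j=2N-1$. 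Feeding $2K_N^{(4)}f-K_N^{(4)}\varepsilon f'$ into Lemma \ref{le2} and using $(\varepsilon g)'=g$ on the operator composition $\varphi_{2N}^{(\alpha-1)}\varepsilon f'$ converts the last term into $-\tfrac12\varepsilon\varphi_{2N}^{(\alpha-1)}f'$, assembling $T_{\mathrm{LSE}}$ exactly as stated.

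The main obstacle is the Christoffel--Darboux reduction over odd indices only. Since $S_N^{(4)}$ contains both even and odd indices up to $2N$, one must check that the Laguerre identities precisely fill in the even-indexed contributions via telescoping and deliver the exact endpoint coefficient $\sqrt{(N+\tfrac12)(N+\tfrac{\alpha}{2})}=\tfrac12\sqrt{(2N+1)(2N+\alpha)}$, which is the product of neighboring off-diagonal three-term-recurrence coefficients at index $2N$ of the $\varphi_j^{(\alpha-1)}$ system. Tracking these coefficients carefully through the derivative identity $\tfrac{d}{dx}L_n^{(\alpha)}=-L_{n-1}^{(\alpha+1)}$ and the recurrence is where the calculation is most delicate; once that telescoping is correctly organized the rest is bookkeeping.
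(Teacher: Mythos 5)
The paper does not prove this theorem in situ --- it is imported from \cite{Min201601} --- but your overall architecture (apply Lemma \ref{le2}, verify the block structure of $M^{(4)}$, reduce $2K_N^{(4)}$ to $S_N^{(4)}$ plus a rank-one boundary term, then rewrite the $\varepsilon$-compositions) is the right one. Several of your individual justifications, however, are wrong. The claim that the $(2i,2k)$ and $(2i+1,2k+1)$ blocks of $M^{(4)}$ vanish ``because the antisymmetric bracket $\psi_j\psi_k'-\psi_j'\psi_k$ is a total derivative'' is false: the exact combination is the symmetric one, $(\psi_j\psi_k)'=\psi_j\psi_k'+\psi_j'\psi_k$, and the Wronskian-type bracket is not a total derivative. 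The mechanism that actually makes everything work is the banded differentiation formula
$$
\frac{d}{dx}\big(x\,\varphi_n^{(\alpha-1)}(x)\big)=a_n\,\varphi_{n+1}^{(\alpha-1)}(x)-a_{n-1}\,\varphi_{n-1}^{(\alpha-1)}(x),\qquad a_n:=\tfrac12\sqrt{(n+1)(n+\alpha)},
$$
together with $\int_0^\infty x\,\varphi_m^{(\alpha-1)}\varphi_n^{(\alpha-1)}\,dx=\delta_{mn}$, which is plain orthonormality with respect to $x^{\alpha-1}\mathrm{e}^{-x}$ (no appeal to $L_n^{(\alpha-1)}=L_n^{(\alpha)}-L_{n-1}^{(\alpha)}$ or to a ``parity obstruction for multiplication by $x$'' is needed for the off-diagonal entry). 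After integration by parts the odd--odd entry becomes $\int x\varphi_{2i+1}(a_{2k+1}\varphi_{2k+2}-a_{2k}\varphi_{2k})\,dx$, which vanishes because odd and even indices never coincide; the even--even entry reduces, via $\varepsilon(x,y)=-\varepsilon(y,x)$, to $-\int\varphi_{2i+1}\,\varepsilon\varphi_{2k+1}\,dx$, which vanishes because $\varepsilon\varphi_{2k+1}^{(\alpha-1)}$ is a linear combination of $x\varphi_{2m}^{(\alpha-1)}$, $m\le k$ (apply $\varepsilon$ to the displayed formula and solve recursively, using $a_{-1}=0$ and the vanishing of $x\varphi_n^{(\alpha-1)}$ at $0$ and $\infty$ for $\alpha>0$).

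That last fact is also what you need, and do not supply, to invoke Lemma \ref{le2} at all: the lemma requires $\psi_{2j}$ to be a polynomial of degree $2j$ times $\sqrt{w}$, and for $\psi_{2j}=-\frac{1}{\sqrt2}\varepsilon\varphi_{2j+1}^{(\alpha-1)}$ this is not ``expected structure'' but a nontrivial consequence of the same identity. Once the displayed formula is in hand, the Christoffel--Darboux-style step you describe is a two-line telescoping: substituting $x\varphi_{2k}=a_{2k}\varepsilon\varphi_{2k+1}-a_{2k-1}\varepsilon\varphi_{2k-1}$ into $\sum_{k=0}^{N}x\varphi_{2k}(x)\varphi_{2k}(y)$ reproduces the sums coming from $\psi_{2j}(x)\psi_{2j+1}'(y)$ up to the single surplus term $a_{2N}\,\varepsilon\varphi_{2N+1}(x)\varphi_{2N}(y)$, and $a_{2N}=\sqrt{(N+\frac12)(N+\frac{\alpha}{2})}$ as required; the derivative identity $\frac{d}{dx}L_n^{(\alpha)}=-L_{n-1}^{(\alpha+1)}$ you cite is not the one that organizes this. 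Two smaller slips: your displayed formula for $K_N^{(4)}$ has the wrong overall sign relative to Lemma \ref{le2} with $\mu=-M^{(4)}$ (your subsequent substitution silently uses the correct sign), and the conversion of $(\varepsilon\varphi_{2N+1})\otimes\varphi_{2N}\,\varepsilon f'$ into $-(\varepsilon\varphi_{2N+1})\otimes(\varepsilon\varphi_{2N})f'$ follows from the antisymmetry of $\varepsilon$, not from $(\varepsilon g)'=g$.
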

We also have the following expansion formula,
$$
\log\det(I+T_{\mathrm{LSE}})=\mathrm{Tr}\log(I+T_{\mathrm{LSE}})=\mathrm{Tr}\:T_{\mathrm{LSE}}-\frac{1}{2}\mathrm{Tr}\:T_{\mathrm{LSE}}^{2}+\cdots.
$$
Using the similar method in Theorem \ref{lue}, we obtain the following theorem.
\begin{theorem}\label{lse1}
As $N\rightarrow\infty$,
$$
2^{\frac{5}{3}}N^{\frac{1}{3}}\:S_{N}^{(4)}\big(8N+2\alpha+2^{\frac{5}{3}}N^{\frac{1}{3}}x, 8N+2\alpha+2^{\frac{5}{3}}N^{\frac{1}{3}}y\big)=K(x,y)+O(N^{-\frac{1}{3}}),
$$
where $K(x,y)$ is the Airy kernel (\ref{airy1}).
\end{theorem}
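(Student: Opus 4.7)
The plan is to emulate the strategy used in the proof of Theorem~\ref{lue}, now applied to the kernel $S_N^{(4)}$. First, after factoring out the weight data from the definition of $S_N^{(4)}$, one has
\[
S_N^{(4)}(x,y) = x^{\alpha/2} y^{\alpha/2-1} e^{-(x+y)/2} \sum_{j=0}^{2N} \frac{j!}{\Gamma(j+\alpha)} L_j^{(\alpha-1)}(x) L_j^{(\alpha-1)}(y),
\]
and the Christoffel--Darboux identity for the orthonormal Laguerre family with parameter $\alpha - 1$ collapses the sum to
\[
-\frac{(2N+1)!}{\Gamma(2N+\alpha)} \cdot \frac{L_{2N+1}^{(\alpha-1)}(x) L_{2N}^{(\alpha-1)}(y) - L_{2N}^{(\alpha-1)}(x) L_{2N+1}^{(\alpha-1)}(y)}{x - y}.
\]
Stirling's formula then gives $(2N+1)!/\Gamma(2N+\alpha) \sim (2N)^{2-\alpha}$.

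Next, I would perform the edge rescaling $x \mapsto 8N + 2\alpha + 2^{5/3} N^{1/3} x$ and $y \mapsto 8N + 2\alpha + 2^{5/3} N^{1/3} y$. The denominator supplies the factor $2^{5/3} N^{1/3}(x-y)$, the weight prefactor becomes $(8N)^{\alpha-1}(1 + O(N^{-2/3}))$, and the combination of these with the Stirling estimate above accounts for all the $\alpha$-dependent powers. The analytic input is formula~(\ref{asy1}) with $\alpha$ replaced by $\alpha - 1$ and $n = 2N$ or $2N + 1$. A short computation shows that after rescaling, the Airy-function argument for $L_{2N}^{(\alpha-1)}$ is exactly $x$ at leading order, while for $L_{2N+1}^{(\alpha-1)}$ it is $x + \epsilon_N$ with $\epsilon_N = -2^{1/3} N^{-1/3} + O(N^{-1})$; the shift arises because the Airy offset $4n + 2(\alpha-1) + 2$ changes by $4$ as $n$ advances from $2N$ to $2N+1$.

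The main obstacle is that the leading-order Airy arguments for the two indices coincide, so substituting only the leading asymptotics into the Christoffel--Darboux numerator produces the trivial identity $\mathrm{Ai}(x) \mathrm{Ai}(y) - \mathrm{Ai}(y) \mathrm{Ai}(x) = 0$. One must therefore retain the subleading shift $\epsilon_N$ and Taylor expand
\[
\mathrm{Ai}(x + \epsilon_N) = \mathrm{Ai}(x) + \epsilon_N \mathrm{Ai}'(x) + O(N^{-2/3}).
\]
Only then does the Christoffel--Darboux numerator produce the characteristic combination $\mathrm{Ai}(x)\mathrm{Ai}'(y) - \mathrm{Ai}'(x)\mathrm{Ai}(y)$, and hence the Airy kernel $K(x,y)$ in (\ref{airy1}). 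Tracking the signs $(-1)^{2N+1}(-1)^{2N} = -1$, the two Laguerre prefactors $2^{-\alpha+2/3}(2N)^{-1/3}$, the constant $-2^{1/3}$ from $\epsilon_N$, the weight power $(8N)^{\alpha - 1}$, the Stirling power $(2N)^{2-\alpha}$, and the denominator $2^{5/3} N^{1/3}$, one finds $S_N^{(4)}(x,y) \sim 2^{-5/3} N^{-1/3} K(x,y)$, which is precisely the stated asymptotic after multiplying by $2^{5/3} N^{1/3}$; the error estimate $O(N^{-1/3})$ then follows from the relative error in (\ref{asy1}) combined with the $O(N^{-2/3})$ tail of the Taylor expansion.
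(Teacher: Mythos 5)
Your proposal is correct and follows exactly the route the paper intends: the paper gives no separate proof of this theorem, stating only that it follows by "the similar method in Theorem \ref{lue}," i.e.\ Christoffel--Darboux for the Laguerre family with parameter $\alpha-1$, the Plancherel-type Airy asymptotics (\ref{asy1}), and Stirling's formula, which is precisely what you carry out. Your explicit identification of the leading-order cancellation in the Christoffel--Darboux numerator and the need to retain the shift $\epsilon_N=-2^{1/3}N^{-1/3}$ (with the resulting Taylor expansion producing $\mathrm{Ai}(x)\mathrm{Ai}'(y)-\mathrm{Ai}'(x)\mathrm{Ai}(y)$) is the substance of the paper's unstated "elaborate computations," and your bookkeeping of the powers of $2$ and $N$ checks out.
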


\begin{theorem}\label{lse2}
As $N\rightarrow\infty$, we have
$$
\varphi_{2N}^{(\alpha-1)}\big(8N+2\alpha+2^{\frac{5}{3}}N^{\frac{1}{3}}x\big)=2^{-\frac{13}{6}}N^{-\frac{5}{6}}\mathrm{Ai}(x)+O(N^{-\frac{3}{2}}),
$$
$$
\varphi_{2N+1}^{(\alpha-1)}\big(8N+2\alpha+2^{\frac{5}{3}}N^{\frac{1}{3}}x\big)=-2^{-\frac{13}{6}}N^{-\frac{5}{6}}\mathrm{Ai}(x)+O(N^{-\frac{3}{2}}),
$$
$$
\varepsilon\varphi_{2N}^{(\alpha-1)}\big(8N+2\alpha+2^{\frac{5}{3}}N^{\frac{1}{3}}x\big)=2^{-\frac{3}{2}}N^{-\frac{1}{2}}B(x)+O(N^{-\frac{7}{6}}),
$$
$$
\varepsilon\varphi_{2N+1}^{(\alpha-1)}\big(8N+2\alpha+2^{\frac{5}{3}}N^{\frac{1}{3}}x\big)=-2^{-\frac{3}{2}}N^{-\frac{1}{2}}B(x)+O(N^{-\frac{7}{6}}),
$$
where $B(x)$ is given by (\ref{bx}).
\end{theorem}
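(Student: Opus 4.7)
The plan is to mirror the proof of Theorem \ref{gse1}, substituting the Laguerre-polynomial asymptotics of the LUE subsection in place of the Hermite asymptotics. I would first establish the two scalar equations for $\varphi_{2N}^{(\alpha-1)}$ and $\varphi_{2N+1}^{(\alpha-1)}$, and then derive the two $\varepsilon\varphi$ equations by reducing them to integrals of the first two. For the first two equations, I start from the definition (\ref{pj}) and substitute (\ref{asy1}) with $\alpha$ replaced by $\alpha-1$ and $n=2N$ or $n=2N+1$. Under the scaling $x=8N+2\alpha+2^{5/3}N^{1/3}y$, the argument $2^{-4/3}n^{-1/3}(x-4n-2(\alpha-1)-2)$ of the Airy function collapses to $y$, since $2^{-4/3}(2N)^{-1/3}=2^{-5/3}N^{-1/3}$ exactly cancels the edge scaling factor. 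The remaining prefactors combine: Stirling's formula gives $\sqrt{\Gamma(2N+1)/\Gamma(2N+\alpha)}\sim(2N)^{(1-\alpha)/2}$, and at the edge $x^{\alpha/2-1}=(8N)^{\alpha/2-1}(1+O(N^{-2/3}))$. Bookkeeping the powers yields exactly $2^{-13/6}N^{-5/6}$ as the overall prefactor, with an error of $O(N^{-3/2})$, and the parity $(-1)^{2N+1}=-1$ supplies the sign change in the $\varphi_{2N+1}^{(\alpha-1)}$ equation.

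For the two $\varepsilon\varphi$ equations, I would use $\varepsilon g(x_0)=\frac{1}{2}\bigl(\int_0^{x_0}-\int_{x_0}^{\infty}\bigr)g(y)\,dy$, where the lower limit is $0$ since $\varphi_j^{(\alpha-1)}$ is supported on $\mathbb{R}^+$. Substituting $x_0=8N+2\alpha+2^{5/3}N^{1/3}x$ and changing variables $y=8N+2\alpha+2^{5/3}N^{1/3}t$ sends the lower endpoint in $t$ to $-\infty$ as $N\to\infty$. Replacing $\varphi_{2N}^{(\alpha-1)}$ inside the integrand by its leading Airy form from the previous step and recognizing $\bigl(\int_{-\infty}^x-\int_x^{\infty}\bigr)\mathrm{Ai}(t)\,dt=B(x)$ from (\ref{bx}), the Jacobian $2^{5/3}N^{1/3}$ multiplied by $\frac{1}{2}\cdot 2^{-13/6}N^{-5/6}$ combines to exactly $2^{-3/2}N^{-1/2}$. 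The relative $O(N^{-2/3})$ error from the Airy approximation propagates to an absolute error of $O(N^{-7/6})$, matching the claimed order. The same argument handles $\varepsilon\varphi_{2N+1}^{(\alpha-1)}$, with the sign flip inherited from the first step.

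The main obstacle is controlling the $\varepsilon\varphi$ integrals uniformly over the full half-line, since the Airy asymptotic (\ref{asy1}) is valid only in a window around the spectrum edge. To resolve this I would split the $t$-integral into a neighbourhood of the edge, where the Airy form applies and integrates to $B(x)$, and two tails. On the far-right tail one uses the exponential decay of $\mathrm{Ai}$, while on the bulk side (left of the edge) one uses the plane-wave asymptotics of $L_n^{(\alpha-1)}$ to show that the contribution of $\int_0^{T_N}\varphi_{2N}^{(\alpha-1)}(y)\,dy$ is uniformly of acceptable order, with oscillatory cancellations absorbed into the leading term. These tail estimates constitute the only step beyond the purely formal substitution used in Theorem \ref{gse1}; everything else is bookkeeping of powers of $2$ and $N$.
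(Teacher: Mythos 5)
Your proposal is correct and follows essentially the same route as the paper: the scalar equations come from the definition (\ref{pj}), the Laguerre edge asymptotic (\ref{asy1}) with $\alpha\mapsto\alpha-1$, and the Gamma-ratio estimate, while the $\varepsilon\varphi$ equations follow by writing $\varepsilon$ as a half-line integral, rescaling so the lower endpoint recedes to $-\infty$, and integrating the leading Airy term to produce $B(x)$; your power-of-$2$ and power-of-$N$ bookkeeping checks out. The only difference is that you explicitly address uniformity of the edge asymptotics away from the scaling window via tail estimates, a point the paper's proof passes over silently, so your argument is if anything slightly more careful than the original.
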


\begin{proof}
From the definition (\ref{pj}) and the asymptotics (\ref{asy1}), we have
\bea
\varphi_{2N}^{(\alpha-1)}\big(8N+2\alpha+2^{\frac{5}{3}}N^{\frac{1}{3}}x\big)&=&\sqrt{\frac{\Gamma(2N+1)}{\Gamma(2N+\alpha)}}
\big(8N+2\alpha+2^{\frac{5}{3}}N^{\frac{1}{3}}x\big)^{\frac{\alpha}{2}-1}\big(2^{-\alpha+\frac{1}{3}}N^{-\frac{1}{3}}\mathrm{Ai}(x)+O(N^{-1})\big)\nonumber\\
&=&2^{-\frac{13}{6}}N^{-\frac{5}{6}}\mathrm{Ai}(x)+O(N^{-\frac{3}{2}}),\nonumber
\eea
where we have made use of the formula \cite{Abramowitz} (page 257)
$$
\frac{\Gamma(n+a)}{\Gamma(n+b)}=n^{a-b}(1+O(n^{-1})),\;\;n\rightarrow\infty.
$$

It follows that
\bea
&&\varepsilon\varphi_{2N}^{(\alpha-1)}\big(8N+2\alpha+2^{\frac{5}{3}}N^{\frac{1}{3}}x\big)\nonumber\\
&=&\frac{1}{2}\left(\int_{0}^{8N+2\alpha+2^{\frac{5}{3}}N^{\frac{1}{3}}x}\varphi_{2N}^{(\alpha-1)}(y)dy-\int_{8N+2\alpha+2^{\frac{5}{3}}N^{\frac{1}{3}}x}
^{\infty}\varphi_{2N}^{(\alpha-1)}(y)dy\right)\nonumber\\
&=&2^{\frac{2}{3}}N^{\frac{1}{3}}\left(\int_{-2^{-\frac{2}{3}}N^{-\frac{1}{3}}(4N+\alpha)}^{x}\varphi_{2N}^{(\alpha-1)}(8N+2\alpha+2^{\frac{5}{3}}
N^{\frac{1}{3}}y)dy-\int_{x}^{\infty}\varphi_{2N}^{(\alpha-1)}(8N+2\alpha+2^{\frac{5}{3}}N^{\frac{1}{3}}y)dy\right)\nonumber\\
&=&2^{\frac{2}{3}}N^{\frac{1}{3}}\left(\int_{-\infty}^{x}\left(2^{-\frac{13}{6}}N^{-\frac{5}{6}}\mathrm{Ai}(y)+O(N^{-\frac{3}{2}})\right)dy
-\int_{x}^{\infty}\left(2^{-\frac{13}{6}}N^{-\frac{5}{6}}\mathrm{Ai}(y)+O(N^{-\frac{3}{2}})\right)dy\right)\nonumber\\
&=&2^{-\frac{3}{2}}N^{-\frac{1}{2}}B(x)+O(N^{-\frac{7}{6}}).\nonumber
\eea
Similarly, we obtain
$$
\varphi_{2N+1}^{(\alpha-1)}\big(8N+2\alpha+2^{\frac{5}{3}}N^{\frac{1}{3}}x\big)=-2^{-\frac{13}{6}}N^{-\frac{5}{6}}\mathrm{Ai}(x)+O(N^{-\frac{3}{2}})
$$
and
$$
\varepsilon\varphi_{2N+1}^{(\alpha-1)}\big(8N+2\alpha+2^{\frac{5}{3}}N^{\frac{1}{3}}x\big)=-2^{-\frac{3}{2}}N^{-\frac{1}{2}}B(x)+O(N^{-\frac{7}{6}}).
$$
The proof is complete.
\end{proof}

Now we use Theorem \ref{lse1} and \ref{lse2} to compute $\mathrm{Tr}\:T_{\mathrm{LSE}}$ and $\mathrm{Tr}\:T_{\mathrm{LSE}}^{2}$ as $N\rightarrow\infty$. We change $f(x)$ to $f(2^{-\frac{5}{3}}N^{-\frac{1}{3}}(x-8N-2\alpha))$ in the following computations. Firstly we have
\bea
\mathrm{Tr}\:T_{\mathrm{LSE}}&=&\mathrm{Tr}\:S_{N}^{(4)}f-\mathrm{Tr}\:\frac{1}{2}S_{N}^{(4)}\varepsilon f'-\mathrm{Tr}\:\sqrt{\Big(N+\frac{1}{2}\Big)\Big(N+\frac{\alpha}{2}\Big)}
\Big(\varepsilon\varphi_{2N+1}^{(\alpha-1)}\Big)\otimes\varphi_{2N}^{(\alpha-1)}f\nonumber\\
&-&\mathrm{Tr}\:\frac{1}{2}\sqrt{\Big(N+\frac{1}{2}\Big)\Big(N+\frac{\alpha}{2}\Big)}
\Big(\varepsilon\varphi_{2N+1}^{(\alpha-1)}\Big)\otimes\Big(\varepsilon\varphi_{2N}^{(\alpha-1)}\Big)f'.\nonumber
\eea
The first term reads,
\bea
\mathrm{Tr}\:S_{N}^{(4)}f
&=&\int_{0}^{\infty}S_{N}^{(4)}(x,x)f(2^{-\frac{5}{3}}N^{-\frac{1}{3}}(x-8N-2\alpha))dx\nonumber\\
&=&\int_{-2^{-\frac{2}{3}}N^{-\frac{1}{3}}(4N+\alpha)}^{\infty}2^{\frac{5}{3}}N^{\frac{1}{3}}\:S_{N}^{(4)}\big(8N+2\alpha+2^{\frac{5}{3}}N^{\frac{1}{3}}x, 8N+2\alpha+2^{\frac{5}{3}}N^{\frac{1}{3}}x\big)f(x)dx\nonumber\\
&=&\int_{-\infty}^{\infty}K(x,x)f(x)dx+O(N^{-\frac{1}{3}}).\nonumber
\eea
The second term,
\bea
\mathrm{Tr}\:\frac{1}{2}S_{N}^{(4)}\varepsilon f'
&=&2^{-\frac{8}{3}}N^{-\frac{1}{3}}\int_{0}^{\infty}\int_{0}^{\infty}S_{N}^{(4)}(x,y)\varepsilon(y,x)f'(2^{-\frac{5}{3}}N^{-\frac{1}{3}}(x-8N-2\alpha))dxdy
\nonumber\\
&=&2^{-\frac{11}{3}}N^{-\frac{1}{3}}\int_{0}^{\infty}\left(\int_{x}^{\infty}S_{N}^{(4)}(x,y)dy-\int_{0}^{x}S_{N}^{(4)}(x,y)dy\right)
f'(2^{-\frac{5}{3}}N^{-\frac{1}{3}}(x-8N-2\alpha))dx.\nonumber
\eea
Let
$$
x=8N+2\alpha+2^{\frac{5}{3}}N^{\frac{1}{3}}u,\;\;y=8N+2\alpha+2^{\frac{5}{3}}N^{\frac{1}{3}}v,
$$
then
\bea
\mathrm{Tr}\:\frac{1}{2}S_{N}^{(4)}\varepsilon f'
&=&\frac{1}{4}\int_{-\infty}^{\infty}\left(\int_{u}^{\infty}K(u,v)dv-\int_{-\infty}^{u}K(u,v)dv\right)f'(u)du+O(N^{-\frac{1}{3}})\nonumber\\
&=&\frac{1}{4}\int_{-\infty}^{\infty}L(x,x)f'(x)dx+O(N^{-\frac{1}{3}}).\nonumber
\eea
The third term,
\bea
&&\mathrm{Tr}\:\sqrt{\Big(N+\frac{1}{2}\Big)\Big(N+\frac{\alpha}{2}\Big)}
\Big(\varepsilon\varphi_{2N+1}^{(\alpha-1)}\Big)\otimes\varphi_{2N}^{(\alpha-1)}f\nonumber\\
&=&\sqrt{\Big(N+\frac{1}{2}\Big)\Big(N+\frac{\alpha}{2}\Big)}\int_{0}^{\infty}\varepsilon\varphi_{2N+1}^{(\alpha-1)}(x)\varphi_{2N}^{(\alpha-1)}(x)
f(2^{-\frac{5}{3}}N^{-\frac{1}{3}}(x-8N-2\alpha))dx\nonumber\\
&=&2^{\frac{5}{3}}N^{\frac{1}{3}}\sqrt{\Big(N+\frac{1}{2}\Big)\Big(N+\frac{\alpha}{2}\Big)}
\int_{-2^{-\frac{2}{3}}N^{-\frac{1}{3}}(4N+\alpha)}^{\infty}\varepsilon\varphi_{2N+1}^{(\alpha-1)}(8N+2\alpha+2^{\frac{5}{3}}N^{\frac{1}{3}}x)\nonumber\\
&&\varphi_{2N}^{(\alpha-1)}(8N+2\alpha+2^{\frac{5}{3}}N^{\frac{1}{3}}x)f(x)dx\nonumber\\
&=&-\frac{1}{4}\int_{-\infty}^{\infty}\mathrm{Ai}(x)B(x)f(x)dx+O(N^{-\frac{2}{3}}).\nonumber
\eea
The fourth term,
\bea
&&\mathrm{Tr}\:\frac{1}{2}\sqrt{\Big(N+\frac{1}{2}\Big)\Big(N+\frac{\alpha}{2}\Big)}
\Big(\varepsilon\varphi_{2N+1}^{(\alpha-1)}\Big)\otimes\Big(\varepsilon\varphi_{2N}^{(\alpha-1)}\Big)f'\nonumber\\
&=&2^{-\frac{8}{3}}N^{-\frac{1}{3}}\sqrt{\Big(N+\frac{1}{2}\Big)\Big(N+\frac{\alpha}{2}\Big)}\int_{0}^{\infty}\varepsilon\varphi_{2N+1}^{(\alpha-1)}(x)
\varepsilon\varphi_{2N}^{(\alpha-1)}(x)f'(2^{-\frac{5}{3}}N^{-\frac{1}{3}}(x-8N-2\alpha))dx\nonumber\\
&=&-\frac{1}{16}\int_{-\infty}^{\infty}B^2(x)f'(x)dx+O(N^{-\frac{2}{3}}).\nonumber
\eea
Hence,
\bea\label{tr1}
\mathrm{Tr}\:T_{\mathrm{LSE}}&=&\int_{-\infty}^{\infty}K(x,x)f(x)dx-\frac{1}{4}\int_{-\infty}^{\infty}L(x,x)f'(x)dx
+\frac{1}{4}\int_{-\infty}^{\infty}\mathrm{Ai}(x)B(x)f(x)dx\nonumber\\
&+&\frac{1}{16}\int_{-\infty}^{\infty}B^2(x)f'(x)dx+O(N^{-\frac{1}{3}}).
\eea
Similarly we obtain the result for $\mathrm{Tr}\:T_{\mathrm{LSE}}^{2}$ after some tedious computations,
\bea\label{tr2}
&&\mathrm{Tr}\:T_{\mathrm{LSE}}^{2}\nonumber\\
&=&\int_{-\infty}^{\infty}\int_{-\infty}^{\infty}K^2(x,y)f(x)f(y)dxdy-\frac{1}{2}\int_{-\infty}^{\infty}\int_{-\infty}^{\infty}K(x,y)L(x,y)f'(x)f(y)dxdy
\nonumber\\
&+&\frac{1}{2}\int_{-\infty}^{\infty}\int_{-\infty}^{\infty}K(x,y)\mathrm{Ai}(x)B(y)f(x)f(y)dxdy
+\frac{1}{8}\int_{-\infty}^{\infty}\int_{-\infty}^{\infty}K(x,y)B(x)B(y)f'(x)f(y)dxdy\nonumber\\
&+&\frac{1}{16}\int_{-\infty}^{\infty}\int_{-\infty}^{\infty}L(x,y)L(y,x)f'(x)f'(y)dxdy
-\frac{1}{8}\int_{-\infty}^{\infty}\int_{-\infty}^{\infty}L(x,y)\mathrm{Ai}(y)B(x)f'(x)f(y)dxdy\nonumber\\
&-&\frac{1}{32}\int_{-\infty}^{\infty}\int_{-\infty}^{\infty}L(x,y)B(x)B(y)f'(x)f'(y)dxdy+
\frac{1}{16}\int_{-\infty}^{\infty}\int_{-\infty}^{\infty}\mathrm{Ai}(x)\mathrm{Ai}(y)B(x)B(y)f(x)f(y)dxdy\nonumber\\
&+&\frac{1}{32}\int_{-\infty}^{\infty}\int_{-\infty}^{\infty}\mathrm{Ai}(x)B(x)B^2(y)f(x)f'(y)dxdy
+\frac{1}{256}\int_{-\infty}^{\infty}\int_{-\infty}^{\infty}B^2(x)B^2(y)f'(x)f'(y)dxdy+O(N^{-\frac{1}{3}}).\nonumber\\
\eea
From the above, we find that as $N\rightarrow\infty$,
$$
\mathrm{Tr}\:T_{\mathrm{LSE}}=\mathrm{Tr}\:T_{\mathrm{GSE}},\;\;\;\;\mathrm{Tr}\:T_{\mathrm{LSE}}^{2}=\mathrm{Tr}\:T_{\mathrm{GSE}}^{2}.
$$
It follows that as $N\rightarrow\infty$,
$$
\log\det(I+T_{\mathrm{LSE}})=\log\det(I+T_{\mathrm{GSE}}).
$$
Then we have the following theorem.
\begin{theorem}
Let $\mu_{N}^{(\mathrm{LSE})}$ and $\mathcal{V}_{N}^{(\mathrm{LSE})}$ be the mean and variance of the linear statistics\\
$\sum_{j=1}^{N}F\left(2^{-\frac{5}{3}}N^{-\frac{1}{3}}(x_{j}-8N-2\alpha)\right)$. Then as $N\rightarrow\infty$,
\begin{small}
$$
\mu_{N}^{(\mathrm{LSE})}=\frac{1}{2}\mu_{N}^{(\mathrm{LUE})}-\frac{1}{8}\int_{-\infty}^{\infty}L(x,x)F'(x)dx
+\frac{1}{8}\int_{-\infty}^{\infty}\mathrm{Ai}(x)B(x)F(x)dx+\frac{1}{32}\int_{-\infty}^{\infty}B^2(x)F'(x)dx+O(N^{-\frac{1}{3}}),
$$
\end{small}
\begin{small}
\bea
\mathcal{V}_{N}^{(\mathrm{LSE})}&=&\frac{1}{2}\mathcal{V}_{N}^{(\mathrm{LUE})}-\frac{1}{4}\int_{-\infty}^{\infty}L(x,x)F(x)F'(x)dx
+\frac{1}{8}\int_{-\infty}^{\infty}\mathrm{Ai}(x)B(x)F^2(x)dx+\frac{1}{16}\int_{-\infty}^{\infty}B^2(x)F(x)F'(x)dx\nonumber\\
&+&\frac{1}{4}\int_{-\infty}^{\infty}\int_{-\infty}^{\infty}K(x,y)L(x,y)F'(x)F(y)dxdy
-\frac{1}{4}\int_{-\infty}^{\infty}\int_{-\infty}^{\infty}K(x,y)\mathrm{Ai}(x)B(y)F(x)F(y)dxdy\nonumber\\
&-&\frac{1}{16}\int_{-\infty}^{\infty}\int_{-\infty}^{\infty}K(x,y)B(x)B(y)F'(x)F(y)dxdy
-\frac{1}{32}\int_{-\infty}^{\infty}\int_{-\infty}^{\infty}L(x,y)L(y,x)F'(x)F'(y)dxdy\nonumber\\
&+&\frac{1}{16}\int_{-\infty}^{\infty}\int_{-\infty}^{\infty}L(x,y)\mathrm{Ai}(y)B(x)F'(x)F(y)dxdy
+\frac{1}{64}\int_{-\infty}^{\infty}\int_{-\infty}^{\infty}L(x,y)B(x)B(y)F'(x)F'(y)dxdy\nonumber\\
&-&\frac{1}{32}\int_{-\infty}^{\infty}\int_{-\infty}^{\infty}\mathrm{Ai}(x)\mathrm{Ai}(y)B(x)B(y)F(x)F(y)dxdy
-\frac{1}{64}\int_{-\infty}^{\infty}\int_{-\infty}^{\infty}\mathrm{Ai}(x)B(x)B^2(y)F(x)F'(y)dxdy\nonumber\\
&-&\frac{1}{512}\int_{-\infty}^{\infty}\int_{-\infty}^{\infty}B^2(x)B^2(y)F'(x)F'(y)dxdy+O(N^{-\frac{1}{3}}),\nonumber
\eea
\end{small}
where $\mu_{N}^{(\mathrm{LUE})}$ and $\mathcal{V}_{N}^{(\mathrm{LUE})}$ are given by (\ref{luem}) and (\ref{luev}), respectively.
\end{theorem}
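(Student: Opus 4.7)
The starting point is the Fredholm determinant identity $\left[G_{N}^{(4)}(f)\right]^{2}=\det(I+T_{\mathrm{LSE}})$, together with the series expansion
\[
\log G_{N}^{(4)}(f)=\tfrac{1}{2}\log\det(I+T_{\mathrm{LSE}})=\tfrac{1}{2}\mathrm{Tr}\,T_{\mathrm{LSE}}-\tfrac{1}{4}\mathrm{Tr}\,T_{\mathrm{LSE}}^{2}+\cdots,
\]
and the asymptotic expressions (\ref{tr1}) and (\ref{tr2}) for $\mathrm{Tr}\,T_{\mathrm{LSE}}$ and $\mathrm{Tr}\,T_{\mathrm{LSE}}^{2}$ as $N\to\infty$, which have already been derived by combining Theorem \ref{lse1}, Theorem \ref{lse2}, and the appropriate edge scaling $x\mapsto 8N+2\alpha+2^{5/3}N^{1/3}x$.

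The plan is to read off the mean and variance by extracting the coefficients of $\lambda$ and $\lambda^{2}$ in $\log G_{N}^{(4)}(f)$ after substituting the expansion $f(x)=-\lambda F(x)+\frac{\lambda^{2}}{2}F^{2}(x)+O(\lambda^{3})$, which also gives $f'(x)=-\lambda F'(x)+\lambda^{2}F(x)F'(x)+O(\lambda^{3})$. Higher-order cubic terms in $\mathrm{Tr}\,T_{\mathrm{LSE}}^{3},\,\mathrm{Tr}\,T_{\mathrm{LSE}}^{4},\dots$ contribute only at order $\lambda^{3}$ or higher, so they do not affect the mean or variance. I would substitute into (\ref{tr1}) to get linear-in-$\lambda$ contributions and a piece of the $\lambda^{2}$ contribution coming from the quadratic-in-$F$ substitution, and substitute into (\ref{tr2}) to get the remaining $\lambda^{2}$ contributions.

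Collecting the $-\lambda$ coefficient in $\tfrac{1}{2}\mathrm{Tr}\,T_{\mathrm{LSE}}$ yields
\[
\mu_{N}^{(\mathrm{LSE})}=\tfrac{1}{2}\!\int_{-\infty}^{\infty}\!K(x,x)F(x)\,dx-\tfrac{1}{8}\!\int_{-\infty}^{\infty}\!L(x,x)F'(x)\,dx+\tfrac{1}{8}\!\int_{-\infty}^{\infty}\!\mathrm{Ai}(x)B(x)F(x)\,dx+\tfrac{1}{32}\!\int_{-\infty}^{\infty}\!B^{2}(x)F'(x)\,dx+O(N^{-1/3}),
\]
which, using (\ref{luem}), is exactly $\tfrac{1}{2}\mu_{N}^{(\mathrm{LUE})}$ plus the three correction integrals in the stated formula. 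For the variance, the $\lambda^{2}$ coefficient in $\tfrac{1}{2}\mathrm{Tr}\,T_{\mathrm{LSE}}$ produces the four one-dimensional integrals involving $F^{2}$, $FF'$, while the $\lambda^{2}$ coefficient in $-\tfrac{1}{4}\mathrm{Tr}\,T_{\mathrm{LSE}}^{2}$ produces ten two-dimensional integrals (one from each summand in (\ref{tr2})), including in particular $-\tfrac{1}{2}\iint K^{2}(x,y)F(x)F(y)\,dxdy$ which combines with $\tfrac{1}{2}\int K(x,x)F^{2}(x)dx$ to reproduce $\tfrac{1}{2}\mathcal{V}_{N}^{(\mathrm{LUE})}$ via (\ref{luev}). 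The remaining nine double integrals match the displayed expression.

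The main conceptual step, as the authors note explicitly just before the statement, is the observation that the leading asymptotics of $\mathrm{Tr}\,T_{\mathrm{LSE}}$ and $\mathrm{Tr}\,T_{\mathrm{LSE}}^{2}$ coincide termwise with those of $\mathrm{Tr}\,T_{\mathrm{GSE}}$ and $\mathrm{Tr}\,T_{\mathrm{GSE}}^{2}$; this is what guarantees that the theorem is obtained from the GSE theorem simply by replacing $\mu_{N}^{(\mathrm{GUE})}$, $\mathcal{V}_{N}^{(\mathrm{GUE})}$ by $\mu_{N}^{(\mathrm{LUE})}$, $\mathcal{V}_{N}^{(\mathrm{LUE})}$. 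The only real obstacle is bookkeeping: ensuring that the factors of $2^{5/3}N^{1/3}$ in $f'$ from the chain rule are cancelled exactly by the prefactors arising from the scaled kernels and scaled functions in Theorem \ref{lse2}, and tracking the sign conventions (note the minus signs in $\varphi_{2N+1}^{(\alpha-1)}$ and $\varepsilon\varphi_{2N+1}^{(\alpha-1)}$) so that products such as $\varepsilon\varphi_{2N+1}^{(\alpha-1)}\otimes\varphi_{2N}^{(\alpha-1)}$ reproduce $+\mathrm{Ai}(x)B(y)$ with the right coefficient after multiplication by $\sqrt{(N+1/2)(N+\alpha/2)}\sim N$. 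Once these prefactors are verified, the theorem follows by elementary algebraic collection of terms.
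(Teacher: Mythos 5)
Your proposal is correct and follows essentially the same route as the paper: it takes the asymptotic trace formulas (\ref{tr1}) and (\ref{tr2}) (equivalently, the observation that these coincide with $\mathrm{Tr}\,T_{\mathrm{GSE}}$ and $\mathrm{Tr}\,T_{\mathrm{GSE}}^{2}$), substitutes $f=-\lambda F+\frac{\lambda^2}{2}F^2+O(\lambda^3)$ and $f'=-\lambda F'+\lambda^2 FF'+O(\lambda^3)$ into $\log G_N^{(4)}(f)=\frac{1}{2}\mathrm{Tr}\,T_{\mathrm{LSE}}-\frac{1}{4}\mathrm{Tr}\,T_{\mathrm{LSE}}^{2}+\cdots$, and reads off the coefficients of $-\lambda$ and $\frac{\lambda^2}{2}$. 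The bookkeeping you describe (the factor $\frac{1}{2}$ from squaring the determinant, the cancellation of the $2^{5/3}N^{1/3}$ chain-rule factors, and the sign cancellation in $-\sqrt{(N+\frac12)(N+\frac{\alpha}{2})}\,(\varepsilon\varphi_{2N+1}^{(\alpha-1)})\otimes\varphi_{2N}^{(\alpha-1)}$) reproduces the stated coefficients exactly.
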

\noindent $\mathbf{Remark.}$ We find that the large $N$ behavior of the MGF of a suitably scaled linear statistics in LSE are the same with a suitably scaled linear statistics in GSE. It follows that as $N\rightarrow\infty$, the mean and variance of the corresponding linear statistics are also the same in LSE and GSE.

\subsection{Laguerre Orthogonal Ensemble}
In this subsection, $w(x)$ is taken to be the {\it square root} of the Laguerre weight, namely,
$$
w(x)=x^{\frac{\alpha}{2}}\mathrm{e}^{-\frac{x}{2}},\;\;\alpha>-2,\;x\in \mathbb{R}^+,
$$
and $N$ is even. Following \cite{Min201601}, we let
$$
\psi_{2n+1}(x):=\frac{d}{dx}(x\varphi_{2n}^{(\alpha+1)}(x)),\;\;\psi_{2n}(x):=\varphi_{2n}^{(\alpha+1)}(x),\;\;n=0,1,2,\ldots,
$$
where $\varphi_{j}^{(\alpha+1)}(x)$ is given by
$$
\varphi_{j}^{(\alpha+1)}(x):=\sqrt{\frac{\Gamma(j+1)}{\Gamma(j+\alpha+2)}}L_{j}^{(\alpha+1)}(x)x^{\frac{\alpha}{2}}\mathrm{e}^{-\frac{x}{2}},\;\;j=0,1,2,\ldots.
$$
Note that the definition of $\varphi_{j}^{(\alpha+1)}(x)$ coincides with the LSE case if we replace $\alpha$ with $\alpha+2$ there.
It follows that $M^{(1)}$ is the direct sum of the $\frac{N}{2}$ copies of
$
\begin{pmatrix}
0&1\\
-1&0
\end{pmatrix}
$
and $(M^{(1)})^{-1}=-M^{(1)}$.
From Lemma \ref{le3}, we obtain the following result \cite{Min201601}.
\begin{theorem}
For the Laguerre orthogonal ensemble,
$$
\left[G_{N}^{(1)}(f)\right]^{2}=\det(I+T_{\mathrm{LOE}}),
$$
where
\bea
T_{\mathrm{LOE}}:&=&S_{N}^{(1)}(f^2+2f)-S_{N}^{(1)}\varepsilon f'-S_{N}^{(1)}f\varepsilon f'-\frac{1}{2}\sqrt{N(N+\alpha+1)}
\big(\varepsilon\varphi_{N}^{(\alpha+1)}\big)\otimes\varphi_{N-1}^{(\alpha+1)}(f^2+2f)\nonumber\\
&-&\frac{1}{2}\sqrt{N(N+\alpha+1)}\big(\varepsilon\varphi_{N}^{(\alpha+1)}\big)\otimes\big(\varepsilon\varphi_{N-1}^{(\alpha+1)}\big)f'
+\frac{1}{2}\sqrt{N(N+\alpha+1)}\big(\varepsilon\varphi_{N}^{(\alpha+1)}\big)\otimes\varphi_{N-1}^{(\alpha+1)}f\varepsilon f',\nonumber
\eea
and $S_{N}^{(1)}$ is an integral operator with kernel
$$
S_{N}^{(1)}(x,y)=\sum_{j=0}^{N-1}x\:\varphi_{j}^{(\alpha+1)}(x)\varphi_{j}^{(\alpha+1)}(y).
$$
\end{theorem}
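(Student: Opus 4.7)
The plan is to apply Lemma \ref{le3} with the specific choice of basis prescribed just before the theorem, namely $\tilde\psi_{2n}(x) = \varphi_{2n}^{(\alpha+1)}(x)$ and $\tilde\psi_{2n+1}(x) = \frac{d}{dx}\bigl(x\,\varphi_{2n}^{(\alpha+1)}(x)\bigr)$ for $n = 0, 1, \ldots, \frac{N}{2}-1$. First I would verify admissibility: with $w(x) = x^{\alpha/2} e^{-x/2}$, the even-indexed function is manifestly $\pi_{2n}(x)\, w(x)$ with $\pi_{2n}$ a constant multiple of $L_{2n}^{(\alpha+1)}$. Differentiating $x\,\varphi_{2n}^{(\alpha+1)}$ via the product rule and using $w'(x)/w(x) = \alpha/(2x) - 1/2$ shows that $\tilde\psi_{2n+1}$ is likewise $\pi_{2n+1}(x)\,w(x)$ with $\pi_{2n+1}$ a polynomial of exact degree $2n+1$, the leading coefficient being visibly nonzero.

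Next I would compute the matrix $M^{(1)}_{jk} = \int_0^\infty \tilde\psi_j(x)\,\varepsilon\tilde\psi_k(x)\,dx$ and show that it is a direct sum of $N/2$ copies of the canonical antisymmetric $2\times 2$ block, so that $(M^{(1)})^{-1} = -M^{(1)}$. The decisive simplification is the identity $\varepsilon g'(x) = g(x)$ for any $g$ vanishing at $0$ and $\infty$, which applies to $g(x) = x\,\varphi_{2n}^{(\alpha+1)}(x)$ for $\alpha > -2$; hence $\varepsilon\tilde\psi_{2n+1}(x) = x\,\varphi_{2n}^{(\alpha+1)}(x)$. Then $M^{(1)}_{2m,2n+1}$ reduces to $\int_0^\infty x\,\varphi_{2m}^{(\alpha+1)}(x)\varphi_{2n}^{(\alpha+1)}(x)\,dx$, which, using the three-term recurrence $xL_n^{(\alpha+1)} = -(n+1)L_{n+1}^{(\alpha+1)} + (2n+\alpha+2)L_n^{(\alpha+1)} - (n+\alpha+1)L_{n-1}^{(\alpha+1)}$ together with the orthonormality $\int_0^\infty x\,\varphi_j^{(\alpha+1)}\varphi_k^{(\alpha+1)}\,dx = \delta_{jk}$, equals $\delta_{mn}$. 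The (even, even) block vanishes by the antisymmetry of $\varepsilon$ and a direct integration by parts, while the (odd, odd) block, after integration by parts to move the derivative off $\tilde\psi_{2m+1}$, reduces to a combination of integrals that cancel pairwise for $m \neq n$ by the same Laguerre orthogonality.

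With $\nu_{jk} = -M^{(1)}_{jk}$, only the pairs $(2m,2m+1)$ and $(2m+1,2m)$ contribute to $K_N^{(1)}(x,y) = \sum_{j,k} \nu_{jk}\,\varepsilon\tilde\psi_j(x)\,\tilde\psi_k(y)$, producing
$K_N^{(1)}(x,y) = \sum_{m=0}^{N/2-1}\bigl[\,x\,\varphi_{2m}^{(\alpha+1)}(x)\varphi_{2m}^{(\alpha+1)}(y) - \varepsilon\varphi_{2m}^{(\alpha+1)}(x)\,\tfrac{d}{dy}\bigl(y\,\varphi_{2m}^{(\alpha+1)}(y)\bigr)\bigr]$.
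The main obstacle is to convert this even-indexed partial sum into the full-range kernel $S_N^{(1)}(x,y) = \sum_{j=0}^{N-1} x\,\varphi_j^{(\alpha+1)}(x)\varphi_j^{(\alpha+1)}(y)$ plus a rank-one correction. I would achieve this via a Christoffel--Darboux-type identity: using the recurrence above, one rewrites $\tfrac{d}{dy}(y\,\varphi_{2m}^{(\alpha+1)})$ in terms of $\varphi_{2m}^{(\alpha+1)}$ and $\varphi_{2m\pm 1}^{(\alpha+1)}$, and telescoping the resulting sum over $m$ absorbs the missing odd-indexed terms into $S_N^{(1)}$ while producing a single boundary term $\sqrt{N(N+\alpha+1)}\,\varepsilon\varphi_N^{(\alpha+1)}\otimes\varphi_{N-1}^{(\alpha+1)}$, the coefficient being precisely the subdiagonal Jacobi entry $\sqrt{N(N+\alpha+1)}$ of the recurrence for the weight $x^{\alpha+1}e^{-x}$.

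Finally I would substitute this reduced form of $K_N^{(1)}$ into the operator combination $K_N^{(1)}(f^2+2f) - K_N^{(1)}\varepsilon f' - K_N^{(1)}f\varepsilon f'$ from Lemma \ref{le3}, separating the $S_N^{(1)}$ piece from the rank-one correction; the six resulting terms then assemble exactly into the stated expression for $T_{\mathrm{LOE}}$. The delicate point throughout is the normalization bookkeeping in the Christoffel--Darboux step: the exact prefactor $\tfrac{1}{2}\sqrt{N(N+\alpha+1)}$ and the signs in front of each of the three boundary contributions must emerge cleanly from the telescoping, and checking that the minus signs in $T_{\mathrm{LOE}}$ line up with $(M^{(1)})^{-1} = -M^{(1)}$ requires some care.
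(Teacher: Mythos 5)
Your proposal follows essentially the same route as the paper, which states this theorem without proof by citing \cite{Min201601} after fixing exactly the basis you use ($\psi_{2n}=\varphi_{2n}^{(\alpha+1)}$, $\psi_{2n+1}=\frac{d}{dx}(x\varphi_{2n}^{(\alpha+1)})$) and asserting the block structure $(M^{(1)})^{-1}=-M^{(1)}$: one verifies the skew-orthogonality relations, forms $K_N^{(1)}$ from Lemma \ref{le3}, reduces it to $S_N^{(1)}$ plus the rank-one correction $-\frac{1}{2}\sqrt{N(N+\alpha+1)}\,(\varepsilon\varphi_{N}^{(\alpha+1)})\otimes\varphi_{N-1}^{(\alpha+1)}$, and substitutes into $K_N^{(1)}(f^2+2f)-K_N^{(1)}\varepsilon f'-K_N^{(1)}f\varepsilon f'$ using $(a\otimes b)\varepsilon=-(a\otimes\varepsilon b)$. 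The outline is correct; the only soft spots are bookkeeping ones you already flag (the factor $\frac12$ in the boundary term, and the (even,even)/(odd,odd) skew-orthogonality checks, which require the specific shifted parameter $\alpha+1$ and are carried out in detail in the cited reference).
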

We also have the following expansion formula,
$$
\log\det(I+T_{\mathrm{LOE}})=\mathrm{Tr}\log(I+T_{\mathrm{LOE}})=\mathrm{Tr}\:T_{\mathrm{LOE}}-\frac{1}{2}\mathrm{Tr}\:T_{\mathrm{LOE}}^{2}+\cdots.
$$
Similarly as previous subsection, we have the following theorems.
\begin{theorem}\label{loe1}
As $N\rightarrow\infty$,
$$
2^{\frac{4}{3}}N^{\frac{1}{3}}\:S_{N}^{(1)}\big(4N+2\alpha+4+2^{\frac{4}{3}}N^{\frac{1}{3}}x, 4N+2\alpha+4+2^{\frac{4}{3}}N^{\frac{1}{3}}y\big)=K(x,y)+O(N^{-\frac{1}{3}}),
$$
where $K(x,y)$ is the Airy kernel (\ref{airy1}).
\end{theorem}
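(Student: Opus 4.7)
The plan is to reduce this claim to Theorem \ref{lue} (the LUE edge asymptotics) by observing that $S_{N}^{(1)}(x,y)$ coincides, up to a prefactor that tends to $1$ under the edge rescaling, with the LUE reproducing kernel at shifted parameter $\alpha+1$. Define $\bar{\varphi}_j(x):=\sqrt{\Gamma(j+1)/\Gamma(j+\alpha+2)}\,L_{j}^{(\alpha+1)}(x)\,x^{(\alpha+1)/2}\mathrm{e}^{-x/2}$, the LUE orthonormal functions associated with the weight $x^{\alpha+1}\mathrm{e}^{-x}$ on $\mathbb{R}^+$. Since $\bar{\varphi}_j(x)=\sqrt{x}\,\varphi_{j}^{(\alpha+1)}(x)$, setting $\bar{K}_N(x,y):=\sum_{j=0}^{N-1}\bar{\varphi}_j(x)\bar{\varphi}_j(y)$ gives the identity
$$
S_{N}^{(1)}(x,y)=\sqrt{x/y}\,\bar{K}_N(x,y).
$$

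Next I would invoke Theorem \ref{lue} with $\alpha$ replaced by $\alpha+1$, applied to $\bar{K}_N$. Under this replacement the edge centering $4N+2\alpha+2$ in Theorem \ref{lue} becomes $4N+2(\alpha+1)+2=4N+2\alpha+4$, which matches the centering in the present statement, and the scaling factor $2^{4/3}N^{1/3}$ is unchanged. This gives
$$
2^{4/3}N^{1/3}\,\bar{K}_N\!\left(4N+2\alpha+4+2^{4/3}N^{1/3}x,\;4N+2\alpha+4+2^{4/3}N^{1/3}y\right)=K(x,y)+O(N^{-1/3}).
$$
The underlying computation is exactly the one already carried out in the proof of Theorem \ref{lue}: apply the Christoffel-Darboux formula to $\bar{K}_N$, insert the Laguerre edge asymptotic (\ref{asy1}) with parameter $\alpha+1$, and simplify the resulting ratio of Gamma functions via Stirling's formula.

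It remains to control the prefactor $\sqrt{x/y}$ under the rescaling. Writing $u:=4N+2\alpha+4+2^{4/3}N^{1/3}x$ and $v:=4N+2\alpha+4+2^{4/3}N^{1/3}y$, one has $u-v=2^{4/3}N^{1/3}(x-y)$ and hence $u/v=1+O(N^{-2/3})$ uniformly for $x,y$ in compact subsets of $\mathbb{R}$, so $\sqrt{u/v}=1+O(N^{-2/3})$. Multiplying this correction into the Airy expansion above and absorbing $O(N^{-2/3})$ into the pre-existing $O(N^{-1/3})$ error yields the desired conclusion.

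The main obstacle here is bookkeeping rather than analysis: one must check carefully that the edge centering with parameter $\alpha+1$ exactly reproduces the centering stated in the theorem and that the prefactor $\sqrt{x/y}$ contributes only subleadingly. Both points follow from direct algebra once the identification $S_{N}^{(1)}(x,y)=\sqrt{x/y}\,\bar{K}_N(x,y)$ is in place, so no new analytic input beyond the asymptotic (\ref{asy1}) and the Christoffel-Darboux formula is required.
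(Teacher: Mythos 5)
Your proposal is correct, and it reaches the result by a cleaner route than the one the paper implies. The paper gives no explicit proof of Theorem \ref{loe1}; it says ``similarly as previous subsection,'' which by the chain of references means: apply the Christoffel--Darboux formula directly to the sum defining $S_{N}^{(1)}$ (an orthonormal sum for the weight $x^{\alpha+1}\mathrm{e}^{-x}$), insert the Laguerre edge asymptotics (\ref{asy1}) with parameter $\alpha+1$, and simplify the Gamma ratios and the extra factor $x\sim 4N$ by Stirling --- i.e.\ redo the computation of Theorem \ref{lue} from scratch for the modified kernel. You instead observe the exact identity $S_{N}^{(1)}(x,y)=\sqrt{x/y}\,\bar{K}_N(x,y)$, where $\bar{K}_N$ is precisely the LUE reproducing kernel at parameter $\alpha+1$, so that Theorem \ref{lue} applies verbatim with $\alpha\mapsto\alpha+1$ (which also explains structurally why the centering shifts from $4N+2\alpha+2$ to $4N+2\alpha+4$); the only remaining check is that $\sqrt{u/v}=1+O(N^{-2/3})$ under the edge rescaling, which you do correctly. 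This buys economy --- no new asymptotic computation --- and the same conjugation trick disposes of Theorem \ref{lse1} as well (there $S_N^{(4)}$ is $\sqrt{x/y}$ times the LUE kernel with $2N+1$ terms at parameter $\alpha-1$). The one hypothesis worth stating explicitly is that Theorem \ref{lue} requires its parameter to exceed $-1$, which holds here since $\alpha>-2$ gives $\alpha+1>-1$; and your uniformity claim is restricted to compact sets of $x,y$, which matches the (implicit) standard of the paper's own statement.
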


\begin{theorem}\label{loe2}
As $N\rightarrow\infty$, we have
$$
\varphi_{N}^{(\alpha+1)}\big(4N+2\alpha+4+2^{\frac{4}{3}}N^{\frac{1}{3}}x\big)=2^{-\frac{4}{3}}N^{-\frac{5}{6}}\mathrm{Ai}(x)+O(N^{-\frac{3}{2}}),
$$
$$
\varphi_{N-1}^{(\alpha+1)}\big(4N+2\alpha+4+2^{\frac{4}{3}}N^{\frac{1}{3}}x\big)=-2^{-\frac{4}{3}}N^{-\frac{5}{6}}\mathrm{Ai}(x)+O(N^{-\frac{3}{2}}),
$$
$$
\varepsilon\varphi_{N}^{(\alpha+1)}\big(4N+2\alpha+4+2^{\frac{4}{3}}N^{\frac{1}{3}}x\big)=2^{-1}N^{-\frac{1}{2}}B(x)+O(N^{-\frac{7}{6}}),
$$
$$
\varepsilon\varphi_{N-1}^{(\alpha+1)}\big(4N+2\alpha+4+2^{\frac{4}{3}}N^{\frac{1}{3}}x\big)=-2^{-1}N^{-\frac{1}{2}}B(x)+O(N^{-\frac{7}{6}}).
$$
\end{theorem}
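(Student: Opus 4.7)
The plan is to mimic the LSE argument (Theorem \ref{lse2}) verbatim, with the main change being that the Laguerre parameter in $\varphi_j^{(\alpha+1)}$ is $\alpha+1$ rather than $\alpha-1$, which shifts the natural edge from $4n+2\alpha+2$ (with old $\alpha$) to $4n+2\alpha+4$ (with current $\alpha$). Concretely, I begin from the Plancherel--Rotach formula (\ref{asy1}), applied with Laguerre parameter $\alpha+1$:
\begin{equation*}
\mathrm{e}^{-\frac{x}{2}}L_{n}^{(\alpha+1)}(x)=(-1)^{n}2^{-\alpha-\frac{4}{3}}n^{-\frac{1}{3}}\mathrm{Ai}\!\left(2^{-\frac{4}{3}}n^{-\frac{1}{3}}(x-4n-2\alpha-4)\right)+O(n^{-1}).
\end{equation*}
Setting $n=N$ and $x=4N+2\alpha+4+2^{\frac{4}{3}}N^{\frac{1}{3}}y$ makes the Airy argument equal to $y$ exactly; for $n=N-1$ the same substitution gives an Airy argument $y+O(N^{-1/3})$, but since $N$ is even the parity sign $(-1)^{N-1}=-1$ is what produces the minus sign in the second asymptotic.

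Next I multiply by the remaining pieces in the definition of $\varphi_{j}^{(\alpha+1)}$. The factor $x^{\alpha/2}=(4N+2\alpha+4+2^{\frac{4}{3}}N^{\frac{1}{3}}x)^{\alpha/2}$ expands as $(4N)^{\alpha/2}(1+O(N^{-2/3}))=2^{\alpha}N^{\alpha/2}+\cdots$, and the normalization $\sqrt{\Gamma(j+1)/\Gamma(j+\alpha+2)}\sim N^{-(\alpha+1)/2}$ is handled by the Stirling ratio \cite{Abramowitz}
$$
\frac{\Gamma(n+a)}{\Gamma(n+b)}=n^{a-b}(1+O(n^{-1})),\qquad n\to\infty,
$$
exactly as in the LSE proof. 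Collecting the exponents of $2$ and $N$ gives the prefactor $2^{-4/3}N^{-5/6}$ (respectively $-2^{-4/3}N^{-5/6}$ when $n=N-1$), confirming the first two assertions.

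For the $\varepsilon$ assertions I proceed as in Theorem \ref{lse2}: write
\begin{equation*}
\varepsilon\varphi_{N}^{(\alpha+1)}\!\bigl(4N+2\alpha+4+2^{\frac{4}{3}}N^{\frac{1}{3}}x\bigr)=\tfrac{1}{2}\Bigl(\int_{0}^{4N+2\alpha+4+2^{\frac{4}{3}}N^{\frac{1}{3}}x}\varphi_{N}^{(\alpha+1)}(y)\,dy-\int_{4N+2\alpha+4+2^{\frac{4}{3}}N^{\frac{1}{3}}x}^{\infty}\varphi_{N}^{(\alpha+1)}(y)\,dy\Bigr),
\end{equation*}
change variables $y=4N+2\alpha+4+2^{\frac{4}{3}}N^{\frac{1}{3}}v$, substitute the just-proved pointwise asymptotic for $\varphi_{N}^{(\alpha+1)}$, and send the lower bound $-2^{-\frac{4}{3}}N^{-\frac{1}{3}}(4N+2\alpha+4)\to-\infty$. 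The factor $2^{\frac{4}{3}}N^{\frac{1}{3}}$ from the Jacobian combines with $2^{-\frac{4}{3}}N^{-\frac{5}{6}}$ to give $N^{-\frac{1}{2}}$, reproducing the coefficient $2^{-1}N^{-\frac{1}{2}}$ in front of $B(x)$; the case of $\varphi_{N-1}^{(\alpha+1)}$ only changes the overall sign.

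The main technical obstacle is controlling the tail of the $\varepsilon$-integral on the oscillatory side $v\to-\infty$, where (\ref{asy1}) is only a pointwise Plancherel--Rotach estimate and not an $L^{1}$ bound uniform in $N$. The standard way to close this gap is to split the integral at a moving cutoff and use known exponential decay estimates for $\mathrm{e}^{-x/2}L_{n}^{(\alpha+1)}(x)$ in the ``non-oscillatory'' region $x\ll 4n$ (together with the oscillatory-integral cancellation in the bulk) to absorb everything into the $O(N^{-7/6})$ error; this is exactly the same estimate invoked implicitly in the proofs of Theorems \ref{gse1}, \ref{goe}, \ref{lse2}, and will be handled identically here.
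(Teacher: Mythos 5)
Your proposal is correct and follows essentially the same route as the paper, which proves Theorem \ref{loe2} only by reference to the argument for Theorem \ref{lse2}: apply the Plancherel--Rotach formula (\ref{asy1}) with parameter $\alpha+1$, combine with the Stirling ratio for the normalization and the expansion of $x^{\alpha/2}$ to get the pointwise Airy asymptotics (the parity factor $(-1)^{N-1}$ giving the sign for $\varphi_{N-1}^{(\alpha+1)}$), and then integrate against the change of variables to obtain the $\varepsilon$ statements. Your closing remark about controlling the tail of the $\varepsilon$-integral is in fact more careful than the paper, which passes over this point silently.
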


Now we change $f(x)$ to $f(2^{-\frac{4}{3}}N^{-\frac{1}{3}}(x-4N-2\alpha-4))$ and use Theorem \ref{loe1} and \ref{loe2} to compute $\mathrm{Tr}\:T_{\mathrm{LOE}}$ and $\mathrm{Tr}\:T_{\mathrm{LOE}}^{2}$. We find that as $N\rightarrow\infty$,
$$
\mathrm{Tr}\:T_{\mathrm{LOE}}=\mathrm{Tr}\:T_{\mathrm{GOE}},\;\;\;\;\mathrm{Tr}\:T_{\mathrm{LOE}}^2=\mathrm{Tr}\:T_{\mathrm{GOE}}^2.
$$
It follows that as $N\rightarrow\infty$,
$$
\log\det(I+T_{\mathrm{LOE}})=\log\det(I+T_{\mathrm{GOE}}).
$$
Then we have the following theorem.
\begin{theorem}
Denoting by $\mu_{N}^{(\mathrm{LOE})}$ and $\mathcal{V}_{N}^{(\mathrm{LOE})}$ the mean and variance of the scaled linear statistics\\
$\sum_{j=1}^{N}F\left(2^{-\frac{4}{3}}N^{-\frac{1}{3}}(x-4N-2\alpha-4)\right)$, then as $N\rightarrow\infty$,
\begin{small}
$$
\mu_{N}^{(\mathrm{LOE})}=\mu_{N}^{(\mathrm{LUE})}-\frac{1}{4}\int_{-\infty}^{\infty}L(x,x)F'(x)dx
+\frac{1}{4}\int_{-\infty}^{\infty}\mathrm{Ai}(x)B(x)F(x)dx+\frac{1}{16}\int_{-\infty}^{\infty}B^2(x)F'(x)dx+O(N^{-\frac{1}{3}}),
$$
\end{small}
\begin{small}
\bea
\mathcal{V}_{N}^{(\mathrm{LOE})}&=&2\mathcal{V}_{N}^{(\mathrm{LUE})}-\frac{1}{2}\int_{-\infty}^{\infty}L(x,x)F(x)F'(x)dx
-\frac{1}{2}\int_{-\infty}^{\infty}dxF'(x)\int_{-\infty}^{\infty}(1-2\chi_{(-\infty,x)}(y))K(x,y)F(y)dy\nonumber\\
&+&\frac{1}{2}\int_{-\infty}^{\infty}\mathrm{Ai}(x)B(x)F^2(x)dx
-\frac{1}{8}\int_{-\infty}^{\infty}dx B(x)F'(x)\int_{-\infty}^{\infty}(1-2\chi_{(-\infty,x)}(y))\mathrm{Ai}(y)F(y)dy\nonumber\\
&+&\frac{1}{8}\int_{-\infty}^{\infty}B^2(x)F(x)F'(x)dx-\frac{1}{16}\int_{-\infty}^{\infty}\int_{-\infty}^{\infty}\mathrm{Ai}(x)B(x)B^2(y)F(x)F'(y)dxdy\nonumber\\
&-&\int_{-\infty}^{\infty}\int_{-\infty}^{\infty}K(x,y)\mathrm{Ai}(x)B(y)F(x)F(y)dxdy
-\frac{1}{4}\int_{-\infty}^{\infty}\int_{-\infty}^{\infty}K(x,y)B(x)B(y)F'(x)F(y)dxdy\nonumber\\
&-&\frac{1}{8}\int_{-\infty}^{\infty}\int_{-\infty}^{\infty}L(x,y)L(y,x)F'(x)F'(y)dxdy
+\frac{1}{4}\int_{-\infty}^{\infty}\int_{-\infty}^{\infty}L(x,y)\mathrm{Ai}(y)B(x)F'(x)F(y)dxdy\nonumber\\
&+&\frac{1}{16}\int_{-\infty}^{\infty}\int_{-\infty}^{\infty}L(x,y)B(x)B(y)F'(x)F'(y)dxdy
-\frac{1}{8}\int_{-\infty}^{\infty}\int_{-\infty}^{\infty}\mathrm{Ai}(x)\mathrm{Ai}(y)B(x)B(y)F(x)F(y)dxdy\nonumber\\
&+&\int_{-\infty}^{\infty}\int_{-\infty}^{\infty}K(x,y)L(x,y)F'(x)F(y)dxdy
-\frac{1}{128}\int_{-\infty}^{\infty}\int_{-\infty}^{\infty}B^2(x)B^2(y)F'(x)F'(y)dxdy+O(N^{-\frac{1}{3}}),\nonumber
\eea
\end{small}
where $\mu_{N}^{(\mathrm{LUE})}$ and $\mathcal{V}_{N}^{(\mathrm{LUE})}$ are given by (\ref{luem}) and (\ref{luev}), respectively.
\end{theorem}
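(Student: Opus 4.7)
The plan is to follow verbatim the strategy used for the Gaussian orthogonal ensemble in Section 2.3, relying on the Fredholm trace series
$$
\log\det(I+T_{\mathrm{LOE}})=\mathrm{Tr}\,T_{\mathrm{LOE}}-\tfrac{1}{2}\mathrm{Tr}\,T_{\mathrm{LOE}}^{2}+\cdots,
$$
and observing that only the coefficients of $\lambda$ and $\lambda^{2}$ after substituting $f(x)=-\lambda F(x)+\tfrac{\lambda^{2}}{2}F^{2}(x)-\cdots$ and $f'(x)=-\lambda F'(x)+\lambda^{2}F(x)F'(x)-\cdots$ are needed to read off the mean and variance via $\log G_N^{(1)}(f)=\tfrac{1}{2}\log\det(I+T_{\mathrm{LOE}})$.

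First I would rescale $f(x)\mapsto f\bigl(2^{-\frac{4}{3}}N^{-\frac{1}{3}}(x-4N-2\alpha-4)\bigr)$, so that $f'$ acquires a chain-rule factor $2^{-\frac{4}{3}}N^{-\frac{1}{3}}$. Then in each of the six summands making up $T_{\mathrm{LOE}}$, I would carry out the linear change of variable $x\mapsto 4N+2\alpha+4+2^{\frac{4}{3}}N^{\frac{1}{3}}x$ (and similarly for $y$) and substitute the asymptotics of Theorem \ref{loe1} for $S_N^{(1)}$ and of Theorem \ref{loe2} for $\varphi_{N}^{(\alpha+1)}$, $\varphi_{N-1}^{(\alpha+1)}$, $\varepsilon\varphi_{N}^{(\alpha+1)}$, $\varepsilon\varphi_{N-1}^{(\alpha+1)}$. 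The scaling exponents in the kernel and in the wavefunctions have been engineered so that the powers of $N$ cancel and the prefactor $\sqrt{N(N+\alpha+1)}\sim N$ absorbs the decay of $(\varepsilon\varphi\otimes\varphi)$ and $(\varepsilon\varphi\otimes\varepsilon\varphi)$ to give finite limits built from $K(x,x)$, $L(x,x)$, $\mathrm{Ai}(x)$ and $B(x)$, with an $O(N^{-1/3})$ remainder.

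The central check is that the numerical coefficients produced by this reduction reproduce those already computed in the GOE case in (\ref{trgoe}) and (\ref{trgoe2}). For instance, the third term of $\mathrm{Tr}\,T_{\mathrm{LOE}}$ is $-\tfrac{1}{2}\sqrt{N(N+\alpha+1)}\int\varepsilon\varphi_N^{(\alpha+1)}\varphi_{N-1}^{(\alpha+1)}(f^{2}+2f)$, and the product of the two minus signs in Theorem \ref{loe2} together with the factors $2^{-1}N^{-1/2}$ and $2^{-4/3}N^{-5/6}$ multiplied by the Jacobian $2^{4/3}N^{1/3}$ and by the prefactor $\tfrac{1}{2}\sqrt{N(N+\alpha+1)}\sim\tfrac{N}{2}$ must collapse to the GOE coefficient $\tfrac{1}{4}$. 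An analogous identity must hold for the twenty-odd cross terms of $\mathrm{Tr}\,T_{\mathrm{LOE}}^{2}$; once this matching is verified, one concludes
$$
\mathrm{Tr}\,T_{\mathrm{LOE}}=\mathrm{Tr}\,T_{\mathrm{GOE}}+O(N^{-1/3}),\qquad\mathrm{Tr}\,T_{\mathrm{LOE}}^{2}=\mathrm{Tr}\,T_{\mathrm{GOE}}^{2}+O(N^{-1/3}),
$$
and hence $\log\det(I+T_{\mathrm{LOE}})=\log\det(I+T_{\mathrm{GOE}})+O(N^{-1/3})$ at the level of the $\lambda$- and $\lambda^{2}$-coefficients.

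From here the formulas claimed in the theorem follow by literally copying the GOE calculation, reading $\mu_N^{(\mathrm{LUE})}$ and $\mathcal{V}_N^{(\mathrm{LUE})}$ in place of $\mu_N^{(\mathrm{GUE})}$ and $\mathcal{V}_N^{(\mathrm{GUE})}$ (which is legal because by the remark at the end of Section 3.1 these coincide at the Airy-kernel scale). The main obstacle is the bookkeeping: $T_{\mathrm{LOE}}$ is a sum of six operators, so $T_{\mathrm{LOE}}^{2}$ expands into thirty-six trace contributions, and one must keep track of the signs introduced by the minus signs in front of the rank-one blocks in the definition of $T_{\mathrm{LOE}}$ together with the sign flip between $\varphi_N^{(\alpha+1)}$ and $\varphi_{N-1}^{(\alpha+1)}$ in Theorem \ref{loe2}. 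The terms involving $f\,\varepsilon f'$ produce the indicator-function integrals $\int(1-2\chi_{(-\infty,x)}(y))K(x,y)F(y)\,dy$ and its $\mathrm{Ai}(y)$ analogue, exactly as in the GOE computation, because $\varepsilon(x,y)=\tfrac{1}{2}\mathrm{sgn}(x-y)=\tfrac{1}{2}(1-2\chi_{(-\infty,x)}(y))$. Contributions that produce at least cubic dependence on $F$ are absorbed into the remainder $R$ and do not affect the mean or variance.
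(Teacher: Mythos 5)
Your proposal follows essentially the same route as the paper: the paper likewise rescales $f$, inserts the asymptotics of the kernel $S_N^{(1)}$ and of $\varphi_{N}^{(\alpha+1)},\varphi_{N-1}^{(\alpha+1)},\varepsilon\varphi_{N}^{(\alpha+1)},\varepsilon\varphi_{N-1}^{(\alpha+1)}$ into the six blocks of $T_{\mathrm{LOE}}$, finds $\mathrm{Tr}\,T_{\mathrm{LOE}}=\mathrm{Tr}\,T_{\mathrm{GOE}}$ and $\mathrm{Tr}\,T_{\mathrm{LOE}}^{2}=\mathrm{Tr}\,T_{\mathrm{GOE}}^{2}$ up to $O(N^{-1/3})$, and transcribes the GOE mean and variance with $\mu_N^{(\mathrm{LUE})},\mathcal{V}_N^{(\mathrm{LUE})}$ in place of the GUE quantities; your sample coefficient check (the $\tfrac14\,\mathrm{Ai}\,B$ term) is correct. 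The only nit is the aside identity for $\varepsilon$, which should read $\varepsilon(y,x)=\tfrac{1}{2}\bigl(1-2\chi_{(-\infty,x)}(y)\bigr)$ rather than $\varepsilon(x,y)$; this does not affect the argument.
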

\noindent $\mathbf{Remark.}$ We find that the large $N$ behavior of the MGF of a suitably scaled linear statistics in LOE are the same with a suitably scaled linear statistics in GOE. It follows that as $N\rightarrow\infty$, the mean and variance of the corresponding linear statistics are also the same in LOE and GOE.

\section{Gaussian Unitary Ensemble Continued}
For the Gaussian unitary ensemble, if we change $f(x)$ to $f\left(x-\sqrt{2N}\right)$, we can gain a better insight into the mean and variance of the corresponding linear statistics by using the result of Basor and Widom \cite{Basor1999}. We see that as $N\rightarrow\infty$,
$$
\det(I+K_{N}^{(2)}f)=\det(I+K\tilde{f}),
$$
where
$$
\tilde{f}(x):=f\left(\frac{x}{2^{\frac{1}{2}}N^{\frac{1}{6}}}\right).
$$
This is because, as $N\rightarrow\infty$,
\bea
\mathrm{Tr}\:K_{N}^{(2)}f
&=&\int_{-\infty}^{\infty}K_{N}^{(2)}(x,x)f\left(x-\sqrt{2N}\right)dx\nonumber\\
&=&\int_{-\infty}^{\infty}2^{-\frac{1}{2}}N^{-\frac{1}{6}}K_{N}^{(2)}
\left(\sqrt{2N}+2^{-\frac{1}{2}}N^{-\frac{1}{6}}x,\sqrt{2N}+2^{-\frac{1}{2}}N^{-\frac{1}{6}}x\right)f\left(\frac{x}{2^{\frac{1}{2}}N^{\frac{1}{6}}}\right)dx
\nonumber\\
&=&\int_{-\infty}^{\infty}K(x,x)f\left(\frac{x}{2^{\frac{1}{2}}N^{\frac{1}{6}}}\right)dx\nonumber\\
&=&\mathrm{Tr}\:K\tilde{f},\nonumber
\eea
\bea
\mathrm{Tr}\left(K_{N}^{(2)}f\right)^{2}
&=&\int_{-\infty}^{\infty}\int_{-\infty}^{\infty}K_{N}^{(2)}(x,y)f\left(y-\sqrt{2N}\right)K_{N}^{(2)}(y,x)
f\left(x-\sqrt{2N}\right)dx dy\nonumber\\
&=&\int_{-\infty}^{\infty}\int_{-\infty}^{\infty}2^{-\frac{1}{2}}N^{-\frac{1}{6}}K_{N}^{(2)}
\left(\sqrt{2N}+2^{-\frac{1}{2}}N^{-\frac{1}{6}}x,\sqrt{2N}+2^{-\frac{1}{2}}N^{-\frac{1}{6}}y\right)f\left(\frac{y}{2^{\frac{1}{2}}N^{\frac{1}{6}}}\right)
\nonumber\\
&\cdot&2^{-\frac{1}{2}}N^{-\frac{1}{6}}K_{N}^{(2)}\left(\sqrt{2N}+2^{-\frac{1}{2}}N^{-\frac{1}{6}}y,\sqrt{2N}+2^{-\frac{1}{2}}N^{-\frac{1}{6}}x\right)
f\left(\frac{x}{2^{\frac{1}{2}}N^{\frac{1}{6}}}\right)dx dy\nonumber\\
&=&\int_{-\infty}^{\infty}\int_{-\infty}^{\infty}K^{2}(x,y)f\left(\frac{x}{2^{\frac{1}{2}}N^{\frac{1}{6}}}\right)
f\left(\frac{y}{2^{\frac{1}{2}}N^{\frac{1}{6}}}\right)dx dy\nonumber\\
&=&\mathrm{Tr}\left(K\tilde{f}\right)^{2}\nonumber
\eea
and so on.

We now introduce the result of Basor and Widom as the following lemma \cite{Basor1999}.
\begin{lemma}\label{le4}
Let
$$
\hat{f}(x):=f\left(\frac{x}{\gamma}\right),
$$
then as $\gamma\rightarrow\infty$,
$$
\log\det(I+K\hat{f})=c_{1}\gamma^{\frac{3}{2}}+c_{2}+o(1),
$$
where
$$
c_{1}=\frac{1}{\pi}\int_{0}^{\infty}\sqrt{x}\log(1+f(-x))dx,
$$
$$
c_{2}=\frac{1}{2}\int_{0}^{\infty}x\: G^2(x)dx
$$
and
$$
G(x)=\frac{1}{2\pi}\int_{-\infty}^{\infty}\mathrm{e}^{ixy}\log(1+f(-y^2))dy.
$$
\end{lemma}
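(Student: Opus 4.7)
The plan is to reduce the scaled Airy-kernel Fredholm determinant to a problem about a sine-kernel operator on the half-line, for which a Kac--Akhiezer (continuous Szeg\H{o}) asymptotic is available. First I would change variables $x \mapsto -\gamma u$, $y \mapsto -\gamma v$, converting $K\hat f$ into the integral operator on $L^2(\mathbb R)$ with kernel $\gamma K(-\gamma u,-\gamma v)\, f(-v)$. The piece coming from $u > 0$ (where the Airy function decays superexponentially) together with the Schwartz decay of $f$ shows that the contribution from $u > 0$ is $O(\gamma^{-M})$ for every $M$, so the asymptotics are entirely driven by $u, v \in [0,\infty)$.

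Next I would apply the Plancherel--Rotach asymptotics of $\mathrm{Ai}$ on its oscillatory side,
\[
\mathrm{Ai}(-\gamma u)=\frac{1}{\sqrt{\pi}(\gamma u)^{1/4}}\cos\!\Bigl(\tfrac{2}{3}(\gamma u)^{3/2}-\tfrac{\pi}{4}\Bigr)+O(\gamma^{-7/4}),
\]
and an analogous formula for $\mathrm{Ai}'$. Substituting into (\ref{airy1}) shows that $\gamma K(-\gamma u,-\gamma v)$ is, uniformly on compact subsets of $(0,\infty)^2$, a sine kernel in the fast phase $\tfrac{2}{3}\gamma^{3/2}(u^{3/2}\pm v^{3/2})$, while on the diagonal (\ref{airy2}) gives $\gamma K(-\gamma u,-\gamma u)=\gamma^{3/2}\sqrt{u}/\pi+O(1)$. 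Expanding $\log\det(I+K\hat f)=\sum_{n\ge 1}\frac{(-1)^{n+1}}{n}\mathrm{Tr}(K\hat f)^n$ and using the diagonal density picks up the leading $\gamma^{3/2}$ contribution: at each order the oscillatory off-diagonal terms vanish in the limit (Riemann--Lebesgue in the fast phase), so only the coincidence-limit survives, and the formal identity $\sum_{n\ge 1}\frac{(-1)^{n+1}}{n}f(-u)^n=\log(1+f(-u))$ resums to $c_1\gamma^{3/2}$.

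For the constant term $c_2$, I would rescale locally on the microscopic scale $\gamma^{-3/2}$: in this zooming the sine kernel reappears in its canonical form, and the operator $K\hat f$ becomes, modulo vanishing error, a Wiener--Hopf operator on $[0,\infty)$ with symbol $1+f(-y^2)$ (the argument $-y^2$ arises because the effective spectral parameter conjugate to $u$ is $\sqrt{u}$, via the Airy phase). The classical Kac--Akhiezer second-order Szeg\H{o} theorem on the half-line then yields the constant $\frac{1}{2}\int_0^\infty x\,G^2(x)\,dx$, with $G$ exactly the Fourier transform of $\log(1+f(-y^2))$ as stated. This operator-theoretic formulation is essentially what Basor and Widom carry out using a Hilbert-transform commutator trick.

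The main obstacle is the uniform validity of the sine-kernel reduction: the Plancherel--Rotach expansion degenerates as $u,v\downarrow 0$ (the true Airy-edge region), while on the other hand the sine-kernel Szeg\H{o} argument requires integrability in $u$. I would handle this by splitting into $u\in[0,\delta_\gamma]$ and $u\in[\delta_\gamma,\infty)$ with $\delta_\gamma\to 0$ slowly enough (e.g.\ $\delta_\gamma=\gamma^{-1/3}$) that the genuine edge part contributes $o(1)$ by the boundedness of $K$ on $L^2$ combined with the Schwartz decay of $f$, while the "bulk" part $[\delta_\gamma,\infty)$ admits the sine-kernel approximation in trace norm with total error $o(1)$. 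A secondary technical point is to verify that the oscillatory off-diagonal contributions in each trace cancel to all orders up to $o(1)$; this is the continuous analogue of the strong Szeg\H{o} limit theorem and is precisely the content of the Basor--Widom analysis invoked here.
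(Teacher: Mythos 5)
First, a point of reference: the paper does not actually prove Lemma \ref{le4}. It is imported verbatim from Basor and Widom \cite{Basor1999}, and the only independent derivation the paper offers is the Coulomb-fluid computation at the end of Section 4, which reproduces the \emph{consequences} of the lemma (the mean and variance in Theorem \ref{mv}), not the determinant asymptotics themselves. So your sketch has to stand on its own against the original Basor--Widom argument, and as written it has two genuine gaps.

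The first is the edge cutoff. You claim the region $u\in[0,\delta_\gamma]$ with $\delta_\gamma=\gamma^{-1/3}$ contributes $o(1)$. Quantitatively this fails already at first order in the trace expansion: since $K(x,x)\sim\sqrt{|x|}/\pi$ as $x\to-\infty$, the discarded piece of $\mathrm{Tr}\,K\hat f$ is $\int_{-\gamma\delta_\gamma}^{0}K(x,x)f(x/\gamma)\,dx\sim\frac{2}{3\pi}f(0)\,(\gamma\delta_\gamma)^{3/2}$, which is of order $\gamma$ (and generically $f(0)=\mathrm{e}^{-\lambda F(0)}-1\neq 0$). The final expansion has no term between $\gamma^{3/2}$ and $O(1)$, so the edge contributions at orders $\gamma$, $\gamma^{1/2}$, $\log\gamma$ must cancel exactly against the corresponding truncations of $c_{1}\gamma^{3/2}=\frac{\gamma^{3/2}}{\pi}\int_{0}^{\infty}\sqrt{x}\log(1+f(-x))\,dx$ near $x=0$; a splitting that discards rather than matches the edge region cannot see these cancellations and therefore cannot isolate the constant $c_{2}$.

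The second is the identification of the $O(1)$ term. The Kac--Akhiezer (continuous Szeg\H{o}) theorem applies to a truncated convolution operator with a \emph{fixed} symbol on an interval of growing length. After the substitution $w=\tfrac{2}{3}(\gamma u)^{3/2}$ that turns the Plancherel--Rotach cosines into a genuine sine kernel, the multiplier $f(-u)$ becomes a position-dependent, slowly varying symbol over the whole half-line, so there is no single Wiener--Hopf operator with fixed symbol $1+f(-y^{2})$ to which that theorem applies; asserting that one "reappears modulo vanishing error" is the entire content of the proof, not a reduction of it. The specific form of $c_{2}$, with $G$ the Fourier transform of $y\mapsto\log(1+f(-y^{2}))$, is what Basor and Widom extract from the identity $K=\mathcal{A}\chi_{+}\mathcal{A}$ (with $\mathcal{A}$ the unitary Airy transform, $\mathcal{A}^{2}=I$), which recasts $\det(I+K\hat f)$ as a Wiener--Hopf-plus-Hankel determinant amenable to their Hilbert-transform commutator analysis. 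Relatedly, your claim that the off-diagonal oscillatory contributions vanish at every order of the trace expansion cannot be literally correct: those are precisely the terms that generate $c_{2}$, so the leading-term localization and the constant-term analysis cannot be performed on disjoint pieces of the operator as your two-step plan suggests.
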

From Lemma \ref{le4} and noting that $\gamma=2^{\frac{1}{2}}N^{\frac{1}{6}}$ in our problem, we obtain as $N\rightarrow\infty$,
\be\label{bw}
\log G_{N}^{(2)}(f)=\frac{2^{\frac{3}{4}}N^{\frac{1}{4}}}{\pi}\int_{0}^{\infty}\sqrt{x}\log(1+f(-x))dx+\frac{1}{2}\int_{0}^{\infty}x\: G^2(x)dx+o(1),
\ee
Substituting $f(x)={\rm e}^{-\lambda F(x)}-1$ into (\ref{bw}), we find
\bea
\log G_{N}^{(2)}(f)&=&-\frac{2^{\frac{3}{4}}N^{\frac{1}{4}}\lambda}{\pi}\int_{0}^{\infty}\sqrt{x}F(-x)dx+\frac{\lambda^{2}}{8\pi^{2}}\int_{0}^{\infty}
x\left(\int_{-\infty}^{\infty}\mathrm{e}^{ixy}F(-y^2)dy\right)^2dx+o(1)\nonumber\\
&=&-\frac{2^{\frac{3}{4}}N^{\frac{1}{4}}\lambda}{\pi}\int_{0}^{\infty}\sqrt{x}F(-x)dx+\frac{\lambda^{2}}{2\pi^{2}}\int_{0}^{\infty}
x\left(\int_{0}^{\infty}\cos(xy)F(-y^2)dy\right)^2dx+o(1).\nonumber
\eea
Hence we have the following theorem.
\begin{theorem}\label{mv}
As $N\rightarrow\infty$, the mean and variance of the linear statistics $\sum_{j=1}^{N}F\left(x_{j}-\sqrt{2N}\right)$ in Gaussian unitary ensemble are given by
$$
\mu=\frac{2^{\frac{3}{4}}N^{\frac{1}{4}}}{\pi}\int_{0}^{\infty}\sqrt{x}F(-x)dx+o(1)
$$
and
$$
\mathcal{V}=\frac{1}{\pi^{2}}\int_{0}^{\infty}x\left(\int_{0}^{\infty}\cos(xy)F(-y^2)dy\right)^2dx+o(1)
$$
respectively.
\end{theorem}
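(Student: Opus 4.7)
The plan is to read off the mean and variance directly from the Basor--Widom asymptotic formula (\ref{bw}), using the fact that $\log(1+f(x))$ has a very clean form when $f(x) = e^{-\lambda F(x)} - 1$. Concretely, with this choice we have $1 + f(x) = e^{-\lambda F(x)}$, so $\log(1 + f(x)) = -\lambda F(x)$ \emph{exactly}, with no need to expand a logarithm in powers of $\lambda$. This turns the constants $c_1$ and $c_2$ in Lemma \ref{le4} into explicit linear and quadratic functions of $\lambda$, which is exactly what is needed to identify the first two cumulants of the linear statistics.

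First, I would substitute $\log(1+f(-x)) = -\lambda F(-x)$ into the $c_1$ term, producing the contribution
\begin{equation*}
-\frac{2^{3/4} N^{1/4}\lambda}{\pi}\int_{0}^{\infty}\sqrt{x}\, F(-x)\, dx.
\end{equation*}
Next, for the $c_2$ term I would compute
\begin{equation*}
G(x) = \frac{1}{2\pi}\int_{-\infty}^{\infty} e^{ixy}\log(1+f(-y^2))\, dy = -\frac{\lambda}{2\pi}\int_{-\infty}^{\infty} e^{ixy} F(-y^2)\, dy,
\end{equation*}
and then use that $F(-y^2)$ is even in $y$ to fold the integral to $[0,\infty)$ and replace $e^{ixy}$ by $2\cos(xy)$. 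Squaring gives
\begin{equation*}
G^2(x) = \frac{\lambda^2}{\pi^2}\left(\int_{0}^{\infty}\cos(xy) F(-y^2)\, dy\right)^{\!2},
\end{equation*}
so $\tfrac{1}{2}\int_{0}^{\infty} x\, G^2(x)\, dx$ is exactly the coefficient of $\lambda^2/2$ appearing in the stated variance (up to the factor $\tfrac{1}{\pi^2}$).

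Finally, I would match against the cumulant expansion $\log\mathbb{E}\bigl(e^{-\lambda\sum_j F(x_j-\sqrt{2N})}\bigr) = -\lambda\mu + \tfrac{\lambda^2}{2}\mathcal{V} + O(\lambda^3)$. The coefficient of $\lambda$ yields the claimed formula for $\mu$, and the coefficient of $\lambda^2/2$ yields the claimed formula for $\mathcal{V}$, with $o(1)$ errors inherited from the $o(1)$ in (\ref{bw}).

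The only delicate point is whether the $o(1)$ error in Lemma \ref{le4} is uniform enough in $\lambda$ to justify equating power-series coefficients in $\lambda$. I would handle this either by invoking standard cumulant-generating-function arguments (the first two derivatives at $\lambda=0$ give mean and variance) applied after differentiating (\ref{bw}) twice, or by restricting to $F$ in Schwartz class so that $f = e^{-\lambda F} - 1$ stays within the hypotheses of Lemma \ref{le4} for all $\lambda$ in a neighborhood of $0$. The ``hard'' part is purely technical and does not affect the leading-order formulas, which the computation above produces directly.
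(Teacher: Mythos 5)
Your proposal is correct and follows essentially the same route as the paper: starting from the Basor--Widom asymptotic (\ref{bw}), exploiting that $\log(1+f)=-\lambda F$ exactly for $f=\mathrm{e}^{-\lambda F}-1$, folding the Fourier integral in $G(x)$ to a cosine transform, and reading off the first two cumulants from the coefficients of $\lambda$ and $\lambda^{2}/2$. The paper additionally supplies an independent Coulomb fluid derivation of the same result, but its primary argument is exactly your computation.
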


At the end of this section, we use another method, the coulomb fluid approach, to prove the above theorem. We state an important lemma \cite{Basor2001}.
\begin{lemma}
As $N\rightarrow\infty$,
$$
\mathbb{E}\left({\rm e}^{-\lambda\:\sum_{j=1}^{N}F(x_j)}\right)\sim \exp(-S_{1}-S_{2}),
$$
where
$$
S_{1}=\frac{\lambda^2}{4\pi^2}\int_{a}^{b}\int_{a}^{b}\frac{F(x)F(y)}{\sqrt{(b-x)(x-a)}}\frac{\partial}{\partial y}\left(\frac{\sqrt{(b-y)(y-a)}}{x-y}\right)dxdy,
$$
$$
S_{2}=\lambda\int_{a}^{b}\sigma(x)F(x)dx,
$$
and $\sigma(x)$ is the equilibrium density of the eigenvalues (particles) supported on the interval $(a,b)$.
\end{lemma}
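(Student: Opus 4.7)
The plan is to establish this asymptotic by the standard Coulomb fluid/saddle-point argument that underlies the Basor--Widom result. First I would write the expectation as a ratio of partition functions,
\[
\mathbb{E}\!\left(e^{-\lambda\sum_j F(x_j)}\right) = \frac{Z_N(\lambda)}{Z_N(0)}, \qquad Z_N(\lambda) := \int e^{-\mathcal{E}_\lambda[\{x_j\}]}\prod_j dx_j,
\]
with the Boltzmann energy $\mathcal{E}_\lambda[\{x_j\}] = -2\sum_{j<k}\log|x_j-x_k| + \sum_j v(x_j) + \lambda\sum_j F(x_j)$ and $v(x) := -\log w(x)$. In the large-$N$ limit the empirical measure is replaced by a continuous density $\rho(x)$ with $\int_a^b \rho = N$, and the energy passes to the functional
\[
\mathcal{E}_\lambda[\rho] = -\iint \log|x-y|\,\rho(x)\rho(y)\,dx\,dy + \int v(x)\rho(x)\,dx + \lambda\int F(x)\rho(x)\,dx,
\]
so that $-\log\mathbb{E}(e^{-\lambda\sum F}) \sim \mathcal{E}_\lambda[\rho^*_\lambda] - \mathcal{E}_0[\sigma]$, where $\rho^*_\lambda$ is the constrained minimiser and $\sigma := \rho^*_0$ is the equilibrium density of the unperturbed ensemble, supported on $(a,b)$.

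Next I would expand in $\lambda$. The unperturbed Euler--Lagrange equation is $-2\int\log|x-y|\sigma(y)\,dy + v(x) = \mu_0$ on $(a,b)$. Writing $\rho^*_\lambda = \sigma + \lambda\,\delta\sigma + O(\lambda^2)$ with $\int\delta\sigma = 0$ and exploiting the self-adjointness of the logarithmic kernel, a Taylor expansion of the functional yields
\[
\mathcal{E}_\lambda[\rho^*_\lambda] - \mathcal{E}_0[\sigma] = \lambda\int F(x)\sigma(x)\,dx + \tfrac{\lambda^2}{2}\int F(x)\,\delta\sigma(x)\,dx + O(\lambda^3),
\]
so the linear piece immediately recovers $S_2$. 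The first-order variation of the Euler--Lagrange equation reads $-2\int\log|x-y|\,\delta\sigma(y)\,dy + F(x) = \delta\mu$; differentiating in $x$ converts this into the finite Hilbert transform (airfoil) equation
\[
\mathrm{P.V.}\int_a^b \frac{\delta\sigma(y)}{x-y}\,dy = \tfrac{1}{2}F'(x).
\]

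Applying Tricomi's inversion formula with the side condition $\int_a^b \delta\sigma = 0$ gives
\[
\delta\sigma(x) = -\frac{1}{2\pi^2\sqrt{(b-x)(x-a)}}\,\mathrm{P.V.}\int_a^b \frac{\sqrt{(b-y)(y-a)}}{x-y}\,F'(y)\,dy.
\]
Substituting into the $\lambda^2$ term and integrating by parts in $y$ (the boundary contributions vanish because $\sqrt{(b-y)(y-a)}$ does) converts $F'(y)\,dy$ into $-F(y)\,\frac{\partial}{\partial y}\!\left(\frac{\sqrt{(b-y)(y-a)}}{x-y}\right)dy$, and produces precisely the double integral defining $S_1$.

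The main obstacle is the justification of the leading-order saddle-point itself: one must verify that Gaussian fluctuations of the density around $\rho^*_\lambda$ contribute only subleading corrections to $\log\mathbb{E}(e^{-\lambda\sum F})$, and that the endpoints $a,b$ of the equilibrium support do not shift at linear order in $\lambda$, so that $\delta\sigma$ may legitimately be sought with the same support as $\sigma$. Both facts are standard for a soft-edge regular potential whose equilibrium density vanishes as a square-root at the endpoints, and can be made rigorous through large-deviation estimates for $\beta$-ensembles; by contrast the Tricomi inversion and the subsequent integration by parts are purely mechanical.
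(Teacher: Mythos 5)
The paper does not actually prove this lemma: it is imported verbatim from \cite{Basor2001}, where it is derived by precisely the linear-response/Coulomb-fluid argument you give --- expand the energy functional to second order in $\lambda$, use the unperturbed Euler--Lagrange equation together with $\int\delta\sigma=0$ to reduce the linear term to $\lambda\int F\sigma$ and the quadratic term to $\tfrac{\lambda^2}{2}\int F\,\delta\sigma$, then invert the finite Hilbert transform by Tricomi's formula and integrate by parts to obtain the stated double integral (whose inner integral must be read as a principal value). Your computation is correct, constants included, and your closing paragraph correctly locates the only genuine gaps (non-shifting of the support endpoints at order $\lambda$ and control of fluctuations about the saddle); the one remark worth adding is that for $\beta=2$ the Gaussian-fluctuation determinant contributes no $\lambda$-dependent correction at orders $\lambda$ and $\lambda^{2}$, which is why the lemma can be stated with only $S_1$ and $S_2$ in the exponent.
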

For the Gaussian unitary ensemble, it is known that \cite{Basor2001}
$$
\sigma(x)=\frac{\sqrt{b^2-x^2}}{\pi},\;\; b=-a=\sqrt{2N}.
$$
In our case, we replace $F(x)$ by $F\left(x-\sqrt{2N}\right)$. We have as $N\rightarrow\infty$,
$$
\mathbb{E}\left({\rm e}^{-\lambda\:\sum_{j=1}^{N}F\left(x_j-\sqrt{2N}\right)}\right)\sim \exp(-S_{1}-S_{2}),
$$
where
$$
S_{1}=\frac{\lambda^2}{4\pi^2}\int_{-\sqrt{2N}}^{\sqrt{2N}}\int_{-\sqrt{2N}}^{\sqrt{2N}}\frac{F(x-\sqrt{2N})F(y-\sqrt{2N})}{\sqrt{2N-x^2}}
\frac{\partial}{\partial y}\left(\frac{\sqrt{2N-y^2}}{x-y}\right)dxdy,
$$
$$
S_{2}=\frac{\lambda}{\pi}\int_{-\sqrt{2N}}^{\sqrt{2N}}\sqrt{2N-x^2}F\left(x-\sqrt{2N}\right)dx.
$$
Firstly, we compute $S_{1}$.
Let $x=\sqrt{2N}-u,\; y=\sqrt{2N}-v$,
$$
S_{1}
=-\frac{\lambda^2}{4\pi^2}\int_{0}^{2\sqrt{2N}}\int_{0}^{2\sqrt{2N}}\frac{F(-u)F(-v)}{\sqrt{u\left(2\sqrt{2N}-u\right)}}
\frac{\partial}{\partial v}\left(\frac{\sqrt{v\left(2\sqrt{2N}-v\right)}}{v-u}\right)dudv.
$$
As $N\rightarrow\infty$,
$$
S_{1}
\sim-\frac{\lambda^2}{4\pi^2}\int_{0}^{\infty}\int_{0}^{\infty}\frac{F(-u)F(-v)}{\sqrt{u}}
\frac{\partial}{\partial v}\left(\frac{\sqrt{v}}{v-u}\right)dudv.
$$
After the change of variables $u=s^2, v=t^2$, we find
\bea
S_{1}&\sim&\frac{\lambda^2}{2\pi^2}\int_{0}^{\infty}\int_{0}^{\infty}F(-s^2)F(-t^2)\frac{s^2+t^2}{(s^2-t^2)^2}dsdt\nonumber\\
&=&\frac{\lambda^2}{8\pi^2}\int_{-\infty}^{\infty}\int_{-\infty}^{\infty}F(-s^2)F(-t^2)\frac{s^2+t^2}{(s^2-t^2)^2}dsdt\nonumber\\
&=&\frac{\lambda^2}{8\pi^2}\int_{-\infty}^{\infty}\int_{-\infty}^{\infty}F(-s^2)F(-t^2)\frac{s^2-2st+t^2}{(s^2-t^2)^2}dsdt\nonumber\\
&=&\frac{\lambda^2}{8\pi^2}\int_{-\infty}^{\infty}\int_{-\infty}^{\infty}F(-s^2)F(-t^2)\frac{1}{(s+t)^2}dsdt.\nonumber
\eea
Noting that
$$
\frac{1}{(s+t)^2}=-\frac{1}{2}\int_{-\infty}^{\infty}|x|\exp(-ix(s+t))dx.
$$
We have
\bea
S_{1}
&\sim&-\frac{\lambda^2}{16\pi^2}\int_{-\infty}^{\infty}\int_{-\infty}^{\infty}\int_{-\infty}^{\infty}F(-s^2)F(-t^2)|x|\exp(-ix(s+t))dxdsdt\nonumber\\
&=&-\frac{\lambda^2}{16\pi^2}\int_{-\infty}^{\infty}|x|\left[\int_{-\infty}^{\infty}F(-s^2)\exp(-ixs)ds\right]^2dx\nonumber\\
&=&-\frac{\lambda^2}{2\pi^2}\int_{0}^{\infty}x\left[\int_{0}^{\infty}F(-s^2)\cos(xs)ds\right]^2dx.\nonumber
\eea
Now we compute $S_{2}$. Let $y=\sqrt{2N}-x$,
$$
S_{2}=\frac{\lambda}{\pi}\int_{0}^{2\sqrt{2N}}\sqrt{y\left(2\sqrt{2N}-y\right)}F(-y)dy.
$$
As $N\rightarrow\infty$,
\bea
S_{2}&\sim&\frac{\sqrt{2\sqrt{2N}}\lambda}{\pi}\int_{0}^{\infty}\sqrt{y}F(-y)dy\nonumber\\
&=&\frac{2^{\frac{3}{4}}N^{\frac{1}{4}}\lambda}{\pi}\int_{0}^{\infty}\sqrt{y}F(-y)dy.\nonumber
\eea
Theorem \ref{mv} then follows.

\section{Conclusion}
This paper studies the large $N$ behavior of the MGF of the scaled linear statistics in Gaussian ensembles and Laguerre ensembles, from which we obtain the mean and variance of the corresponding linear statistics. We find that there is an equivalence between the mean and variance of suitably scaled linear statistics in Gaussian and Laguerre ensembles. In addition, we use the results of \cite{Basor1999} and \cite{Basor2001} to consider another type of linear statistics in GUE and also obtain the mean and variance of the corresponding linear statistics. For the GSE and GOE, we will deal with the corresponding type of linear statistics in the future.

\section*{Acknowledgments}
Chao Min was supported by the Scientific Research Funds of Huaqiao University under grant number 600005-Z17Y0054.
Yang Chen was supported by the Macau Science and Technology Development Fund under grant numbers FDCT 130/2014/A3, FDCT 023/2017/A1 and by the University of Macau under grant numbers MYRG 2014-00011-FST, MYRG 2014-00004-FST.

\end{document}